\title{On Type II noncommutative geometry and the JLO character}
\author{Alan Lai
\thanks{Email: alan@math.toronto.edu}
\\University of Toronto
}
\begin{document}

\maketitle

\newcommand{\cch}
{\widetilde{\mathtt{Ch}}} 
\newcommand{\Dd}{\mathcal{D}}
\newcommand{\sDd}{\slash\hspace{-0.26cm}\mathcal{D}}
\newcommand{\dDd}{\Dd_I}
\newcommand{\CCc}{\overline{\Cc}^{\Ggamma}}
\newcommand{\TR}{\operatorname{Tr}}
\newcommand{\CCHERN}{\operatorname{ch}}
\newcommand{\KKEI}{\operatorname{K}}
\newcommand{\g}{\mathfrak{g}} 
\newcommand{\naturalnumber}{\mathbb{N}}
\newcommand{\CAS}{\operatorname{Cas}}
\newcommand{\MAP}{\operatorname{Map}}
\newcommand{\GAU}{\operatorname{Gau}}
\newcommand{\MATRIX}{\operatorname{Mat_\mathbb{C}(N)}}
\newcommand{\END}{\operatorname{End}}
\newcommand{\Uu}{\mathcal{U}}
\newcommand{\Ww}{\mathcal{W}}
\newcommand{\Aa}{\mathcal{A}}
\newcommand{\Bb}{\mathcal{B}}
\newcommand{\Hh}{\mathcal{H}}
\newcommand{\Ss}{\mathcal{S}}
\newcommand{\Ff}{\mathfrak{F}}
\newcommand{\Nn}{\mathcal{N}}
\newcommand{\Pp}{\mathcal{P}}
\newcommand{\Tt}{\mathcal{T}}
\newcommand{\Ll}{\mathcal{L}_{\Nn}}
\newcommand{\Kk}{\mathcal{K}_{\Nn}}
\newcommand{\Mm}{\mathfrak{m}}
\newcommand{\SUPP}{\operatorname{supp}}
\newcommand{\HE}{\operatorname{HE}}
\newcommand{\KY}{\operatorname{K}}
\newcommand{\ED}{\mathfrak{e}}
\newcommand{\VE}{\mathfrak{v}}
\newcommand{\SO}{\operatorname{s}}
\newcommand{\RA}{\operatorname{r}}
\newcommand{\ID}{\operatorname{id}}
\newcommand{\HOM}{\operatorname{Hom}}
\newcommand{\HOL}{\operatorname{Hol}}
\newcommand{\Zz}{\mathbb{Z}}
\newcommand{\Cc}{\mathfrak{C}}
\newcommand{\CL}{\mathbb{C} \operatorname{ l}}
\newcommand{\Hhg}{\mathcal{G}}
\newcommand{\Gg}{\mathcal{G}}
\newcommand{\Ggamma}{\mathbf{\Gamma}}
\newcommand{\KKi}{\mathbf{\chi}}
\newcommand{\Ch}{\mathtt{Ch}_{^{_{\operatorname{JLO}}}}}
\newcommand{\ch}{\mathtt{ch}}
\newcommand{\hCh}{\mathtt{Ch}_{^{_{\operatorname{JLO}}}}}
\newcommand{\Ilim}{\varinjlim}
\newcommand{\Plim}{\varprojlim}
\newcommand{\LG}{\mathcal{LG}}
\newcommand{\ZO}{\diamond}
\newcommand{\KER}{\operatorname{ker}}
\newcommand{\DIFF}{\operatorname{diff}}
\newcommand{\ENDO}{\operatorname{End}}
\newcommand{\DOM}{\operatorname{Dom}}
\newcommand{\KCYCLE}{(\rho,\Nn,\Dd)}
\newcommand{\FREDMOD}{(\rho,\Nn,F)}
\newcommand{\IND}{\mathtt{Ind}_\tau}
\newtheorem{definition}{Definition}[section]
\newtheorem{example}{Example}[section]
\newtheorem{theorem}{Theorem}[section]
\newtheorem{lemma}[theorem]{Lemma}
\newtheorem{corollary}[theorem]{Corollary}
\newtheorem{proposition}[theorem]{Proposition}
\newtheorem{remark}[definition]{Remark}

\setcounter{section}{-1}

\begin{abstract}
The Jaffe-Lesniewski-Osterwalder (JLO) character \cite{jlo} 
is a homomorphism from $\KKEI$-homology to entire cyclic cohomology.
This paper extends 
the domain of the JLO character 
to include Type II noncommutative geometry,
the geometry represented by
  Breuer-Fredholm modules; 
and shows that  the JLO character
defines the same 
cohomology class as the Connes-Chern character
 \cite{type2index}
 in entire cyclic cohomolgoy.
\end{abstract}

\tableofcontents

\section{Introduction}
In non-commutative geometry,
the guiding principle 
is that the topology of spaces 
is encoded in properties of their algebras
of continuous functions.
A theorem of Gelfand-Naimark \cite{invitetoncg} states that 
any commutative unital $C^*$-algebra  is of the form $C(X)$
for some compact Hausdorff space $X$.
 Therefore, the category of $C^*$-algebras (or even more generally
Banach $*$-algebras) is seen as an extension of
the category of compact Hausdorff topological spaces,
and a general $C^*$-algebra 
is sometimes referred to as a non-commutative topological space.
The geometric features on a $C^*$-algebra $A$ are incorporated 
by the concept of an
an unbounded Fredholm module $(\rho,B(\Hh),\Dd)$ over $A$,
where $\rho$ is a continuous representation of $A$
onto the Hilbert space $\Hh$, and $\Dd$ is an unbounded
Fredholm operator on $\Hh$ that satisfies certain axioms.
As the prototypical example, 
let $A$ be the algebra of continuous functions 
on a closed Riemannian
manifold, $\Hh$ the square integrable sections
 of a spinor bundle with its
natural action of $A$, and $\Dd$ the associated Dirac operator.
Geometric features on the   manifold
such as geodesics, dimension, integrations,
and  differential forms etc can be 
retrieved algebraically in terms of $A$, $B(\Hh)$, and $\Dd$ \cite{greenbook}.
Connes gives a set of five axioms
characterizing
 the Fredholm modules arising in this way \cite{reconstruction}.
 Taking $A$ to be
non-commutative thus leads to a notion of a non-commutative manifold.
This theory is summarized in Connes' famous book \cite{redbook},
further details and newer developments are described in 
\cite{greenbook} and \cite{invitetoncg}.

Each Fredholm module assigns
an integer, the Fredholm index, to
a given element in the $\KKEI$-theory of $A$.
The Fredholm index provides a $\mathbb{Z}$-valued pairing
between the $\KKEI$-homology of $A$ and the $\KKEI$-theory of $A$.
 In the commutative setting, the index can be viewed as the
index of the Dirac operator $\Dd$, twisted by a vector bundle.

Suppose the unbounded
Fredholm module $(\rho,B(\Hh),\Dd)$ is finitely summable,
 a condition that models finite dimensionality
according to Connes' axioms,
Jaffe-Lesniewski-Osterwalder \cite{jlo} 
defined a cocycle $\Ch^\bullet$ in 
the entire cyclic cohomology
 $\HE^\bullet(A)$, now known as the JLO character.
Together with 
the $\KKEI$-theory character $\CCHERN_\bullet: 
\KKEI_\bullet(A) \to \HE_\bullet(A)$,
they
 intertwine the 
$\KKEI$-theoretical 
pairing given by 
the Fredholm index with
the cohomological pairing between $\HE^\bullet(A)$
and $\HE_\bullet(A)$
\cite{getzlerodd,getzlereven}.
Such a result was originally established in a more general setting
for
weakly $\theta$-summable
 Fredholm modules, where
weak $\theta$-summability can be thought of a suitable notion of
infinite-dimensionality.
 Consequently, the JLO character provides a formula for the
Fredholm index in terms of entire cyclic (co)homology
for infinite dimensional non-commutative manifolds, which
was the original motivation of JLO's work \cite{jlo}.
 Furthermore,
the formula reduces to the index formula of
Atiyah-Singer in the commutative setting \cite{heatkernel,boundary}.

%

The operator $\Dd$ of a Fredholm module
 plays the role of a Dirac operator,
and is typically unbounded.
 However, there is a
canonical way of passing from an unbounded Fredholm module
to a bounded one $(\rho,B(\Hh),F)$,
 essentially by taking bounded functions of $\Dd$,
 and the latter are often easier to work with in practice.

When the
 bounded Fredholm module
$(\rho,B(\Hh),F)$ is finitely summable,
there is  a character $\mathtt{ch}^n$  due to
Connes \cite{ncdg}, which again is a cocycle in $\HE^\bullet(A)$,
and 
 $\mathtt{ch}^n$
 intertwines the 
 $\KKEI$-theoretical pairing the same way as
the JLO character \cite{ncdg,chernreduct}.
When $(\rho,B(\Hh),F)$ is the associated bounded module of 
a finitely summable unbounded Fredholm module
$(\rho,B(\Hh),\Dd)$,
Connes-Moscovici proved that  in fact
the cocycle $\mathtt{ch}^n(F)$ of 
$(\rho,B(\Hh),F)$
defines the same cohomology class as
 the cocycle $\Ch^\bullet(\Dd)$ of 
$(\rho,B(\Hh),\Dd)$ in $\HE^\bullet(A)$ \cite{chernreduct}.
 
Type II non-commutative geometry 
replaces the algebra $B(\Hh)$ with a (possibly) Type II von Neumann algebra
$\Nn \subset B(\Hh)$, using Breuer's Fredholm theory relative to
the von Neumann algebra $\Nn$ \cite{breuer1,breuer2}.
A Type II non-commutative geometry 
on the algebra $A$ 
is  given by an unbounded Breuer-Fredholm module 
$(\rho,\Nn,\Dd)$ over $A$,
where
 $\Dd$ is a Breuer-Fredholm operator affiliated with $\Nn$
that satisfies certain axioms.
Examples of unbounded Breuer-Fredholm modules 
arise from foliations or geometry with degeneracies.
A number of examples can be found in \cite{type2index}.

Parallel to the Type I setting,
the Breuer-Fredholm theory provides an
index pairing between
 Breuer-Fredholm modules and $\KKEI$-theory
given by the Breuer-Fredholm index \cite{breuer1,breuer2,cprs2}.
As a characteristic of the Type II von Neumann algebra $\Nn$,
the Breuer-Fredholm index  now
 takes value in $\mathbb{R}$
as opposed to $\mathbb{Z}$ as in the Type I case.
In this paper, we will develop the even JLO character for Type II
non-commutative geometry.
 For completeness, we also
include the odd case, which was developed by Carey-Phillips in \cite{cp}.
 Similar to the Type I case,
 we show how to pass unbounded Breuer-Fredholm modules
 to bounded ones. 
Extending the argument for the Type I case,
 we show that this correspondence takes the class of
the (both even and odd) JLO character to that of the 
Connes  character in $\HE^\bullet(A)$, as defined in Type II case by 
Benameur-Fack \cite{type2index}.

The first section starts with background material
 on Breuer-Fredholm theory
 and the index pairing between $\KKEI$-homology and $\KKEI$-theory.
Then following \cite{type2index} we define the Connes-Chern character for $\KKEI$-homology and
the Chern character 
for $\KKEI$-theory, and show that
for  $p$-summable Breuer-Fredholm modules, these two characters intertwine the
index pairing with the pairing in entire cyclic (co)homology.
In Section 2, the JLO character for  $\theta$-summable unbounded Breuer-Fredholm modules
is defined. We study its homotopy invariance as an entire cyclic cohomology class 
by following along the lines of Getzler and Szenes \cite{getzlereven}
and show that it preserves the index pairing.
Section 3 connects the previous two sections by showing that a $p$-summable unbounded Breuer-Fredholm
module canonically gives rise to a $p$-summable Breuer-Fredholm module. We then proceed using
techniques from Connes and   Moscovici  \cite{chernreduct}
to show that
the JLO character for the $p$-summable unbounded Breuer-Fredholm module and the
Connes-Chern character for the  $p$-summable Breuer-Fredholm module define the same
entire cyclic cohomology class.
In the Appendix we recall some definitions and inequalities
needed for the discussion in our paper.

\section{Breuer-Fredholm modules and Connes-Chern character }

The section starts by stating the
 definition of Breuer-Fredholm modules from \cite{oddjloT2}.
With the notion of $(e,f)$-Fredholm from \cite{cprs3}, 
we proceed to develop a suitable Fredholm theory by following \cite{type2index}.
Entire cyclic (co)homology will be introduced, followed by a discussion of 
  the Chern character
\cite{getzlereven,getzlerodd} on $\KKEI$-theory and Connes-Chern character \cite{type2index} 
on $\KKEI$-homology.
The Section ends by showing that the characters intertwine the $\KKEI$-theoretical pairing
 given by the Fredholm index,
with the cohomological pairing.


\subsection{Breuer-Fredholm modules}

For a given semi-finite von Neumann algebra
$\Nn\subset B(\Hh)$ of bounded operators on a Hilbert space $\Hh$, with a faithful semi-finite normal trace
 $\tau$,
 we denote by
$\Kk$ the ideal of $\tau$-compact operators in $\Nn$. 
A $\tau$-compact operator is a (densely defined closed) operator affiliated with $\Nn$, such that its
generalized singular number $\mu_x(T)$ 
with respect to $\tau$ converges to $0$.
 The definitions and properties of 
$\Kk$ and $\mu_x(T)$
can be found
in the Appendix.

\begin{definition}
 \label{fredholmmodule}
An \textbf{\textit{odd} Breuer-Fredholm module} over a unital 
Banach $*$-algebra $A$ is a triple $\FREDMOD$ 
for which $\Nn$ is a (separable) semi-finite von Neumann algebra  with faithful semi-finite normal trace
 $\tau$,  
$\rho:A\rightarrow \Nn$ a continuous $*$-representation, 
 and $F\in \Nn$ an operator such that $F^2=1$ and $[F , \rho(a) ] \in \Kk$
for all $a\in A$.

If $\FREDMOD$ is equipped with a $\mathbb{Z}_2$ grading $\chi\in\Nn$ such that
all $\rho(a)$ are even and $F$ is odd,
then we call $\FREDMOD$ an \textbf{\textit{even} Breuer-Fredholm module}.

\end{definition}
If $\Nn=B(\Hh)$ and $\tau$ is the standard operator trace, we drop the prefix \textit{Breuer}.


As Fredholm modules are representatives of $\KKEI$-homology
classes in Kasparov's sense \cite{khomology},
 they are also referred to as
$\KKEI$-cycles.

Technically speaking, Breuer-Fredholm modules do \emph{not}
define $\KKEI$-homology classes in the usual sense, 
however one can still consider  its classes
given by the equivalence relations in $\KKEI$-homology.
 i.e.,
up to degenerate modules,
two such modules are equivalent if their Fredholm operators
are connected by a norm continuous homotopy of Fredholm operators
(in $\Nn$)   (see for example \cite{khomology} for a precise definition).
 We think of Breuer-Fredholm modules
as representatives of elements in some semi-finite or
Type II $\KKEI$-homology as \cite{cprs1,cprs2} did.
Whenever we write $[\FREDMOD]\in \KKEI^\bullet(A)$, we implicitly
mean that the $\KKEI$-homology is in the semi-finite sense.

Recall that a densely defined closed operator $T$ with 
polar decomposition
$T=U|T|$ is said to be affiliated with $\Nn$ if $U\in \Nn$ and also the spectral projections of $|T|$ lie 
in $\Nn$ (see Appendix). 
The only unbounded operators we are dealing with here are densely defined closed operators, 
hence the properties of an unbounded operator being densely defined and closed 
 are automatically assumed throughout this paper.
In particular,  when we speak of an operator $T$ affiliated with $\Nn$, we demand that
$T$ is densely defined and closed.
\begin{definition}
\label{fredholmoperator}
Given 
two projections
$e,f\in \Nn$, a (possibly unbounded) operator $T$
 affiliated with $\Nn$ is called
\textbf{$(e,f)$-Fredholm} if there is an operator
 $S\in\Nn$, 
 such that
\[e-e  S f  T  e \in {\mathcal{K}_{e\Nn e}} \hspace{1cm}
\mbox{  and  } \hspace{1cm}
f-f  T  e  S f \in {\mathcal{K}_{f\Nn f}} \mbox{ ,}\]
where $\mathcal{K}_{e\Nn e}$ denotes the set of $\tau$-compact operators in $e\Nn e$, likewise for 
$\mathcal{K}_{f\Nn f}$.
The operator $S$ is called an \textbf{$(e,f)$-parametrix} for $T$.
\end{definition}



\begin{example}
\label{ex:parametrixexamples}
\mbox{ }
\begin{itemize}
 \item 
Let  $\FREDMOD$ be a Breuer-Fredholm module.
If $u\in \Nn$ is a unitary, then $u$ is $(\frac{F+1}{2},\frac{F+1}{2})$-Fredholm with  $(\frac{F+1}{2},\frac{F+1}{2})$-parametrix $u^{-1}$. 
\item
Suppose that $\FREDMOD$ comes equipped with a $\mathbb{Z}_2$ 
grading $\chi$ and that the
  projection
$p\in\Nn$ is even with respect to $\chi$, 
then 
$F$ is $(p^+ ,p^-)$-Fredholm with $(p^+ ,p^-)$-parametrix $F$ again.
\end{itemize}\end{example}

The following Proposition can be found in \cite{type2index}.
We adopted it in the $(e,f)$-parametrix case.

\begin{proposition}
Let $T$ be a $(e,f)$-Fredholm operator, and $P_{\KER T}$ and $P_{\KER(T^*)}$ be the projections
onto the kernels of $T$ and  $T^*$ respectively.
Then $eP_{ \KER T}$ and $P_{\KER (T^*)}f$ have finite trace with respect to $\tau$.
\end{proposition}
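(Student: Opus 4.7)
The aim is to translate the $(e,f)$-Fredholm condition into the statement that $eP_{\KER T} \in \Nn$ has a $\tau$-finite range projection, and then to conclude $\tau$-trace class membership via polar decomposition.

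First, I would observe that the parametrix relation $e - eSfTe \in \mathcal{K}_{e\Nn e}$ is really the statement that the closure $A := \overline{eSfTe}$ extends to a bounded operator in $e\Nn e$ with $e - A \in \mathcal{K}_{e\Nn e}$. In particular $A$ is an identity-minus-compact element of the semifinite algebra $(e\Nn e, \tau|_{e\Nn e})$, hence a semifinite Fredholm operator; its kernel and cokernel projections $P_{\KER A}$ and $P_{\KER A^*}$ in $e\Nn e$ are then $\tau$-finite by the semifinite Atkinson-type theorem underlying the paper's Breuer--Fredholm discussion.

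Second, I would show that the range of the bounded operator $eP_{\KER T}$ sits (modulo a $\tau$-finite-trace error) inside $\KER A$. This is the technical heart: combining the algebraic identity $e = A + (e-A)$ with a careful analysis of how $T$ and $e$ interact on $\KER T$, one must cope with the unbounded nature of $T$. The natural tool is the family of spectral regularizers $R_n := (I + nT^*T)^{-1}$, which commute with $T$, restrict to the identity on $\KER T$, and satisfy $\|TR_n\| \to 0$ as $n \to \infty$; these permit one to transfer dense-domain identities into statements about the bounded closure $A$, and to compare the range of $eP_{\KER T}$ against $P_{\KER A}$ plus a $\tau$-finite compact correction coming from $e - A$.

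Third, once the range projection $P_1 \in \Nn$ of $eP_{\KER T}$ is known to be $\tau$-finite, polar-decompose $eP_{\KER T} = V\,|eP_{\KER T}|$ with partial isometry $V \in \Nn$ satisfying $VV^* = P_1$ and $V^*V = P_0$, the support projection of $|eP_{\KER T}|$. Murray--von Neumann equivalence yields $\tau(P_0) = \tau(P_1) < \infty$, and since $|eP_{\KER T}| \le \|eP_{\KER T}\|\,P_0 \le P_0$, we get $\tau(|eP_{\KER T}|) \le \tau(P_0) < \infty$, proving that $eP_{\KER T}$ has finite trace with respect to $\tau$. For the companion statement about $P_{\KER(T^*)}\,f$, apply the same argument to $T^*$, which by adjoining both parametrix relations is $(f,e)$-Fredholm with $(f,e)$-parametrix $S^*$.

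The main obstacle is the second step: establishing $\tau$-finiteness of the range projection of $eP_{\KER T}$ requires careful manipulation of the unbounded closure $A$ together with $P_{\KER T}$, since for $y \in \KER T$ the vector $ey$ need not lie in the domain of $T$, so the formal substitution $A(ey) = eSfT(ey)$ is not directly available; this is precisely where the spectral regularization $R_n$ is needed to bridge the dense-domain algebraic content of the parametrix identity and the projection-level conclusion.
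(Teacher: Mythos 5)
Your plan is considerably more elaborate than the paper's argument, and the extra machinery is driven by a misreading of the notation. In the paper, the expression $eP_{\KER T}$ is not the literal operator product of two possibly non-commuting projections; it denotes the projection in $e\Nn e$ onto the subspace $\KER T\cap e(\Hh)$ (the paper says this explicitly: ``$eP_{\KER T}$ and $P_{\KER(T^*)}f$ projections onto $\KER T\cap e(\Hh)$ $\ldots$ and $\KER T^*\cap f(\Hh)$''). On that subspace one has both $e y = y$ and $y\in\DOM T$ with $Ty=0$, so there is no domain obstruction to plugging into the parametrix identity: $(e-eSfTe)\,eP_{\KER T} = eP_{\KER T}$ holds outright. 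Thus $eP_{\KER T}$ is a product of the $\tau$-compact operator $e-eSfTe$ with a bounded operator, hence $\tau$-compact by the ideal property of $\mathcal{K}_{e\Nn e}$; and a $\tau$-compact projection automatically has finite trace because its generalized singular numbers take only values $0$ and $1$. That is the entire proof.

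By contrast your Step 2 --- which you yourself call ``the technical heart'' --- is precisely where you propose the spectral regularizers $R_n=(I+nT^*T)^{-1}$ to ``bridge'' a domain issue and then compare the range projection of $eP_{\KER T}$ to $P_{\KER A}$ ``plus a $\tau$-finite compact correction''; but you never carry this out, and the phrase ``$\tau$-finite compact correction coming from $e-A$'' is not well formed ($e-A$ being $\tau$-compact does not mean it contributes a $\tau$-finite trace). So as written the proposal has a genuine gap at its central step. The detours through the semifinite Atkinson theorem and through polar decomposition with Murray--von Neumann equivalence are also unnecessary: once one knows the relevant object is a $\tau$-compact projection, finiteness of its trace is immediate, with no need to extract a range projection and transport it via a partial isometry. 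Your observation that $T^*$ is $(f,e)$-Fredholm with parametrix $S^*$ (by taking adjoints of the two parametrix relations) is correct and matches the paper's ``likewise'' step.
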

\begin{proof}
Let $S$ be a $(e,f)$-parametrix of $T$ as in Definition~\ref{fredholmoperator}.
 We have  $(e-e  S f  T  e )P_{\KER T} = eP_{\KER T}$ and
$P_{\KER(T^*)}(f-f  T  e  S f ) =  P_{\KER(T^*)}f$, and 
 $eP_{\KER T}$ and $ P_{\KER(T^*)}f$ projections onto 
$\KER (T)\cap e(\Hh) = \KER (\left. fTe \right \lvert _{e(\Hh)} ) $ 
and $\KER (T^*)\cap f(\Hh)=\KER (\left. eT^* f \right \lvert _{f(\Hh)} )$ respectively.
By the ideal property of $\mathcal{K} _{e\Nn e}$, $eP_{\KER T}$ is a  $\tau$-compact projection.
As projections only have eigenvalue $\{0,1\}$,
$\tau$-compactness forces  the singular values of projections to have support in a bounded region, 
hence $\tau$ of any $\tau$-compact projection must be finite, and $\tau(eP_{\KER T})<\infty$.
Likewise,  $ \tau(P_{\KER(T^*)}f)<\infty$.
\end{proof}

\begin{definition}
The $(e,f)$-index $\IND(fTe)$ of an $(e,f)$-Fredholm operator $T$ is defined to be
\[
\IND(fTe):=\tau(eP_{\KER T}) - \tau(P_{\KER(T^*)}f)\mbox{ ,}
\]
where $P_{\KER T}$ and $P_{\KER (T^*)}$ are the projections onto the kernel of $T$ and $T^*$ respectively.
\end{definition}

Given an even Breuer-Fredholm module $\FREDMOD$ over $A$,
and a projection $p\in A$.
It follows from Example~\ref{ex:parametrixexamples}
that
$F$ is a $(\rho(p)^+,\rho(p)^-)$-Fredholm operator. Thus it has a well-defined $(\rho(p)^+,\rho(p)^-)$-index,
given by $\IND(\rho(p)^- F \rho(p)^+)$.

For a given odd  Breuer-Fredholm module $\FREDMOD$,
and a unitary $u\in A$, then $\rho(u)$ is a $(Q,Q)$-Fredholm operator, where $Q=\frac{F+1}{2}$.
Thus it has a well-defined $(Q,Q)$-index, given by $\IND(Q\rho(u) Q)$.

Since the function $\IND$ is locally constant \cite{cprs3},
the $(\rho(p)^+,\rho(p)^-)$-index descends to a pairing between the $\KKEI$-homology class 
$[\FREDMOD] \in \KKEI^0(A)$ and the $\KKEI$-theory class $[p]\in \KKEI_0(A)$.
Likewise,  the $(Q,Q)$-index descends to a paring between the classes
 $[\FREDMOD]\in \KKEI^1(A)$ and
 $[u]\in \KKEI_1(A)$.
 We extend the pairing to a pairing between  $\KKEI$-homology and $\KKEI$-theory of $A$ with the following
definition.

To simplify our notation, whenever we mention an element $a\in A$,
 we think of it as an operator $\rho(a)\in \Nn$ represented on $\Hh$, and will stop writing 
$\rho$. Similarly, when we have $a\in M_N(A)$, we think of it as an operator in $M_N(\Nn)$
represented on $\Hh^N=\Hh \otimes \mathbb{C}^N$ with the obvious
representation extended from $\rho$.

%
\begin{definition}[\cite{oddjloT2,cprs1,cprs2}]
\mbox{ }
\begin{enumerate}
\item 
Let $\FREDMOD$ be an even Breuer-Fredholm module  over $A$, representing the
$\KKEI$-homology class $[\FREDMOD]\in \KY ^0(A) $, and 
$p\in M_N(A)$ be a projection , representing the 
$\KKEI$-theory class $[p]\in \KY_0 (A)$.
We define the  \textbf{\textit{even} index pairing} to be:
\begin{eqnarray*}
\langle  [\FREDMOD], [p] \rangle &:=&
\IND(p^-(F\otimes 1_N) p^+)  \mbox{ .} \\
  \end{eqnarray*}
where $p^- (F\otimes 1_N) p^+$ is an operator from
 $p^+\Hh^N  $ to $p^- \Hh^N $.

\item 
Let $\FREDMOD$ be an odd Breuer-Fredholm module  over $A$,
 representing the
$\KKEI$-homology class $[\FREDMOD]\in \KY^1(A)$,
 and $u\in M_N(A)$ be a unitary, representing the 
$\KKEI$-theory class $[u]\in \KY_1(A) $.
We define the  \textbf{\textit{odd} index pairing} to be:
\[\langle [(\rho,\Nn,F) ], [u] \rangle:=
  \IND \left( Q  u  Q \right)   \mbox{ ,}
 \]
where $Q=\frac{F\otimes 1_N + 1}{2}$ is a projection in $M_N(\Nn)$, and
$QuQ$ is an operator from $Q\Hh^N$ to $Q\Hh^N$.
\end{enumerate}
\end{definition}

\subsection{Entire cyclic (co)homology}

Our goal is to construct characters from $\KKEI$-homology to another cohomology theory that
 intertwine the above $\KKEI$-theoretical pairing with the cohomological pairing.
The target space of both the Connes-Chern character and the JLO
character to be introduced next section is the entire cyclic (co)homology.

Entire cyclic homology is not as well-known as its cohomology counterpart.
We adopt the bicomplex construction from \cite{getzlerodd}
and use the entire growth control given in \cite{meyer}.
Under this definition, the homology theory is precisely
(pre-)dual to the cohomology counterpart \cite{getzlereven}
 in the sense that their pairing
produces a finite value.

If $\Bb$ is a topological unital algebra over $\mathbb{C}$, define
\[
 C_n(\Bb):=\Bb\hat{\otimes} (\Bb/\mathbb{C})^{\hat{\otimes} n}\mbox{ ,}
\]
where $\hat{\otimes}$ denotes the projective tensor product.
Denote the element $a_0\otimes\cdots\otimes a_n$ of  $C_n(\Bb)$ by $(a_0,\ldots,a_n)_n$, when
the context is clear we will omit the subscript $n$.
The operators $b: C_n(\Bb) \rightarrow  C_{n-1}(\Bb)$ and $B: C_n(\Bb)\rightarrow  C_{n+1}(\Bb)$ are given in terms of
simple tensors
by the formulas
\begin{eqnarray*}
 b(a_0,\ldots,a_n)_n&:=&\sum^{n-1}_{j=0} (-1)^{j} (a_0,\ldots,a_{j}a_{j+1},\ldots,a_n) _{n-1}
+ (-1)^{n}(a_n a_0,a_1,\ldots,a_{n-1})_{n-1} \mbox{ ,}\\
B(a_0,\ldots,a_n)_n
&:=&\sum^n_{j=0}(-1)^{nj}(1,a_j,\ldots,a_n,a_0,\ldots,a_{j-1})_{n+1} \mbox{ .}
\end{eqnarray*}

Simple calculations show that $b^2=0$, $B^2=0$, and $Bb+bB=0$.
Therefore $(b+B)^2=0$ and we get the following bicomplex:
\[
    {
\xygraph{
!{<0cm,0cm>;<3cm,0cm>:<0cm,3cm>::}
!{(1,5) }*+{\vdots}="15"
!{(2,5) }*+{\vdots}="25"
!{(3,5) }*+{\vdots}="35"
!{(4,5) }*+{\vdots}="45"
!{(0,4) }*+{\cdots}="04"
!{(1,4) }*+{C_3(\Bb)}="14"
!{(2,4) }*+{C_{2}(\Bb)}="24"
!{(3,4) }*+{C_{1}(\Bb)}="34"
!{(4,4) }*+{C_{0}(\Bb)}="44"
"14":"15"^{B}
"24":"25"^{B}
"34":"35"^{B}
"44":"45"^{B}
"04":"14"^{b}
"14":"24"^{b}
"24":"34"^{b}
"34":"44"^{b}
!{(0,3) }*+{\cdots}="03"
!{(1,3) }*+{C_2(\Bb)}="13"
!{(2,3) }*+{C_{1}(\Bb)}="23"
!{(3,3) }*+{C_{0}(\Bb)}="33"
"13":"14"^{B}
"23":"24"^{B}
"33":"34"^{B}
"03":"13"^{b}
"13":"23"^{b}
"23":"33"^{b}
!{(0,2) }*+{\cdots}="02"
!{(1,2) }*+{C_1(\Bb)}="12"
!{(2,2) }*+{C_{0}(\Bb)}="22"
"12":"13"^{B}
"22":"23"^{B}
"02":"12"^{b}
"12":"22"^{b}
!{(0,1) }*+{\cdots}="01"
!{(1,1) }*+{C_0(\Bb)}="11"
"11":"12"^{B}
"01":"11"^{b}
!{(2,1) }*+{}="a"
!{(4,3) }*+{}="b"
"a":"b"_{(b+B)}
}
}\mbox{ .}
\]
The space $C_\bullet(\Bb):=\prod_{n=0}^\infty C_{n}(\Bb)$  has a natural
$\mathbb{Z}_2$ grading given by 
 $C_+(\Bb)=\prod_{k=0}^\infty C_{2k}(\Bb)$ and $C_-(\Bb)=\prod_{k=0}^\infty C_{2k+1}(\Bb)$.
We get a chain complex $\left(C_\bullet(\Bb), b+B \right)$ with the odd boundary map
$b+B$. However, the homology of this chain complex is trivial
\cite{meyer}.
In order to make it nontrivial, we need to control the growth of a chain as $n$ varies.
The following definition is taken from \cite{getzlerodd,meyer}.
\begin{definition}
Define the space of \textbf{entire chains}
 \[ C_\bullet^\omega(\Bb):=\left\{ A_\bullet \in C_\bullet(\Bb):
 \sup_n\left( \left\lVert A_{n} \right\rVert _\pi \frac{\lambda^{n}}{\Gamma(\frac{n}{2})}\right)
 < \infty \mbox{ for some } \lambda >0 \right\}\]
 where $\left\lVert \cdot \right\rVert _\pi$ is the projective tensor norm.
 $\left(C_\bullet^\omega(\Bb),b+B\right)$ forms a subcomplex of \newline $\left(C_\bullet(\Bb), b+B \right)$.
The homology defined by $\left(C_\bullet^\omega(\Bb),b+B\right)$ is the \textbf{entire cyclic homology} 
of $\Bb$, denoted 
$\HE_\bullet(\Bb)=\HE_+(\Bb)\oplus \HE_-(\Bb)$. 
$\HE_\bullet(\Bb)$ is equipped with the obvious 
group structure inherited from the addition on $C_n(\Bb)$.
\end{definition}
We set $C^n(\Bb):=\operatorname{Hom}(C_n(\Bb),\mathbb{C})$ and let $(b+B):
C^\bullet(\Bb)\rightarrow C^\bullet(\Bb)$ 
be the transpose of the odd boundary map $(b+B):C_\bullet(\Bb)\rightarrow C_\bullet(\Bb)$
where  $C^\bullet(\Bb):=\prod_{n=0}^\infty C^{n}(\Bb)$,
 then
 we get a similar diagram as above with the arrows reversed.
The space $C^\bullet(\Bb)$ 
has a natural $\mathbb{Z}_2$ grading given by 
 $C^+(\Bb)=\prod_{k=0}^\infty C^{2k}(\Bb)$ and $C^-(\Bb)=\prod_{k=0}^\infty C^{2k+1}(\Bb)$.

 $\left(C^\bullet(\Bb), b+B \right)$ forms a cochain complex with the odd boundary map
$b+B$, which gives trivial cohomology \cite{meyer}.
\begin{definition}
\label{definitionentire}
Define the space of \textbf{entire cochains}
\[
 C^\bullet _\omega (\Bb):=\left\{ \phi _\bullet \in C^\bullet (\Bb): 
\sum_{n=0}^\infty  \Gamma(\frac{n}{2}) \left\lVert \phi_{n}\right\rVert
z^{n}
   \mbox{ is an entire function in $z$ }  \right\}
\]
where $\left\lVert\phi_n\right\rVert:=\sup\left\{|\phi_n(a_0,\ldots,a_n)|: \left\lVert a_j\right\rVert \leq 1  \mbox{ }\forall j\right\}$.
 $\left(C^\bullet_\omega(\Bb),b+B\right)$ forms a subcomplex of $\left(C^\bullet(\Bb), b+B \right)$.
The cohomology defined by $\left(C^\bullet _\omega(\Bb),b+B\right)$ is the \textbf{entire cyclic cohomology} of $\Bb$, denoted 
$\HE^\bullet(\Bb)=\HE^+(\Bb)\oplus \HE^-(\Bb)$. 
$\HE^\bullet(\Bb)$ is equipped with the obvious 
group structure inherited from the addition on $\operatorname{Hom}(C_n(\Bb),\mathbb{C})$.
\end{definition}


It is known that the de Rham homology (over $\mathbb{C}$) on a closed manifold $M$ is a
summand of the entire cyclic cohomology of the algebra $C^\infty(M)$.
They are expected to be equal, however it is not proved except
for the case when $M$ is one-dimensional \cite{boundary}.

Let
$\TR:C_n(M_N(A))\rightarrow C_n(A)$ 
be the  map defined by
\[
 \TR\left(m_0,m_1,\ldots,m_n \right):=\sum _{0\leq i_0,\ldots, i_n\leq N} 
\left( (m_0)_{i_0 i_1},(m_1)_{i_1 i_2},\ldots,
(m_n)_{i_n i_0}\right) \mbox{ ,}
\] 
where $\left(m_{k}\right) _{ij}$
 denotes the entries of the matrix $m_k$.
\begin{definition}
\mbox{ }
\begin{enumerate}
\item 
Let $p\in M_N(A)$ be a projection.
Define 
the \textbf{\textit{even} Chern character} $\CCHERN_+(p)\in C_+(A)$ of $p$ to be 
\[
 \CCHERN_+(p):=\sum_{k=0}^\infty  \CCHERN_{2k}(p)\mbox{ ,}
\]
where
\begin{eqnarray*}
\CCHERN_0(p)&:=&\TR(p)  \mbox{ ,}\\
\CCHERN_{2k}(p)&:=& (-1)^k \frac{(2k)!}{2\cdot k!} \TR(2p-1,p,\ldots,p)_{2k}   \mbox{ .}
\end{eqnarray*}
\item
Let  $u\in M_N(A)$ be a unitary.
Define the \textbf{\textit{odd} Chern character} $\CCHERN_-(u)\in C_-(A)$ of $u$ to be
\[
  \CCHERN_-(u):=\sum_{k=0}^\infty \CCHERN_{2k+1}(u) \mbox{ ,}
\]
where
\begin{eqnarray*}
 \CCHERN_{2k+1}(u)&:=& 
\frac{1}{\Gamma(\frac{1}{2})} (-1)^{k+1} k! \cdot \TR(u^{-1},u,\ldots,u^{-1},u)_{2k+1} \mbox{ .}
\end{eqnarray*}
\end{enumerate}
\end{definition} 

For convenience, we often  write $\TR\left( m_0,m_1,\ldots,m_n \right)$ 
simply as $(m_0,\ldots,m_n)$.

\begin{lemma}[\cite{getzlereven,getzlerodd}]
\label{1a}
\label{1b}
The Chern characters
$
 \CCHERN_+(p)  $  and  $\CCHERN_-(u)
$
are entire cyclic cycles in $\HE_+(A)$ and $\HE_-(A)$ respectively.\newline
That is, 
\[ \CCHERN_+(p) \in C^\omega_+(A)\mbox{ ,}\hspace{0.5cm}
 (b+B)\CCHERN_+(p)=0 \mbox{ ;}
\]
and
\[
 \CCHERN_-(u) \in C^\omega_-(A)\mbox{ ,}\hspace{0.5cm}
 (b+B)\CCHERN_-(u)=0 \mbox{ .}
\]
Furthermore, the homology classes $[\CCHERN_+(p) ]$
and $[\CCHERN_-(u)] $ depend 
only on the $\KKEI$-theory classes of $[p]\in  \KKEI _0 (A)$
and $[u] \in \KKEI_1(A)$ respectively.
 \end{lemma}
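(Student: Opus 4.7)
The plan is to verify three things in turn: the entire growth condition placing $\CCHERN_\pm$ in $C^\omega_\pm(A)$, the cocycle identity $(b+B)\CCHERN_\pm = 0$, and the invariance of the homology class under the $K$-theory equivalence relation. Both halves of the lemma proceed by essentially identical strategies, so I would describe the even case in detail and indicate the odd modifications in passing, following Getzler \cite{getzlereven,getzlerodd}.

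For the growth bound, I would estimate projective tensor norms directly. One has
\[
\|\CCHERN_{2k}(p)\|_\pi \leq \frac{(2k)!}{2\cdot k!}\,\|2p-1\|\,\|p\|^{2k},
\]
and since $(2k)!/k! \leq (2k)^k$ while $\Gamma(k)$ grows factorially, the quantity $\|\CCHERN_{2k}(p)\|_\pi \lambda^{2k}/\Gamma(k)$ is bounded for sufficiently small $\lambda>0$, placing $\CCHERN_+(p) \in C^\omega_+(A)$. The odd estimate is analogous, and simpler since the normalization already carries $k!/\Gamma(\tfrac{1}{2})$.

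Next, I would verify $(b+B)\CCHERN_+(p) = 0$ by matching terms degree by degree. The algebraic input is that $p^2 = p$ forces $(2p-1)^2 = 1$ and $p(2p-1) = (2p-1)p = p$, so $b\,\CCHERN_{2k}(p)$ collapses via repeated multiplication of adjacent entries, while $B\,\CCHERN_{2k-2}(p)$ cyclically reorders entries and inserts a leading $1$. After comparing the combinatorial factors $(2k)!/(2\cdot k!)$ and $(2k-2)!/(2\cdot(k-1)!)$, the surviving terms on the two sides cancel in pairs. The odd computation is parallel, using $uu^{-1} = u^{-1}u = 1$.

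Finally, for invariance under the $K$-theory equivalence relation I would check the three standard moves: stabilization $p \mapsto p \oplus 0$ (trivial, since $\TR$ ignores the zero block), conjugation $p \mapsto vpv^{-1}$ (handled by cyclicity of $\TR$), and smooth homotopy $p_t$ of projections. For the homotopy step I would differentiate $p_t^2 = p_t$ to obtain $p_t \dot p_t p_t = 0$, then compute $\tfrac{d}{dt}\CCHERN_+(p_t)$ and display it as $(b+B)$ applied to a transgression cochain
\[
\eta_{2k+1}(p_t) = \sum_{j=1}^{2k+1} c_{k,j}\,\TR(2p_t-1,p_t,\ldots,\dot p_t,\ldots,p_t)_{2k+1},
\]
with coefficients $c_{k,j}$ tuned so that cancellations driven by $p_t\dot p_t p_t = 0$ produce precisely the derivative. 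Integrating over $t \in [0,1]$ gives $\CCHERN_+(p_1) - \CCHERN_+(p_0) = (b+B)\int_0^1 \eta(p_t)\,dt$ in $C^\omega_+(A)$. The odd case uses a path of unitaries $u_t$ and the identity $\tfrac{d}{dt}(u_t^{-1}) = -u_t^{-1}\dot u_t u_t^{-1}$. The main obstacle is pinning down the combinatorial coefficients $c_{k,j}$ in the transgression and verifying that the primitive satisfies the entire growth condition uniformly in $t$; once the algebraic identity is set up, the growth estimate is the same Stirling-type calculation used in the first step.
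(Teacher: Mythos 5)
The paper does not actually prove this lemma: it is stated with citations to Getzler--Szenes and Getzler \cite{getzlereven,getzlerodd} and no argument is given, so there is no ``paper's own proof'' to compare against. Your outline is the standard argument from those references and is sound in all three steps: the growth estimate via Stirling comparison of $(2k)!/k!$ against $\Gamma(k)$, the direct $(b+B)$-cancellation using $p^2=p$ (respectively $uu^{-1}=1$), and the homotopy transgression using $p_t\dot p_t p_t=0$.

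Two small points worth tightening if you were to write this out in full. First, the phrasing ``since $(2k)!/k!\le(2k)^k$ while $\Gamma(k)$ grows factorially'' mildly undersells the issue: both $(2k)^k$ and $\Gamma(k)$ grow faster than any exponential, and their ratio is still of order $(2e)^k$, so the boundedness really does depend on choosing $\lambda$ small enough to kill a geometric factor, not on $\Gamma(k)$ dominating outright. Second, in the Type II setting of this paper the projective norm estimate should account for the trace map $\TR:C_n(M_N(A))\to C_n(A)$, since $p,u\in M_N(A)$; this introduces $N$-dependent constants that are harmless for entireness but should be named. Neither point is a gap---the argument goes through---but they deserve a sentence each in a complete write-up.
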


As a result of Lemmas~\ref{1a},
 the Chern character $\CCHERN_\bullet$ descends to 
a map from $\KY_\bullet(A)$ to $\HE_\bullet(A)$. It is easy to 
see that $\CCHERN_\bullet$ respects group additions, hence it is
a group homomorphism.

\subsection{Connes-Chern character}

The Connes-Chern character is a cohomological 
Chern character due to Connes that
  assigns to a Breuer-Fredholm module a cocycle in entire cyclic cohomology.
However, not every Breuer-Fredholm module lies inside the domain of the Connes-Chern character.
To characterize those that are within the domain, we need the following summability condition.

 For $0<p<\infty$, let $\Ll^p$ be the set of $p$-summable operators in $\Nn$.
That is, an operator $T$ is in $\Ll^p$ if $T\in\Nn$ and 
its $p$-norm $\lVert T \rVert_p$ with respect to $\tau$ is finite.
  More details 
can be found
in the Appendix.

The Connes character for the Type II setting
first appeared in the work of Benameur and Fack in \cite{type2index}.
Let $A$ be a Banach $*$-algebra.

\begin{definition}
A Breuer-Fredholm module $\FREDMOD$ over $A$ is called 
$p$-summable for $[F,\rho(a)]$ is $p$-summable for all $a\in A$ (see 
Definition~\ref{summablecompactmeasurable}).
\end{definition}

\begin{definition} \mbox{ }
\begin{enumerate}
\item 
Recall that $\chi$ is the grading operator that anti-commutes with $F$.
Define the  \textbf{\textit{even} Connes character} $\ch^n(F)$ of
an even $p$-summable Breuer-Fredholm module $\FREDMOD$ to be
the linear functional   on $C_n(A)$ given by
\[
 \left( \ch^n(F),(a_0,\ldots,a_n)_n \right) := 
\frac{\Gamma(\frac{n}{2}+1)}{2\cdot n!} \tau (\chi F[F,a_0][F,a_1]\cdots[F,a_n])\mbox{ ,}
\]
where $n$ is an  even integer greater than $p$ and $\left( \cdot, \cdot \right)$ denotes the pairing between cochains and chains. 
\item
Define the \textbf{\textit{odd} Connes character} $\ch^n(F)$ of
an odd $p$-summable Breuer-Fredholm module $\FREDMOD $ to be
the linear functional   on $C_n(A)$ given by
\[
 \left( \ch^n(F),(a_0,\ldots,a_n)_n \right) := 
\frac{\Gamma(\frac{n}{2}+1)}{2\cdot n!} \tau ( F[F,a_0][F,a_1]\cdots[F,a_n])\mbox{ ,}
\]
where $n$ is an odd integer greater than $p$ and $\left( \cdot, \cdot \right)$ 
denotes the pairing between cochains and chains. 
\end{enumerate}
\end{definition}

\begin{theorem}
For $n>p$,
the even/odd Connes character $\ch^n(F)$ of an even/odd $p$-summable Breuer-Fredholm module $\FREDMOD$
defines an entire cyclic cocycle, its cohomology class is independent of $n$ with the same parity.
\end{theorem}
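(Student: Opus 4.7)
I would establish, in order, that (i) $\ch^n(F)\in C^n_\omega(A)$, (ii) $(b+B)\ch^n(F)=0$, and (iii) for any two integers $n,n'>p$ of the same parity, $[\ch^n(F)]=[\ch^{n'}(F)]$ in $\HE^\bullet(A)$.

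For (i), the $p$-summability hypothesis gives $[F,a]\in\Ll^p$ with $\lVert[F,a]\rVert_p$ controlled by a constant multiple of $\lVert a\rVert$, and since $n+1>p$ the noncommutative H\"older inequality yields $[F,a_0]\cdots[F,a_n]\in\Ll^{p/(n+1)}\subseteq\Ll^1$. Hence $\tau(\chi F[F,a_0]\cdots[F,a_n])$ is finite and defines a continuous functional on $C_n(A)$. Because $\ch^n(F)$ is concentrated in a single degree, the entire growth bound of Definition~\ref{definitionentire} reduces to the finiteness of $\Gamma(n/2)\lVert\ch^n(F)\rVert$, which is automatic.

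For (ii), I would verify $B\ch^n(F)=0$ and $b\ch^n(F)=0$ separately. The former is immediate because every summand in $B(a_0,\ldots,a_{n-1})$ has the symbol $1$ in its leading slot, so $[F,1]=0$ kills the corresponding contribution to $\ch^n(F)$. For $b\ch^n(F)=0$ I would expand each $[F,a_ja_{j+1}]=[F,a_j]a_{j+1}+a_j[F,a_{j+1}]$ in $\ch^n(F)(b(a_0,\ldots,a_{n+1}))$; the internal sum then telescopes, and the residual endpoint contributions cancel against the cyclic summand $(-1)^{n+1}\ch^n(F)(a_{n+1}a_0,a_1,\ldots,a_n)$ after commuting the outermost $a_i$'s past $F$ via $F^2=1$ and using $\{\chi,F\}=0$ in the even case (respectively, only traciality of $\tau$ in the odd case).

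For (iii), I would construct an explicit transgression cochain $\psi^{n+1}\in C^{n+1}(A)$ of the form
\[
\psi^{n+1}(a_0,\ldots,a_{n+1})=d_n\,\tau\bigl(\chi a_0 F[F,a_1]\cdots[F,a_{n+1}]\bigr)
\]
in the even case (and the analogous formula without $\chi$ in the odd case), with constant $d_n$ chosen so that $(b+B)\psi^{n+1}=\ch^{n+2}(F)-\ch^n(F)$; iterating then handles any $n'>n$. This is essentially the Connes $S$-operator transgression, with the matching of normalizations relying on the identity $c_{n+2}/c_n=1/(2(n+1))$ for the coefficients $c_n=\Gamma(n/2+1)/(2\cdot n!)$. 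The main obstacle throughout is the bookkeeping: tracking signs from $\chi$, $F$, and the alternating signs in $b,B$, together with repeated use of $F^2=1$ and the trace property to reorganize operators inside $\tau$.
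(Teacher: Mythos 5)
Your proposal is correct and takes essentially the same route as the paper. In particular, the transgression cochain $\psi^{n+1}$ you propose, of the form $d_n\,\tau\bigl(\chi\, a_0 F[F,a_1]\cdots[F,a_{n+1}]\bigr)$, is exactly the one the paper uses; the only cosmetic difference is that you prove $b\,\ch^n(F)=0$ by a separate direct telescoping computation, whereas the paper reads it off at no extra cost from the identity $-\ch^{n+2}(F)=b\,\psi^{n+1}(F)$ together with $b^2=0$.
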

\begin{proof}
It is clear that $\ch^n(F)$ entire.
Since $B\ch^n(F)=0$,   we only need to show that $b\ch^n(F)=0$, and 
that $\ch^n(F)-\ch^{n+2}(F)$ is exact.
A short computation shows that 
\[\ch^n(F)=B \psi ^{n+1}(F) \hspace{1cm}\text{  and  } \hspace{1cm}
 -\ch^{n+2}(F)= b \psi ^{n+1}(F)\] where 
the entire cochain $\psi ^{n+1}(F)$ is given by
\begin{eqnarray*}
 \left( \psi ^{n+1}(F), (a_0,\ldots,a_{n+1})_{n+1} \right)
 :=\frac{ \Gamma{(\frac{n}{2}+2)} }
{(n+2)!}  \times 
\left\{
 \begin{array}{ll}
 
\tau (\chi a_0F[F,a_1][F,a_2]\cdots[F,a_{n+1}] ) 
& n \mbox{ even}\\
 \tau (a_0F[F,a_1][F,a_2]\cdots[F,a_{n+1}] ) 
& n \mbox{ odd}
\end{array} 
\right.
\mbox{ .}
 \end{eqnarray*}

Thus,
 \[  b\ch^{n+2}(F)=bb( -\psi^{n+1}(F))=0 \]  and  \[ 
 \ch^n(F)-\ch^{n+2}(F) = (b+B)  \psi^{n+1}(F) \mbox{ ,}
 \]
which completes the proof.
\end{proof}

Suppose that $F_t$ is a norm continuous family of 
Fredholm operators parametrized by $t$
so that $(\rho,\Nn,F_t)$ 
defines 1-parameter family of $p$-summable Breuer-Fredholm
modules. 
\begin{theorem}
\label{conneschartransgression}
The entire cyclic cohomology class defined
by the Connes character $\ch^n(F_t)$ is independent
of $t$.
More explicitly 
 \[\frac{d}{dt}\ch^n(F_t)=(b+B) \iota( \dot{F_t} )
\ch^{n-1}(F_t) \mbox{ ,}
\]
where  the entire cochain $ \iota( \dot{F_t} )
\ch^{n-1}(F_t)$ is given by
\begin{eqnarray*}
\lefteqn{\left( \iota( \dot{F_t} )
\ch^{n-1}(F_t), (b_0,\ldots,b_{n-1})_{n-1} \right)
}\\ &&:=
\frac{\Gamma(\frac{n}{2}+1)}{  n!} 
\sum _{k=0} ^{n-1}  (-1)^{k+1}
\tau (\chi F_t [F_t, b_0 ] \cdots 
[F_t,b_k]\dot{F}_t\cdots [F_t,b_{n-1}]) 
\end{eqnarray*}
with the convention that $\chi=1$ in the odd case.
\end{theorem}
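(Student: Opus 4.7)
The plan is to establish the transgression formula by direct differentiation followed by a Hochschild-type algebraic identity; the $t$-independence of the cohomology class then follows immediately, since the right-hand side is a coboundary in the entire cyclic complex.

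First I would compute $\frac{d}{dt}\ch^n(F_t)$ applied to a simple tensor $(a_0,\ldots,a_n)_n$ using the Leibniz rule on the product $F_t[F_t,a_0][F_t,a_1]\cdots[F_t,a_n]$. This yields $n+2$ terms: one in which the leading $F_t$ is replaced by $\dot F_t$, and $n+1$ terms in which some commutator $[F_t,a_k]$ is replaced by $[\dot F_t,a_k]$. The key algebraic input is the identity $F_t\dot F_t+\dot F_t F_t=0$, obtained by differentiating $F_t^2=1$, together with the derivation property $[F_t,ab]=[F_t,a]b+a[F_t,b]$ and trace cyclicity $\tau(\chi XY)=(-1)^{|X||Y|}\tau(\chi YX)$ (with $\chi=1$ in the odd case).

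Next I would expand $(b+B)\iota(\dot F_t)\ch^{n-1}(F_t)$ using the transposes of the formulas for $b$ and $B$. The contribution from $b$, applied to the inserted expression, produces terms of the form $\tau(\chi F_t[F_t,a_0]\cdots[F_t,a_j a_{j+1}]\cdots \dot F_t\cdots [F_t,a_{n-1}])$; splitting each such commutator via the Leibniz rule reconstructs precisely the interior terms of $\frac{d}{dt}\ch^n(F_t)$, with matching signs after accounting for the cyclic boundary term in $b$. The contribution from $B$ produces terms with a leading $1$, which by the derivation property and $F_t^2=1$ can be rewritten to match the leftover term in which the outermost $F_t$ is differentiated; here one uses $\dot F_t F_t=-F_t\dot F_t$ together with cyclicity to normalize the insertion to the distinguished position at the front. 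The combinatorial factor $\Gamma(\tfrac{n}{2}+1)/n!$ versus $\Gamma(\tfrac{n}{2}+1)/(2\cdot n!)$ is absorbed by the fact that summing over insertion positions produces a factor of $2$ when two adjacent terms are combined via anticommutation.

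The main obstacle will be the bookkeeping of signs and the combinatorial matching of the $n+2$ differentiated terms on the left with the $n$ terms coming from $b$ and the $n+1$ terms coming from $B$ on the right. In particular, tracking the sign $(-1)^{k+1}$ in the definition of $\iota(\dot F_t)\ch^{n-1}(F_t)$ through both the Hochschild boundary and the cyclic rotation in $B$ requires careful case analysis near the endpoints $k=0$ and $k=n-1$. Once the finite-level identity is verified, entireness of $\iota(\dot F_t)\ch^{n-1}(F_t)$ follows from the same estimates that make $\ch^n(F_t)$ entire (since $\dot F_t$ is bounded in $\Nn$), so the transgression identity is a genuine identity of entire cochains, proving both the explicit formula and the $t$-independence of $[\ch^n(F_t)]\in\HE^\bullet(A)$.
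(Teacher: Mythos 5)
There is a concrete error in your decomposition of $(b+B)\iota(\dot{F_t})\ch^{n-1}(F_t)$. You claim that the contribution from $B$ produces the term in which the outermost $F_t$ is differentiated, i.e.\ $\tau(\chi\dot{F_t}[F_t,a_0]\cdots[F_t,a_n])$. But in fact $B\iota(\dot{F_t})\ch^{n-1}(F_t)=0$ identically, as the paper observes at the very start of its argument: $B$ on chains always inserts a unit in the $b_0$-slot, and in the defining formula for $\iota(\dot{F_t})\ch^{n-1}(F_t)$ the insertion position of $\dot{F_t}$ runs over $k=0,\ldots,n-1$ \emph{after} $[F_t,b_k]$, so the $b_0$-slot is never replaced by $\dot{F_t}$ and every summand retains a factor $[F_t,1]=0$. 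Consequently the ``leftover'' term cannot come from $B$; it has to come from the Hochschild part alone. Concretely, it emerges from the cyclic endpoint term $(-1)^n(a_na_0,a_1,\ldots,a_{n-1})_{n-1}$ in $b$, after expanding the commutators $[F_t,a_ja_{j+1}]$ by the Leibniz rule and applying the relation $F_t\dot{F_t}+\dot{F_t}F_t=0$ obtained from $F_t^2=1$, which is exactly the telescoping the paper carries out via its auxiliary cochains $T_k$. Your matching of the $n+2$ differentiated terms therefore has a gap at precisely the one term requiring the most care, and your treatment of the constant $\Gamma(\tfrac n2+1)/n!$ versus $\Gamma(\tfrac n2+1)/(2n!)$ leans on that same (nonexistent) $B$-contribution.

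Once this is corrected, the rest of your plan---direct differentiation of $\ch^n(F_t)$, algebraic matching against the coboundary using $F_t\dot{F_t}+\dot{F_t}F_t=0$ and the Leibniz rule, and entireness of $\iota(\dot{F_t})\ch^{n-1}(F_t)$ from boundedness of $\dot{F_t}$---coincides with the paper's; the essential point is that \emph{all} of $\tfrac{d}{dt}\ch^n(F_t)$, including the $\dot{F_t}$-out-front piece, must be recovered from $b\iota(\dot{F_t})\ch^{n-1}(F_t)$.
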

\begin{proof}
$\dot{F_t}$ being bounded implies that $\iota( \dot{F_t} )
\ch^{n-1}(F_t)$ is entire.
It is clear that
 \[ B\iota( \dot{F_t} )\ch^{n-1}(F_t)=0  \mbox{ ,}
\]
 so we only need to show that 
 \[\frac{d}{dt}\ch^n(F_t)=b  \iota( \dot{F_t} )
\ch^{n-1}(F_t) \mbox{ .}
\]
Let $T_k \in C^{n-1}(A)$  be defined by the equation
\begin{eqnarray*}
\left( T_k,(b_0,\ldots,b_{n-1})_{n-1} \right) &:=&
(-1)^{k} 
\tau (\chi F_t [F_t, b_0 ] \cdots 
\dot{F}_t[F_t,b_k]\cdots [F_t,b_{n-1}]) \mbox{ ,}
\end{eqnarray*}
then
\begin{eqnarray*}
\left(b T_k,(a_0,\ldots,a_{n}) \right)&=& (-1)^{k}
\Biggl(
\sum_{j=0} ^{k-1}
(-1)^j
\tau(\chi F_t  \cdots
[F_t,a_j a_{j+1} ] 
\cdots
  [F_t,a_k] \dot{F}_t\cdots
[F_t,a_n] ) 
 \\ && \hspace{1cm}
+
\sum_{j=k} ^{n-1}
(-1)^j
\tau(\chi F_t \cdots \dot{F}_t 
[F_t,a_k]
\cdots
[F_t,a_j a_{j+1} ] 
\cdots
[F_t,a_n] ) 
\\&& \hspace{1cm}
+ 
(-1)^n
\tau(\chi F_t 
[F_t,a_n a_0]
\cdots \dot{F}_t 
[F_t,a_k] 
\cdots
[F_t,a_{n-1}] ) \Biggr) \mbox{ .}
\end{eqnarray*}
By expanding the term $[F_t,a_j a_{j+1}]=
a_j[F_t, a_{j+1}] +[F_t,a_j ] a_{j+1}$, we see that
the above becomes a telescope sum and most of the terms 
cancel. Using the identity 
\[0=\frac{d}{dt}\left( F_t\cdot F_t\right)
= \dot{F_t}\cdot F_t + F_t\cdot \dot{F_t}\]
 we  simplify further and obtain
\begin{eqnarray*}
\left(b T_k,(a_0,\ldots,a_{n})_n \right)
&=&
(-1)^{k}
\left(
\tau(\chi F_t a_0 [F_t,a_1] \cdots [F_t,a_k] \dot{F}_t
\cdots[F_t,a_n] ) \right.
\\ &&+
(-1)^{k} \tau(\chi F_t \cdots [\dot{F}_t,a_k] \cdots
[F_t,a_n] )
\\&&+ \left.
(-1)^n \tau(\chi F_t  [F_t,a_n]a_0
  \cdots \dot{F}_t [F_t,a_k]
\cdots [F_t,a_{n-1}] ) \right)\\
&=&
\label{desiredcommutator}
 \tau(\chi F_t \cdots [\dot{F}_t,a_k] \cdots
[F_t,a_n] ) \\&&
\label{telescopeA}
(-1)^{k} 
\tau(\chi F_t a_0 [F_t,a_1] \cdots [F_t,a_k] \dot{F}_t
\cdots[F_t,a_n] ) 
\\&&
\label{telescopeB}
 - 
(-1)^{k-1} \tau(\chi F_t a_0 [F_t,a_1] \cdots [F_t,a_{k-1}] \dot{F}_t
\cdots[F_t,a_n] )  \mbox{ .}
\end{eqnarray*}
By construction, 
\[
\iota( \dot{F_t} )
\ch^{n-1}(F_t) = \frac{\Gamma(\frac{n}{2}+1)}{  n!}  
\sum_{k=1}^n T_k \mbox{ ,}
\]
and hence
\begin{eqnarray*}
\left( b \iota( \dot{F_t} )
\ch^{n-1}(F_t) , (a_0,\ldots,a_n)_n \right)
&=& \frac{\Gamma(\frac{n}{2}+1)}{  n!}
\left(
\sum_{k=1}^n  \tau(\chi F_t [F_t,a_0]\cdots [\dot{F}_t,a_k] \cdots
[F_t,a_n] ) \right. \\
&& \hspace{2cm} - \tau(\chi F_t a_0 \dot{F}_t [F_t,a_1] \cdots
[F_t,a_n] ) \\ && \hspace{2cm} \left.
+ (-1)^{n}
\tau(\chi F_t a_0 [F_t,a_1]\ldots [F_t,a_n]\dot{F_t}) \right)
\\
&=& \frac{\Gamma(\frac{n}{2}+1)}{  n!} \left(
\sum_{k=0}^n  \tau(\chi F_t [F_t,a_0]\cdots [\dot{F}_t,a_k] \cdots
[F_t,a_n] ) \right. \\
&& \hspace{2cm} + \tau(\chi  \dot{F}_t F_t a_0 [F_t,a_1] \cdots
[F_t,a_n] ) \\ && \hspace{2cm} \left.
-
\tau(\chi  a_0 [F_t,a_1]\ldots [F_t,a_n]\dot{F_t}F_t) \right)
\\
&=&
\frac{\Gamma(\frac{n}{2}+1)}{  n!} \left(
 \sum_{k=0}^n  \tau(\chi F_t [F_t,a_0]\cdots [\dot{F}_t,a_k] \cdots
[F_t,a_n] ) \right. \\
&& \hspace{2cm} \left. + \tau(\chi \dot{F}_t [ F_t , a_0]  \cdots
[F_t,a_n] ) \right)\\
&=&
\left( \frac{d}{dt}\ch^n(F_t), (a_0,\ldots,a_n)_n \right)
\mbox{ .}
\end{eqnarray*}
The proof is complete.
\end{proof}
In fact, the way we obtain the transgression formula
in Theorem~\ref{conneschartransgression} is by taking
limits of the transgression formula in Theorem~\ref{cchisexact} below.



\begin{proposition}[\cite{type2index}]
\label{index}
Suppose that $T$ is a $(e,f)$-Fredholm operator with parametrix $S$ such that
\[
 e-eSfTe \in \mathcal{L}_{e\Nn e}^{p/2} 
\hspace{1cm}\mbox{  and }\hspace{1cm}
 f-fTeSf \in \mathcal{L}_{f\Nn f}^{p/2} 
\mbox{ ,}
\]
where $\mathcal{L}_{e\Nn e}^{p/2} $ denote the set of $\frac{p}{2}$-summable operators in $e\Nn e$, likewise
for $\mathcal{L}_{f\Nn f}^{p/2} $.
Then
\[
 \IND(fTe)=\tau\left( (e-eSfTe)^m \right) - \tau \left( (f-fTeSf)^m \right) 
\]
for $2m>p$.


\end{proposition}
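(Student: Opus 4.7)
The plan is to identify $\tau((e - eSfTe)^m) - \tau((f - fTeSf)^m)$ with $\IND(fTe)$ by a block-triangular argument. Write $A := fTe$ and $B := eSf$, so that $R_1 := e - BA \in \mathcal{L}^{p/2}_{e\Nn e}$ and $R_2 := f - AB \in \mathcal{L}^{p/2}_{f\Nn f}$; since $2m > p$, both $R_1^m$ and $R_2^m$ are trace-class. Let $P := eP_{\KER T}$ and $P' := P_{\KER T^*}f$ be the finite-trace projections onto $\KER A$ and $\KER A^*$ supplied by the preceding proposition. The algebraic facts I will use are $AP = 0$ and $P'A = 0$ (yielding $R_1 P = P$ and $P' R_2 = P'$), together with the elementary intertwining $AR_1 = A - ABA = R_2 A$.

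In the decomposition $e\Hh = P\Hh \oplus (e - P)\Hh$ the relation $R_1 P = P$ forces $R_1$ to be block upper-triangular with identity in the top-left block; dually, $R_2$ is block lower-triangular in $f\Hh = P'\Hh \oplus (f - P')\Hh$. Writing $Y := (e - P) R_1 (e - P)$ and $Y' := (f - P') R_2 (f - P')$, this triangular structure is preserved by $m$-th powers, and taking traces gives
\[
\tau(R_1^m) = \tau(P) + \tau(Y^m), \qquad \tau(R_2^m) = \tau(P') + \tau(Y'^m).
\]
The claim thus reduces to $\tau(Y^m) = \tau(Y'^m)$.

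Since $T$ is $(e,f)$-Fredholm, its range is closed, and the restriction $\tilde A : (e - P)\Hh \to (f - P')\Hh$ of $A$ is a bijection with bounded inverse $\tilde A^{-1}$. A direct computation using $AR_1 = R_2 A$, $AP = 0$, and $P'A = 0$ verifies $\tilde A Y = Y' \tilde A$ on $(e - P)\Hh$, and inductively $\tilde A Y^m = Y'^m \tilde A$ for every $m \geq 1$. The cyclicity of $\tau$ applied to the trace-class $Y^m$ then yields
\[
\tau(Y'^m) = \tau(\tilde A Y^m \tilde A^{-1}) = \tau(Y^m \tilde A^{-1} \tilde A) = \tau(Y^m),
\]
and substituting back delivers $\tau(R_1^m) - \tau(R_2^m) = \tau(P) - \tau(P') = \IND(fTe)$.

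The main technical obstacle will be handling the possible unboundedness of $T$, and hence of $\tilde A$. One must verify carefully that the restriction $\tilde A$ is a closed bijection with bounded inverse (combining the closed graph theorem with the closed-range property of Fredholm operators) and that the trace cyclicity $\tau(\tilde A X \tilde A^{-1}) = \tau(X)$ is rigorously valid for trace-class $X$ when $\tilde A$ is merely affiliated with $\Nn$. The latter is achieved by approximating $\tilde A$ via bounded truncations arising from its polar decomposition $\tilde A = U|\tilde A|$ and passing to the limit, exploiting that both $Y^m$ and $Y'^m$ are already trace-class.
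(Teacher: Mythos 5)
Your block-triangular decompositions of $R_1$ and $R_2$, the intertwining $\tilde A Y^m = Y'^m\tilde A$, and the reduction to $\tau(Y^m)=\tau(Y'^m)$ are all correct. The genuine gap is the assertion that, since $T$ is $(e,f)$-Fredholm, the restriction $\tilde A:(e-P)\Hh\to(f-P')\Hh$ is a bijection with bounded inverse. This fails in the Breuer--Fredholm setting: unlike a classical Fredholm operator, a Breuer--Fredholm operator need not have closed range. The $(e,f)$-Fredholm hypothesis only forces the spectral projection $1_{[0,\varepsilon]}(|fTe|)$ to have finite $\tau$-trace for small $\varepsilon$; it does not prevent the spectrum of $|fTe|$ from accumulating at $0$. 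For a concrete example, in a ${\rm II}_\infty$ factor take $e=f=1$ and $T=1-K$ with $K\ge 0$ in $\Kk$, $\|K\|=1$, and $1$ not an eigenvalue of $K$: then $S=1$ is a parametrix, while $T$ is injective with dense but non-closed range, so $T^{-1}$ is unbounded. Your closing paragraph points the approximation at the wrong difficulty --- it is not $\tilde A$ that may be unbounded, but $\tilde A^{-1}$ that essentially always is --- and the appeal to ``the closed-range property of Fredholm operators'' invokes precisely what the Type II theory withholds.

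The intertwining argument is nonetheless salvageable, but the spectral truncation must \emph{replace} the nonexistent bounded $\tilde A^{-1}$ rather than approximate it. Writing $\tilde A=V|\tilde A|$, with $V\in\Nn$ a partial isometry from $(e-P)\Hh$ onto $\overline{\operatorname{Ran}\tilde A}=(f-P')\Hh$, and $Z:=V^*Y'^mV$, one has $|\tilde A|Y^m=Z|\tilde A|$ and $\tau(Z)=\tau(Y'^m(f-P'))=\tau(Y'^m)$. For $E_n:=1_{[1/n,n]}(|\tilde A|)$ the operator $|\tilde A|E_n$ is bounded and invertible on $E_n\Hh$, and conjugating the intertwining by $E_n$ gives $E_nY^mE_n=(|\tilde A|E_n)^{-1}(E_nZE_n)(|\tilde A|E_n)$, whence $\tau(Y^mE_n)=\tau(ZE_n)$ by bounded-operator cyclicity; letting $n\to\infty$ uses only that $Y^m,Z$ are trace class and $E_n\nearrow e-P$. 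With this fix your proof goes through. For what it is worth, the paper cites \cite{type2index} and does not reprint a proof, so there is nothing to compare line by line, but the published argument likewise takes care to avoid any closed-range assumption.
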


The following theorem shows that
the characters $\ch^n$ and $\CCHERN_\bullet$ intertwine the
$\KKEI$-theoretical pairing with the (co)homological pairing of
entire cyclic (co)homology.

\begin{theorem}[\cite{type2index}] \mbox{ }
\begin{enumerate}
\item 
Let $\FREDMOD$ be an even $\operatorname{p}$-summable Breuer-Fredholm module and
$p\in M_N(A)$ be a projection, then for $n>\operatorname{p}$ even 
\begin{eqnarray*}
 \langle [\FREDMOD],[p]  \rangle &=& \left([ \ch^n(F)],[\CCHERN_+(p)] \right)
 \mbox{ .}
\end{eqnarray*}
\item
Let $\FREDMOD$ be an odd $\operatorname{p}$-summable Breuer-Fredholm module and
$u\in M_N(A)$ be a unitary, then for $n>\operatorname{p}$ odd
\begin{eqnarray*}
\langle[ \FREDMOD ],[u] \rangle &=& \left( [\ch^n(F)],[\CCHERN_-(u)] \right) \mbox{ .}
\end{eqnarray*}
%
\end{enumerate}
\end{theorem}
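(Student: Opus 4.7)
The plan is to follow Connes' Type~I argument (as adapted in \cite{chernreduct,type2index}), using Proposition~\ref{index} as the bridge between the Connes character and the Breuer--Fredholm index. First I would reduce to the case $N=1$: the matrix trace $\TR$ built into $\CCHERN_\bullet$ is exactly the amplification of $\tau$, and $(\rho\otimes\mathrm{id},M_N(\Nn),F\otimes 1_N)$ is again a $p$-summable Breuer--Fredholm module over $M_N(A)$, so both sides of the claimed identity are compatible with stabilization and we may argue for a projection $p\in A$ (resp.\ a unitary $u\in A$).

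Next, because $\ch^n(F)$ is homogeneous of degree $n$ as a cochain while $\CCHERN_\pm$ is a cycle with components in every even (resp.\ odd) degree, the entire cyclic pairing collapses to the single contribution of matching degree:
\[
([\ch^n(F)],[\CCHERN_+(p)])=(\ch^n(F),\CCHERN_n(p)),\qquad
([\ch^n(F)],[\CCHERN_-(u)])=(\ch^n(F),\CCHERN_n(u)).
\]
Plugging in the explicit definitions, using $[F,2p-1]=2[F,p]$ in the even case, and simplifying the prefactors $\Gamma(n/2+1)$, $n!$, $(2k)!/(2\cdot k!)$ reduces each pairing to a single trace expression: in the even case $n=2m$ it becomes $\tfrac{(-1)^m}{2}\tau(\chi F[F,p]^{2m+1})$, and in the odd case $n=2m+1$ an analogous expression involving $F$, $u$, and $u^{-1}$.

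The remaining task is a trace identity matching these expressions with the Breuer--Fredholm indices. In the even case I would apply Proposition~\ref{index} to $T=F$ with parametrix $S=F$ and $(e,f)=(p^+,p^-)$, which is legitimate by Example~\ref{ex:parametrixexamples}, obtaining for any $2m>p$
\[
\IND(p^-Fp^+)=\tau\bigl((p^+-p^+Fp^-Fp^+)^m\bigr)-\tau\bigl((p^--p^-Fp^+Fp^-)^m\bigr).
\]
Writing $F=\left(\begin{smallmatrix}0&V^*\\V&0\end{smallmatrix}\right)$ with $V^*V=VV^*=1$ and $p=\mathrm{diag}(p_+,p_-)$ in the $\chi$-block decomposition, expanding $[F,p]$ in blocks and using $F^2=1$, the operator $\chi F[F,p]^{2m+1}$ becomes block-diagonal and its two block traces, after repeated cyclic rearrangements, reproduce the two summands above up to the overall constant $\tfrac{(-1)^m}{2}$ that was isolated in the previous paragraph. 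In the odd case I would instead use the parametrix $u^{-1}$ for the $(Q,Q)$-Fredholm operator $u$ with $Q=(F+1)/2$ and carry out the analogous computation using the spectral projections of $F$.

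The main obstacle is the bookkeeping of signs and combinatorial constants, together with the repeated use of cyclicity of $\tau$ when individual factors lie only in a Schatten ideal $\Ll^q$ with $q>1$. The latter is handled by H\"older's inequality in $\Ll^q$ (recalled in the Appendix): since $[F,a]\in\Ll^p$ for each $a\in A$, any product of more than $p$ such commutators lies in $\Ll^1$, so the trace cycles freely. The hypotheses $n>p$ and $2m>p$ are precisely what guarantee this; apart from this technicality, all manipulations use only that $\tau$ is a normal, faithful, semi-finite trace and therefore transfer verbatim from the Type~I proof.
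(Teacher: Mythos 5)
The paper does not give a proof of this theorem: it is stated with the citation \cite{type2index} and the argument is deferred entirely to Benameur--Fack. There is therefore no internal proof to compare against, but your reconstruction is the standard one and is essentially correct.

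A couple of minor points worth tightening. First, to invoke Proposition~\ref{index} you need the strengthened summability $e-eSfTe\in\mathcal{L}_{e\Nn e}^{p/2}$ (and its $f$-analogue), not merely $\tau$-compactness; you allude to this, but it should be made explicit that, e.g., $p^+-p^+Fp^-Fp^+=-p^+[F,p]^2p^+\in\Ll^{p/2}$ since $[F,p]\in\Ll^p$, and that $Q-Qu^{-1}QuQ=-\tfrac14 Q[F,u^{-1}](1-Q)[F,u]\in\Ll^{p/2}$ in the odd case, so that the hypotheses of Proposition~\ref{index} are actually met. Second, the identification can be done without the explicit $2\times 2$ block algebra you invoke: once one observes that $[F,p]$ is odd with respect to both $\chi$ and the projection $p$, the identities $p[F,p]^2p=pFpFp-p$ and $\tau(\chi F[F,p]^{2m+1})=2\tau(\chi p[F,p]^{2m})$ already give $\frac{(-1)^m}{2}\tau(\chi F[F,p]^{2m+1})=(-1)^m\tau((p^+-p^+Fp^-Fp^+)^m)-(-1)^m\tau((p^--p^-Fp^+Fp^-)^m)$ directly, which is slightly cleaner than tracking the $V,V^*$ blocks. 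Apart from that, your combinatorial constant $\tfrac{(-1)^m}{2}$ is correct, the reduction $N=1\to N$ by amplification of $\tau$ is fine, and the cyclicity/H\"older issue is handled exactly as it should be.
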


\section{Unbounded Breuer-Fredholm modules and JLO character }
This section repeats the language in Section 1 for  unbounded Breuer-Fredholm modules.
It starts with the definition of 
unbounded Breuer-Fredholm modules from \cite{oddjloT2}
and its pairing with $\KKEI$-theory.
The JLO character is defined
 and a proof of its homotopy invariance is shown according to \cite{getzlereven}.
The section concludes by showing that the JLO character computes the index.

Much of the work in this section is taken directly from \cite{getzlereven} with minor modifications.
Nonetheless,
we give full details to illustrate the changes made in this Type II setting.

\subsection{Unbounded Breuer-Fredholm modules}
\begin{definition}
\label{spectraltriple}
An \textbf{\textit{odd} unbounded Breuer-Fredholm module}
 over a unital Banach $*$-algebra  $A$ 
is a triple 
$\KCYCLE$ for which 
$\Nn$ is a (separable) semi-finite von Neumann algebra in $B(\Hh)$ with a faithful semi-finite normal trace $\tau$,
 $\rho:A\rightarrow \Nn$ a continuous $*$-representation, 
  and $\Dd$  is an unbounded self-adjoint operator on $\Hh$ such that
\begin{enumerate}
\item 
$\Dd$ is affiliated with $\Nn$,
\item 
For all $a\in A$, the commutator $[\Dd , \rho(a) ]$
 extends to an operator in $\Nn$
 and 
there is a constant $C$ such that
$ \left\lVert [\Dd,\rho(a)]\right\rVert \leq C 
\left\lVert a\right\rVert$.
\item 
$(1+\Dd^2)^{-1/2}   \in \Kk$.
 \end{enumerate}
If $\KCYCLE$ is equipped with a $\mathbb{Z}_2$ grading $\chi\in\Nn$ such that
all $\rho(a)$ are even and $\Dd$ is odd,
then we call $\KCYCLE$
 an \textbf{\textit{even}  unbounded Breuer-Fredholm module}.
\end{definition}
If $\Nn=B(\Hh)$ and $\tau$ is the standard operator trace, we drop the prefix \textit{Breuer}.

To avoid confusion, we will sometimes refer to the Breuer-Fredholm module from
Definition~\ref{fredholmmodule} as \textit{\textbf{bounded}}.
Similar to its bounded counterpart, an unbounded Fredholm module
is sometimes called an \emph{unbounded} $\KKEI$-cycle.

The term (semi-finite) spectral triple
seems to be popular among physicists.
It is a convenient term for the package consisting of 
the algebra $A$ and an unbounded (Breuer-)Fredholm module.
In this thesis, our algebra $A$ is always fixed and we
 view the JLO character and Connes character
as maps from $\KKEI$-homology classes to some cohomology classes
that respect group additions. Hence,  the term unbounded
Breuer-Fredholm module is more convenient and
 suitable in our settings.

An example of an unbounded Breuer-Fredholm module is given by
 the semi-finite spectral triple over a space of $G$-connections 
due to Aastrup, Grimstrup, and Nest.
Similar to the Breuer-Fredholm module case,
 we think of an element $a\in A$ as an operator $\rho(a)\in \Nn$ represented on $\Hh$, and will stop writing 
$\rho$.

In Sections~\ref{unbounded2bounded},
 we will explain in details
 how we would associate a bounded Breuer-Fredholm
module to an unbounded one.


\begin{definition}
\mbox{ }
\begin{enumerate}

\item 

For a given even  
 unbounded Breuer-Fredholm module $\KCYCLE$ over $A$,
 define its pairing with the
even   $\KKEI$-theory $\KY_0(A)$   of 
$A$ given by the index:
\[
\langle [ \KCYCLE ], [p] \rangle:=
 \IND\left(p^- (\Dd \otimes 1_N)p^+ \right)\]   
for a projection $p\in M_N(A)$
 representing the class $[p]\in \KY_0(A) $, where 
\[
 p^-  (\Dd \otimes 1_N)p^+:
p^+ \Hh^N \longrightarrow p^-\Hh^N \mbox{ .} \]

\item 
For a given odd
 unbounded Breuer-Fredholm module $\KCYCLE$ over $A$,
 define its pairing with the
odd $\KKEI$-theory $\KY_1(A)$  of 
$A$ given by the spectral flow:
\[\langle [\KCYCLE ], [u] \rangle:=
\operatorname{sf}\left(\Dd\otimes 1_N,u(\Dd\otimes 1_N) u^{-1}\right)\] 
  for a unitary $u\in M_N(A)$ representing the class $[u]\in \KY_1(A)$,
where \newline
 $\operatorname{sf}\left(\Dd\otimes 1_N,u(\Dd\otimes 1_N) u^{-1}\right) $ is the spectral flow from 
$(\Dd\otimes 1_N ) \left(1+(\Dd\otimes 1_N) ^2 \right)^{-\frac{1}{2}} $
 to \newline $( u(\Dd\otimes 1_N) u ^{-1})
\left(1 + (u(\Dd\otimes 1_N) u^{-1})^2 \right)^{-\frac{1}{2}}$
 defined in  \cite{oddjloT2}.
\end{enumerate}
\end{definition}



\subsection{JLO character}

The JLO character
is a cohomological Chern character 
due to Jaffe, Lesniewski, and Osterwalder that assigns cocycles in entire cyclic cohomology to 
 unbounded Breuer-Fredholm modules satisfying an appropriate summability condition. We begin by defining the summability conditions of 
main concern.


\begin{definition}
 An unbounded Breuer-Fredholm module $\KCYCLE$ over 
$A$ is:
\begin{itemize}
\item[(a)]
\textbf{$p$-summable} if $\tau\left((1+\Dd^2)^{-p/2}\right) <\infty$ ;
\item[(b)]
\textbf{$\theta$-summable} if $
 \tau(e^{-t\Dd^2}) < \infty $ for all $t>0$;
\item[(c)]
\textbf{weakly $\theta$-summable} 
if $ \tau(e^{-t\Dd^2}) < \infty $ for some $0<t<1$.

\end{itemize}
\end{definition}

Observe that 
$p$-summability implies $\theta$-summability,
which in turn implies weak $\theta$-summability.


\begin{example}
 Let $\Gamma \hookrightarrow \tilde{M} \hookrightarrow M$ be a Galois cover of a compact 
$p$-dimensional manifold $M$. Let $\Dd$ be the $\Gamma$ cover of a generalized Dirac operator on $M$.
Consider the von Neumann algebra $\Nn$ of bounded $\Gamma$-invariant operators defined by Atiyah,
 with its natural trace $\TR_\Gamma$. $\Hh$ the Hilbert space $\Nn$ represents on, then 
$(\rho,\Nn,\Dd)$ is a $p$-summable unbounded Breuer-Fredholm module over $C^\infty(M)$ with $\rho$ given by point-wise multiplication \cite{type2index}.
\end{example}

\begin{example}
The unbounded Breuer-Fredholm module 
given by Aastrup-Grimstrup-Nest's noncommutative space of 
connections  is 
weakly $\theta$-summable if the sequence 
$\{a_j\}$ in its definition diverges sufficiently fast \cite{agn1}.
\end{example}

The following Lemma was proved in \cite{chernreduct} in the Type I case.
\begin{lemma}
\label{ptheta}
 If $\KCYCLE$ is $p$-summable for any finite $p$, then
 it is also $\theta$-summable, and $\tau(e^{-t\Dd^2})=O(t^{-p/2})$ as $t\searrow 0$.
\end{lemma}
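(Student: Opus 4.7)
The plan is to reduce the operator statement to a scalar estimate via the functional calculus applied to the positive self-adjoint operator $\Dd^2$.

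First, I would prove the following elementary scalar inequality: for every $p > 0$ there is a constant $C_p$ such that
\[
(1+x)^{p/2} e^{-tx} \;\leq\; C_p\, t^{-p/2} \qquad \text{for all } x \geq 0,\ 0 < t \leq 1.
\]
This is a one-variable calculus exercise. Setting $f(x) := (1+x)^{p/2} e^{-tx}$, one finds a unique maximum at $x_\ast = p/(2t) - 1$ (for $t \leq p/2$; otherwise $f$ is decreasing and the bound $f(0)=1$ already suffices). Substituting yields $f(x_\ast) = (p/(2t))^{p/2} e^{-p/2 + t}$, which is $O(t^{-p/2})$ as $t \searrow 0$. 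Rewriting, this gives the pointwise bound $e^{-tx} \leq C_p t^{-p/2} (1+x)^{-p/2}$ on $[0,\infty)$ for $0 < t \leq 1$.

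Second, since $\Dd$ is self-adjoint and affiliated with $\Nn$, the operator $\Dd^2$ is positive and affiliated with $\Nn$, and both $e^{-t\Dd^2}$ and $(1+\Dd^2)^{-p/2}$ lie in $\Nn$ by functional calculus. Applying the Borel functional calculus of $\Dd^2$ to the scalar inequality of the previous step turns it into the operator inequality
\[
0 \;\leq\; e^{-t\Dd^2} \;\leq\; C_p\, t^{-p/2}\, (1+\Dd^2)^{-p/2} \qquad \text{for } 0 < t \leq 1.
\]
Now apply the faithful semi-finite normal trace $\tau$, using its positivity and monotonicity on positive operators:
\[
\tau\!\left(e^{-t\Dd^2}\right) \;\leq\; C_p\, t^{-p/2}\, \tau\!\left((1+\Dd^2)^{-p/2}\right).
\]
The right-hand side is finite by the $p$-summability hypothesis, which both gives finiteness of $\tau(e^{-t\Dd^2})$ for every $0 < t \leq 1$ and yields the claimed growth $\tau(e^{-t\Dd^2}) = O(t^{-p/2})$ as $t \searrow 0$.

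Finally, for $t > 1$ I would simply observe that $e^{-t\Dd^2} \leq e^{-\Dd^2}$ as positive operators (again by functional calculus, since $e^{-tx} \leq e^{-x}$ for $x \geq 0$ and $t \geq 1$), so $\tau(e^{-t\Dd^2}) \leq \tau(e^{-\Dd^2}) < \infty$, completing the proof of $\theta$-summability on all of $t > 0$. There is no real obstacle here: the only subtle point is making sure the operator inequality obtained from the scalar inequality is legitimate in the semi-finite setting, but this is immediate because $\Dd^2$ is affiliated with $\Nn$ and $\tau$ is a faithful normal trace, so monotonicity of $\tau$ on the positive cone of $\Nn$ applies verbatim.
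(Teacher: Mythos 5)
Your proof is correct and rests on the same analytic core as the paper's: the scalar estimate $\sup_{y\geq 0}(1+y)^{p/2}e^{-ty}=O(t^{-p/2})$, transferred to the operator $\Dd^2$ by functional calculus, combined with the trace-finiteness of $(1+\Dd^2)^{-p/2}$. The only mechanical difference is how the scalar bound is exported: the paper writes the multiplicative factorization $e^{-t\Dd^2}=\bigl[(1+\Dd^2)^{p/2}e^{-t\Dd^2}\bigr](1+\Dd^2)^{-p/2}$ and invokes $\mu_x(ST)\leq\lVert S\rVert\mu_x(T)$ together with $\tau(T)=\int_0^\infty\mu_x(T)\,dx$, whereas you derive the pointwise operator inequality $0\leq e^{-t\Dd^2}\leq C_p\,t^{-p/2}(1+\Dd^2)^{-p/2}$ and then use monotonicity of $\tau$ on the positive cone. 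These are two equivalent packagings of the same estimate, and both are valid in the semi-finite von Neumann setting; your version is arguably slightly cleaner because it avoids the singular-number machinery, while the paper's factorization foreshadows the $\lVert STR\rVert_1\leq\lVert S\rVert\lVert R\rVert\lVert T\rVert_1$-type arguments it uses repeatedly elsewhere. One cosmetic difference: the paper obtains a bound valid for all $t>0$ in a single formula $(p/2e)^{p/2}t^{-p/2}e^t$, which absorbs the $t>1$ case, whereas you handle $t>1$ by the separate observation $e^{-t\Dd^2}\leq e^{-\Dd^2}$; both are fine.
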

\begin{proof}
 We can write $e^{-t\Dd^2}=(1+\Dd^2)^{p/2}e^{-t\Dd^2}(1+\Dd^2)^{-p/2}$ with
$\tau((1+\Dd^2)^{-p/2})<\infty$ by hypothesis, and $(1+\Dd^2)^{p/2}e^{-t\Dd^2}$  bounded by
$\left\lVert (1+x^2)^{p/2}e^{-tx^2}\right\rVert _\infty=\left(\frac{p}{2e}\right)^{p/2}t^{-p/2}e^t$ by functional calculus. Hence
as a consequence of Proposition~\ref{muproperties} and Proposition~\ref{taupositive}, we have
\[
 \tau(e^{-t\Dd^2} )\leq \left(\frac{p}{2e}\right)^{p/2}t^{-p/2}e^t\tau((1+\Dd^2)^{-p/2})\mbox{ ,}
\]
which proves the lemma.
\end{proof}


To make the JLO character and other useful formulas easier to write down,
 we will define the JLO character in two steps.
We start with the following definition.


Let $\Delta_n:=\{(t_1,\ldots,t_n)\in \mathbb{R}^n ; 0\leq t_1 \leq \cdots \leq t_n \leq 1\}$ be the
standard $n$-simplex and $d^nt=dt_1\cdots dt_n$ is the standard Lesbeque measure on $\Delta_n$ with
volume $\frac{1}{n!}$.
\begin{definition} 
Let $\KCYCLE$ be a weakly $\theta$-summable unbounded Breuer-Fredholm module over $A$.
Given $F_0,\ldots,F_n$ operators affiliated with $\Nn$, define 
 \[
  \langle F_0,F_1,\ldots,F_n \rangle ^n _\Dd :=\int_{\Delta_n}\tau 
\left( \chi 
F_0 e^{-t_1 \mathcal{D}^2 }
F_1 e^{-(t_2-t_1) \Dd^2 }
\ldots
F_n e^{-(1-t_n) \Dd^2 } 
\right)d^nt \mbox{ ,}
 \]
where $\chi=1$ when $\Dd$ is even.
\end{definition}

Let $T$ be an operator affiliated with $\Nn$,
denote by $\lvert T \rvert_\chi$  
the degree of $T$ with respect to $\chi$.
Any operators that we will consider will be \emph{either} even or odd.
From here and on, the commutator $[\mbox{ } ,\mbox{ } ]$ is always graded with respect to $\chi$.

\begin{lemma}
Let  $F_0,\ldots,F_n$ be operators affiliated with $\Nn$
that are either even or odd,
 then
\label{misc}
\begin{enumerate}
 \item 
\[
 \left \langle F_0,\ldots,F_n \right\rangle ^n _\Dd= (-1)^{(|F_0|_\chi+\cdots+|F_{j-1}|_\chi)(|F_j|_\chi+\cdots+|F_n|_\chi)}
\left\langle F_j,\ldots,F_n,F_0,\ldots,F_{j-1} \right\rangle _\Dd^n
\mbox{ ;}
\]
\item 
\[
 \left\langle F_0,\ldots,F_n \right\rangle ^n_\Dd
=\sum^n_{j=0}\left\langle F_0,\ldots,1,F_{j},
\ldots,F_n \right\rangle ^{n+1}_\Dd
\mbox{ ;}
\]
\item 
\[
 \sum^n_{j=0}(-1)^{|F_0|_\chi+\cdots+|F_{j-1}|_\chi}
\left \langle F_0,\ldots,[\Dd,F_j],\ldots,F_n 
\right \rangle ^n _\Dd =0\mbox{ ;}
\]
\item 
\begin{eqnarray*}
\left \langle F_0,\ldots,[\Dd^2,F_j],\ldots,F_n\right\rangle ^n _\Dd &=&
\left\langle F_0,\ldots,F_{j-1}F_j,F_{j+1},\ldots,F_n \right\rangle ^{n-1}_\Dd
\\ && \hspace{1.5cm}-
\left\langle F_0,\ldots,F_{j-1},F_j F_{j+1},\ldots,F_n \right\rangle ^{n-1}_\Dd \mbox{ .}
\end{eqnarray*}
\end{enumerate}
\end{lemma}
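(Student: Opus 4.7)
The plan is to reduce all four identities to two basic ingredients: the graded cyclicity of the supertrace $\tau(\chi\,\cdot)$, which gives $\tau(\chi XY)=(-1)^{|X|_\chi|Y|_\chi}\tau(\chi YX)$ for homogeneous $X,Y$, and the fact that $\Dd$ (and hence $\Dd^2$) commutes with every exponential $e^{-s\Dd^2}$. Weak $\theta$-summability makes $e^{-s\Dd^2}$ trace-class for every $s>0$, so all integrands lie in $\mathcal{L}^1(\Nn,\tau)$ and the cyclic shuffles and differentiations under the integral below are legitimate.

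For part (1), I would use graded cyclicity to move the block $F_0 e^{-t_1\Dd^2}\cdots F_{j-1} e^{-(t_j-t_{j-1})\Dd^2}$ past $F_j e^{-(t_{j+1}-t_j)\Dd^2}\cdots F_n e^{-(1-t_n)\Dd^2}$; since the exponentials are even, the sign produced is precisely $(-1)^{(|F_0|_\chi+\cdots+|F_{j-1}|_\chi)(|F_j|_\chi+\cdots+|F_n|_\chi)}$. A cyclic relabelling of the $n+1$ subinterval lengths---a volume-preserving piecewise-linear bijection $\Delta_n\to\Delta_n$---then puts the integrand into the required form. For part (2), inserting a $1$ at position $j$ of the $(n+1)$-bracket merges the two neighbouring exponentials into $e^{-(t_{j+1}-t_{j-1})\Dd^2}$, so the integrand becomes independent of the inserted time $t_j$; integrating $t_j$ over $[t_{j-1},t_{j+1}]$ then produces a factor equal to the corresponding interval length, and summing over $j=0,\ldots,n$ collects all $n+1$ interval lengths, whose total is $1$, recovering $\langle F_0,\ldots,F_n\rangle^n_\Dd$.

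For part (3), because $\Dd$ is odd and commutes with each $e^{-s\Dd^2}$, the graded Leibniz rule expands
\[
\bigl[\Dd,\,F_0 e^{-t_1\Dd^2}\cdots F_n e^{-(1-t_n)\Dd^2}\bigr]
=\sum_{j=0}^n (-1)^{|F_0|_\chi+\cdots+|F_{j-1}|_\chi}\,F_0\cdots [\Dd,F_j]\cdots F_n e^{-(1-t_n)\Dd^2}.
\]
Taking $\tau(\chi\,\cdot)$ annihilates the left-hand side (the supertrace of a graded commutator with an odd element vanishes by graded cyclicity), and integration over $\Delta_n$ yields the claimed identity. For part (4), I would use the Duhamel-type identity
\[
e^{-(t_j-t_{j-1})\Dd^2}[\Dd^2,F_j]e^{-(t_{j+1}-t_j)\Dd^2}
=-\partial_{t_j}\!\left(e^{-(t_j-t_{j-1})\Dd^2}F_j e^{-(t_{j+1}-t_j)\Dd^2}\right)
\]
and the fundamental theorem of calculus on $t_j\in[t_{j-1},t_{j+1}]$; the two boundary contributions $F_j e^{-(t_{j+1}-t_{j-1})\Dd^2}$ and $e^{-(t_{j+1}-t_{j-1})\Dd^2}F_j$ combine with the neighbouring $F_{j-1}$ and $F_{j+1}$ respectively to produce the two $(n-1)$-brackets with the correct signs.

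The main technical obstacle is not any individual calculation but justifying graded cyclicity and differentiation under the integral sign with the unbounded $\Dd$ in play. Weak $\theta$-summability gives $e^{-s\Dd^2}\in\mathcal{L}^1(\Nn,\tau)$ for every $s>0$, and for the $F_j$ of interest in the sequel---namely $\rho(a)$ and $[\Dd,\rho(a)]$, both bounded by Definition~\ref{spectraltriple}---each intermediate product is trace-class, so the above manipulations stay safely inside $\mathcal{L}^1(\Nn,\tau)$.
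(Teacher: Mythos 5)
Your proposal is correct and follows essentially the same route as the paper for each of the four parts: graded cyclicity of $\tau(\chi\,\cdot)$ for (1); the subdivision of $\Delta_n\times[0,1]$ into $n+1$ copies of $\Delta_{n+1}$ for (2), which you phrase equivalently as integrating out the auxiliary simplex variable after the inserted $1$ merges the adjacent heat semigroups; the graded Leibniz rule plus $\tau(\chi[\Dd,\,\cdot\,])=0$ for (3); and the Duhamel identity $[e^{-\Dd^2},X]=-\int_0^1 e^{-s\Dd^2}[\Dd^2,X]e^{-(1-s)\Dd^2}\,ds$, rescaled to the subinterval $[t_{j-1},t_{j+1}]$, for (4). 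Both you and the paper gloss over the same mild technicalities (domains, cyclicity of $\tau$ against $e^{-s\Dd^2}$ when the $F_j$ are merely affiliated rather than bounded), so there is no gap attributable to your write-up specifically.
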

\begin{proof}\mbox{ }

 \begin{enumerate}
\item 
The statement follows from $\tau(\chi [X,Y])=0$ for $X,Y$ operators
affiliated with $\Nn$.
\item 
The left hand side can be regarded as $\int_0^1 \left\langle F_0,\ldots,F_n \right\rangle ^n_\Dd du$ by introducing
a trivial extra integration; the polyhedron $\Delta_n\times [0,1]$ can be subdivided by the inequalities
$t_j\leq u \leq t_{j+1}$ into $n+1$ simplices, each of which is a copy of $\Delta_{n+1}$;
integration over these simplices yield the terms on the right hand side.
\item 
By observing the Leibniz property of $[\Dd,\cdot]$ and
\[
 0=\tau\left(\chi[\Dd,F_0e^{-t_1\Dd^2}F_1e^{-(t_2-t_1)\Dd^2}\cdots F_n e^{-(1-t_n)\Dd^2} ] \right)\mbox{ ,}
\]
equality follows.
\item 
We first prove that\[
                    0=[e^{-\Dd^2},X]+\int_0^1 e^{-s\Dd^2}[\Dd^2,X]e^{-(1-s)\Dd^2}ds \mbox{ .}
                   \]
It comes from
\begin{eqnarray*}
 [e^{-\Dd^2},X]&=& 
\left.  e^{-s\Dd^2}X e^{-(1-s)\Dd^2} \right \rvert ^1_0 = \int_0^1 \frac{d }{ds}( e^{-s\Dd^2} X e^{-(1-s)\Dd^2} )ds\\
&=&\int_0^1 e^{-s\Dd^2} (-\Dd^2)X e^{-(1-s)\Dd^2} + e^{-s\Dd^2} X \Dd^2 e^{-(1-s)\Dd^2} ds \\
&=& - \int_0^1 e^{-s\Dd^2}[\Dd^2,X]e^{-(1-s)\Dd^2}ds \mbox{ .}
\end{eqnarray*}
Replacing $\Dd^2$ by $(t_{j+1}-t_j)\Dd^2$ and using the substitution $u=(t_{j+1}-t_j)s+t_j$, we obtain
\[                                                                                               
         0=[e^{-(t_{j+1}-t_j)\Dd^2},X]+\int_{t_j}^{t_{j+1}}
 e^{-(t_{j+1}-u)\Dd^2}[\Dd^2,X]e^{-(u-t_j)\Dd^2}du \mbox{ .}                                                        
                                        \]
Inserting this into the definition of $\langle F_0,\ldots,[\Dd^2,F_j],\ldots,F_n \rangle^n_\Dd$ gives the formula.
 \end{enumerate}
\end{proof}

\begin{definition}
\label{JLO}
\mbox{ }
\begin{enumerate}
\item 
 The  \textbf{odd JLO character}
 $\Ch^-(\Dd)\in C^-( A)$ 
of a weakly $\theta$-summable \textit{odd} 
unbounded Breuer-Fredholm module
$\KCYCLE$ is defined to be

\[
 \Ch^-(\Dd):=\sum_{k=0}^\infty \Ch^{2k+1}(\Dd)\mbox{ ,}
\]
\item
The \textbf{even JLO character} $\Ch^+(\Dd)\in C^+( A)$
of a weakly $\theta$-summable \textit{even} 
unbounded Breuer-Fredholm module
 $\KCYCLE$ is defined to be
\[
 \Ch^+(\Dd):=\sum_{k=0}^\infty \Ch^{2k}(\Dd)\mbox{ ,}
\]
where
 \[
  \left(\Ch^{n}(\Dd),(a_0,\ldots,a_{n})_n\right):=
\left\langle a_0,[\Dd,a_1],\ldots,[\Dd,a_{n}]\right\rangle^n _{\Dd} \mbox{ .}
 \]
\end{enumerate}
\end{definition}


\begin{theorem}
\label{jloisacocycle}
The JLO character $\Ch^\bullet(\Dd)$ is an entire cyclic cocycle in $\HE^\bullet( A)$.\newline
More specifically,
\[
 \Ch^\bullet(\Dd)\in C^\bullet _\omega( A)\hspace{0.3cm} \mbox{ and }\hspace{0.3cm}
(b+B)\Ch^\bullet(\Dd)=0\mbox{ .}
\]

\end{theorem}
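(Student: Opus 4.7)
The plan is to verify the two assertions separately: first that $\Ch^\bullet(\Dd)$ satisfies the entire growth condition of Definition~\ref{definitionentire}, and second that $(b+B)\Ch^\bullet(\Dd)=0$. For the growth bound, I would estimate the integrand defining $\Ch^n(\Dd)(a_0,\ldots,a_n)$ pointwise on $\Delta_n$ by a generalised H\"older inequality for the semi-finite trace $\tau$. Setting $s_0=t_1$, $s_j=t_{j+1}-t_j$, $s_n=1-t_n$ so that $\sum_j s_j=1$, and applying H\"older with conjugate exponents $p_j=1/s_j$, the heat operators collapse into a single factor $\prod_j\tau(e^{-\Dd^2})^{s_j}=\tau(e^{-\Dd^2})$. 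Weak $\theta$-summability gives $\tau(e^{-\Dd^2})<\infty$ because $t\mapsto\tau(e^{-t\Dd^2})$ is monotone non-increasing, so finiteness for some $t\in(0,1)$ forces finiteness at $t=1$. Combined with the axiomatic bound $\|[\Dd,a_j]\|\le C\|a_j\|$ and the volume $1/n!$ of $\Delta_n$, this yields $\|\Ch^n(\Dd)\|\le \tau(e^{-\Dd^2})\,C^n/n!$, from which Stirling's formula shows $\sum_n\Gamma(n/2)\|\Ch^n(\Dd)\|z^n$ is entire in $z$.

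For the cocycle identity, since $\Ch^\bullet(\Dd)$ is supported in a single parity it suffices to establish the termwise identities $b\Ch^n(\Dd)+B\Ch^{n+2}(\Dd)=0$ for each relevant $n$, together with the initial case ($b\Ch^0+B\Ch^2=0$ in the even case, and $B\Ch^1=0$ in the odd case, the latter following from trace cyclicity since $\Dd$ commutes with $e^{-\Dd^2}$). Expanding $b\Ch^n(\Dd)$ from the definition produces brackets of the form $\langle\ldots,[\Dd,a_ja_{j+1}],\ldots\rangle^n_\Dd$; the Leibniz rule splits each of these as $a_j[\Dd,a_{j+1}]+[\Dd,a_j]a_{j+1}$, and consecutive pairs of such terms recombine by Lemma~\ref{misc}(4) into length-$(n+1)$ brackets carrying a single $[\Dd^2,a_j]=\Dd[\Dd,a_j]+[\Dd,a_j]\Dd$. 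On the $B$ side, $B\Ch^{n+2}(\Dd)$ is symmetrised using the graded cyclic property Lemma~\ref{misc}(1) and then unfolded by the insertion-of-$1$ identity Lemma~\ref{misc}(2), producing exactly the length-$(n+1)$ brackets needed to compare with the $b$ side slot-by-slot. The residual terms explicitly involving $\Dd$ are annihilated by the graded Leibniz identity Lemma~\ref{misc}(3), and the whole expression telescopes to zero.

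The main obstacle I expect is organisational rather than conceptual: the cyclic boundary $b$, the $\chi$-graded commutators, and the cyclic shifts of Lemma~\ref{misc}(1) each contribute sign factors of the form $(-1)^{|F_0|_\chi+\cdots+|F_{j-1}|_\chi}$, and one must verify that these align so that the regrouped terms cancel rather than reinforce. The bookkeeping parallels Getzler-Szenes \cite{getzlereven} line for line; the only genuine modification is that the operator trace is replaced by the semi-finite trace $\tau$, but since $\tau$ is faithful, normal, and semi-finite and $\Dd$ is affiliated with $\Nn$, every step has a verbatim Type~II analogue. A minor technical point is justifying the interchange of $b$ and $B$ with the simplex integral, which is immediate from the uniform H\"older bound of the first paragraph via Fubini.
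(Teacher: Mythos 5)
Your proposal is correct and follows essentially the same route as the paper: a H\"older estimate against the heat kernel for entireness, then the standard Leibniz expansion of $b\Ch^n(\Dd)$ recombined via Lemma~\ref{misc}(4) into $\sum_j(-1)^{j-1}\langle a_0,\ldots,[\Dd^2,a_j],\ldots\rangle^{n+1}_\Dd$ and identified with $-B\Ch^{n+2}(\Dd)$ via Lemma~\ref{misc}(1)(2). For the entireness step the paper invokes Lemma~\ref{getzler} in full, but since every slot of $\Ch^n(\Dd)$ is a bounded operator, your direct $\delta=0$ H\"older bound (with $\tau(e^{-\Dd^2})<\infty$ extracted from weak $\theta$-summability by monotonicity of $t\mapsto\tau(e^{-t\Dd^2})$) is adequate and equivalent. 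The one point of divergence is organisational rather than substantive: the paper telescopes $\sum_j(-1)^{j-1}\langle a_0,\ldots,[\Dd^2,a_j],\ldots\rangle^{n+1}_\Dd$ directly using $[\Dd^2,a_j]=\Dd[\Dd,a_j]+[\Dd,a_j]\Dd$ and the commutation of $\Dd$ with the heat kernels, whereas you invoke Lemma~\ref{misc}(3) applied to $(a_0,[\Dd,a_1],\ldots,[\Dd,a_{n+1}])$; noting that the graded double commutator $[\Dd,[\Dd,a_j]]$ equals $[\Dd^2,a_j]$, this is precisely the same identity and the two arguments are interchangeable. Your explicit flagging of the degree-zero component $B\Ch^1(\Dd)=0$ in the odd case (immediate from cyclicity of $\tau$ and $[\Dd,e^{-\Dd^2}]=0$) is a detail the paper's exposition omits.
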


The following norm estimate will show that  $\Ch^\bullet(\Dd)$  is entire.

Whenever we have an operator affiliated with $\Nn$, we demand that it is
either even or odd with respect to $\chi$.

\begin{lemma}
\label{getzler}
Let $\KCYCLE$ be a weakly $\theta$-summable
unbounded Breuer-Fredholm module over $A$.
If $F_j$ and $R_j$ are operators in $\Nn$ for $j=0,\ldots,n$, and at most $k$ of the operators $F_j$ are non-zero,
then for $\varepsilon \in [0, 1)$,
 \[
  \left\lvert \left\langle F_0 |\Dd|^{1+\varepsilon} +R_0,\ldots,F_n |\Dd|^{1+\varepsilon}  +R_n \right\rangle ^n _\Dd \right \rvert
\leq
\left(\frac{2}{(1-\varepsilon)\delta e }\right)^k
 \frac{\tau \left( e^{ -(1-\delta)\Dd^2  } \right)}{(n-k)!}
\prod ^n _{j=0}
\left(\left\lVert F_j\right\rVert+\left\lVert R_j\right\rVert  \right)
 \]
where $0<\delta<\frac{1}{2e}$.

\end{lemma}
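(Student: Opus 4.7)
The plan is to expand by multilinearity, then for each expanded term absorb each occurrence of $|\Dd|^{1+\varepsilon}$ into a small $\delta$-fraction of the adjacent heat kernel, apply a generalized noncommutative H\"older inequality to what remains, and evaluate the resulting simplex integral as a Dirichlet integral. First I would use multilinearity of $\langle\cdot\rangle^n_\Dd$ in each slot to expand
\[
\left\langle F_0|\Dd|^{1+\varepsilon}+R_0,\ldots,F_n|\Dd|^{1+\varepsilon}+R_n\right\rangle^n_\Dd
\]
as a sum of $2^{n+1}$ terms indexed by subsets $S\subseteq\{0,\ldots,n\}$, with slot $j$ equal to $F_j|\Dd|^{1+\varepsilon}$ for $j\in S$ and $R_j$ otherwise. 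Only subsets $S$ contained in $J:=\{j:F_j\ne 0\}$ contribute, so $|S|=:k'\le k$.

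For a fixed surviving $S$, I would split every heat kernel appearing in the definition of $\langle\cdot\rangle^n_\Dd$ as $e^{-s_j\Dd^2}=e^{-\delta s_j\Dd^2}e^{-(1-\delta)s_j\Dd^2}$ (the two factors commute as functions of $\Dd^2$) and attach the $\delta$-piece on the right side of each operator slot. This yields $n+1$ bounded operators $B_0,\ldots,B_n$ interleaved with the $n+1$ remaining heat kernels $e^{-(1-\delta)s_j\Dd^2}$. For $j\in S$ the functional-calculus identity
\[
\sup_{x\in\mathbb{R}}|x|^{1+\varepsilon}e^{-\delta s_{j+1}x^2}=\left(\frac{1+\varepsilon}{2e\delta s_{j+1}}\right)^{(1+\varepsilon)/2}
\]
controls $\|B_j\|$ up to a factor of $\|F_j\|$, while for $j\notin S$ one has $\|B_j\|\le\|R_j\|$ trivially. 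I would then invoke the generalized noncommutative H\"older inequality with conjugate exponents $p_j=1/s_j$ on the $n+1$ heat kernel factors (so that $\sum_j 1/p_j=1$), using the spectral identity $\|e^{-(1-\delta)s_j\Dd^2}\|_{1/s_j}=\tau(e^{-(1-\delta)\Dd^2})^{s_j}$ to telescope the product of these $L^{1/s_j}$-norms to $\tau(e^{-(1-\delta)\Dd^2})$.

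The remaining simplex integral $\int_{\Delta_n}\prod_{j\in S}s_{j+1}^{-(1+\varepsilon)/2}\,d^nt$ is of Dirichlet type and evaluates to $\Gamma((1-\varepsilon)/2)^{k'}/\Gamma(n+1-k'(1+\varepsilon)/2)$. Combining $\Gamma((1-\varepsilon)/2)\le 2/(1-\varepsilon)$ (from $\Gamma(1+x)\le 1$ on $[0,1]$), the bound $\Gamma(n+1-k'(1+\varepsilon)/2)\ge (n-k)!$ (via monotonicity of $\Gamma$ on $[1,\infty)$ and $k'\le k$), and the elementary inequality $a^{(1+\varepsilon)/2}\le a$ for $a\ge 1$ applied to $a=(1+\varepsilon)/(2e\delta)$ (which exceeds $1$ since $\delta<1/(2e)$), I would assemble the per-subset bound
\[
\left(\frac{2}{(1-\varepsilon)\delta e}\right)^{k'}\frac{\tau(e^{-(1-\delta)\Dd^2})}{(n-k)!}\prod_{j\in S}\|F_j\|\prod_{j\notin S}\|R_j\|.
\]
Finally, using $\bigl(\tfrac{2}{(1-\varepsilon)\delta e}\bigr)^{k'}\le\bigl(\tfrac{2}{(1-\varepsilon)\delta e}\bigr)^k$ (the base exceeds one) together with $\sum_{S\subseteq J}\prod_{j\in S}\|F_j\|\prod_{j\notin S}\|R_j\|\le\prod_{j=0}^n(\|F_j\|+\|R_j\|)$, I would sum over $S$ to obtain the claimed estimate.

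The main obstacle is executing the split-and-absorb step cleanly and verifying the generalized noncommutative H\"older inequality in the semi-finite von Neumann algebra $\Nn$ for a product of $2n+2$ factors with carefully chosen $L^p$-exponents summing to one. The Dirichlet integral and the $\Gamma$-function bounds are then routine, though some care is required in edge cases (for instance when $n$ is only slightly larger than $k$, monotonicity of $\Gamma$ on $[1,\infty)$ must be invoked rather than on $[2,\infty)$).
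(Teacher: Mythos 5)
Your approach is essentially the paper's: split each heat kernel into a $\delta$-fraction to absorb $|\Dd|^{1+\varepsilon}$ via the $\sup$-norm bound $\sup_x|x|^{1+\varepsilon}e^{-\delta s x^2}=((1+\varepsilon)/(2e\delta s))^{(1+\varepsilon)/2}$, apply the generalized H\"older inequality with exponents $1/s_j$ so the trace-norms of the remaining heat factors telescope to $\tau(e^{-(1-\delta)\Dd^2})$, and evaluate the resulting Dirichlet-type simplex integral. The one structural difference is that you expand by multilinearity into $2^{n+1}$ terms indexed by $S\subseteq\{j:F_j\neq 0\}$ and bound each term separately, whereas the paper bounds the composite slot $(F_j|\Dd|^{1+\varepsilon}+R_j)e^{-s_j\Dd^2}$ directly, factoring out $(\|F_j\|+\|R_j\|)$ in one step; both routes are fine and lead to the same simplex integral.

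There is, however, a genuine gap in the $\Gamma$-function estimates. You bound the Dirichlet integral $\Gamma((1-\varepsilon)/2)^{k'}/\Gamma(n+1-k'(1+\varepsilon)/2)$ by combining $\Gamma((1-\varepsilon)/2)\leq 2/(1-\varepsilon)$ (correct: this is $\Gamma(1+(1-\varepsilon)/2)\leq 1$, which holds because $\Gamma\leq 1$ on $[1,2]$) with $\Gamma(n+1-k'(1+\varepsilon)/2)\geq (n-k)!$, the latter justified by ``monotonicity of $\Gamma$ on $[1,\infty)$.'' But $\Gamma$ is \emph{not} monotone on $[1,\infty)$ — it decreases on $[1,a_0]$ with $a_0\approx 1.4616$ and only increases on $[a_0,\infty)$. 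Concretely, for $n=k=k'=1$, $\varepsilon=0$ the argument is $\Gamma(3/2)=\sqrt{\pi}/2\approx 0.886<1=0!$, so this intermediate inequality fails. The final bound is nonetheless correct (here the Dirichlet integral equals $\Gamma(1/2)/\Gamma(3/2)=2$ and the claimed bound is also $2$), because the slack in $\Gamma(1/2)\approx 1.77\leq 2$ compensates — but your two bounds, asserted independently, do not combine to give this. A clean fix is to bundle them: writing $a=(1-\varepsilon)/2$, the claim reduces to $\Gamma(1+a)^{k'}\,\Gamma(n-k'+1)\leq\Gamma(n-k'+1+k'a)$, which follows from log-convexity of $\Gamma$ (e.g.\ $\Gamma(1+a)^{k'}\leq\Gamma(1+k'a)$ by convex combination of $1$ and $1+k'a$, together with $\Gamma(m)\Gamma(1+t)\leq\Gamma(m+t)$ for $m\geq 1$, $t\geq 0$). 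The paper itself simply asserts the integral bound without proof, so your attempt to supply the detail is welcome, but the monotonicity appeal as written is incorrect.

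Separately, a small slip: you apply $a^{(1+\varepsilon)/2}\leq a$ to $a=(1+\varepsilon)/(2e\delta)$, but the target in the lemma has denominator $\delta e$, not $2\delta e$, so you actually want $((1+\varepsilon)/(2e\delta))^{(1+\varepsilon)/2}\leq 1/(\delta e)$; this holds (since $(1+\varepsilon)/2\leq 1$ and $1/(\delta e)\geq 2\geq 1$), and is exactly the paper's first auxiliary estimate, but it is not an instance of $a^{(1+\varepsilon)/2}\leq a$.
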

For the purpose of future applications, Lemma~\ref{getzler} is slightly strengthened
 from the one in \cite{getzlereven}.
The proof in \cite{getzlereven} carries through to our setting with minor modications.
\begin{proof}
 From the generalized H\"{o}lder's inequality, Theorem~\ref{holder}(1), the following estimate holds:
\[
 |\tau(\chi T_0 \ldots T_n)| \leq \tau(|\chi T_0 \ldots T_n|) = \left\lVert\chi T_0 \ldots T_n\right\rVert_1
\leq\left\lVert T_0\right\rVert_{s_0^{-1}}\ldots\left\lVert T_n\right\rVert_{s_n^{-1}}
\]
if $s_0+\cdots+s_n=1$. Therefore,
\begin{eqnarray*}
\left\lvert \langle F_0|\Dd|^{1+\varepsilon}+R_0,\ldots,F_n|\Dd|^{1+\varepsilon}
+R_n\rangle \right\rvert
 \hspace{6cm} \\ \leq
\int_{\Delta_n}
\left\lVert(F_0|\Dd|^{1+\varepsilon}+R_0)e^{-s_0\Dd^2}\right\rVert_{s_0^{-1}} 
\cdots 
\left\lVert(F_n|\Dd|^{1+\varepsilon}+R_n)e^{-s_n\Dd^2}\right\rVert_{s_n^{-1}} d^ns \mbox{ .}
\end{eqnarray*}
For each $\left\lVert(F|\Dd|^{1+\varepsilon}+R)e^{-s\Dd^2}\right\rVert_{s^{-1}}$, observe that by using Proposition~\ref{pnormproduct}
and functional calculus
\begin{eqnarray*}
 \left\lVert F|\Dd|^{1+\varepsilon} 
e^{-s\Dd^2}\right\rVert_{s^{-1}} & \leq&
\left\lVert F\right\rVert \cdot \left\lVert |\Dd|^{1+\varepsilon} e^{-\delta s \Dd^2}\right\rVert\cdot
\left\lVert e^ {-s(1-\delta) \Dd^2 }\right\rVert
_{s^{-1}} \\
& \leq &
\left\lVert F\right\rVert \cdot 
\sup_{x \in \mathbb{R}} \left( |x|^{1+\varepsilon}e^{-\delta s x^2} \right)
\cdot
\left\lVert e^ {-s  (1-\delta) \Dd^2}\right\rVert
_{s^{-1}}
\end{eqnarray*}
and that
\begin{eqnarray*}
 \left\lVert Re^{-s\Dd^2}\right\rVert_{s^{-1}} &\leq&
\left\lVert R\right\rVert\cdot \left\lVert e^{-s\delta\Dd^2} \right\rVert \cdot \left\lVert e^ {-s(1-\delta)   \Dd^2}\right\rVert
_{s^{-1}}
\\&\leq&
\left\lVert R\right\rVert\cdot  
\sup_{x \in \mathbb{R}} \left(e^{-s\delta x^2} \right)
 \cdot \left\lVert e^ {-s (1-\delta)  \Dd^2}\right\rVert
_{s^{-1}} \mbox{ .}
\end{eqnarray*}
Since the function $|x|^{1+\varepsilon} e^{-\delta s x^2}$ is bounded by
 $\left(\frac{1+\varepsilon}{2\delta e s}\right)^{\frac{1+\varepsilon}{2}} $ and $ e^{-s\delta x^2} $
is bounded by $1$, we can put together
the above terms using Theorem~\ref{holder}(ii) and get that
\[
 \left\lVert (F|\Dd|^{1+\varepsilon}+R)e^{-s\Dd^2}\right\rVert _{s^{-1}} \leq
\left( \left(\frac{1+\varepsilon}{2\delta e s}\right)^{\frac{1+\varepsilon}{2}}
   \left\lVert F\right\rVert  + \left\lVert R\right\rVert \right)
\left( \tau (e^{-(1-\delta)\Dd^2} ) \right)^s \mbox{ .}
\]
Keeping in mind that at most $k$ of the $F_j$'s are non-zero,
 we get
\begin{eqnarray*}
 |\left\langle F_0 |\Dd|^{1+\varepsilon}  +R_0,\ldots,F_n |\Dd|^{1+\varepsilon}   +R_n \right\rangle ^n _\Dd| \hspace{4cm}
\\
\leq \tau (e^{-(1-\delta)\Dd^2} )\prod ^n _{j=0} 
(\left\lVert F_j\right\rVert+\left\lVert R_j\right\rVert ) 
   \left(\frac{1+\varepsilon}{2\delta e }\right)^{\frac{1+\varepsilon}{2}\cdot k}    
\int_{\Delta_n}
\left(s_0\ldots s_{k-1}\right)^{-\frac{1+\varepsilon}{2}} d^n s
\end{eqnarray*}


Along with the estimates
\begin{eqnarray*}
\left(\frac{1+\varepsilon}{2\delta e }\right)^{\frac{1+\varepsilon}{2}\cdot k}    
 \leq \left(\frac{1}{\delta e }\right)^{ k} 
\end{eqnarray*}  and  \begin{eqnarray*}
\int_{\Delta_n}
\left(s_0\ldots s_{k-1}\right)^{-\frac{1+\varepsilon}{2}} d^n s
\leq \left(\frac{2}{1-\varepsilon}\right)^k\frac{1}{(n-k)!} \mbox{ ,}
\end{eqnarray*} the proof is complete.
\end{proof}

The above norm estimate immediately implies that 
$\left\lVert \Ch^n(\Dd) \right\rVert < \frac{1}{n!} \tau(e^{-(1-\delta)\Dd^2}) C^n$.
Therefore,  $\Ch^\bullet(\Dd)$ is an entire cochain when  $\tau(e^{-(1-\delta)\Dd^2})<\infty$, which is exactly the
weakly $\theta$-summable condition.

\begin{proof}[Proof of Theorem~\ref{jloisacocycle}]
Lemma~\ref{getzler}
 guarantees that $ \Ch^\bullet(\Dd)$ is entire.
What remains to check is that  $ \Ch^\bullet(\Dd)$ is
$(b+B)$ closed. We adopted the computation in \cite{jlo} to
the Type II case.

We compute $\Ch^n(\Dd)$ paired with $b(a_0,\ldots,a_{n+1})_{n+1}$.
\begin{eqnarray*}
\left( \Ch^n(\Dd),b(a_0,\ldots,a_{n+1})_{n+1} \right)&=&
\left\langle 
a_0 a_1 ,[\Dd,a_2],\ldots, [\Dd,a_{n+1}]
\right\rangle ^n _{\Dd}
\\&&+ 
\sum_{j=1}^n (-1)^j 
\left\langle 
a_0, \ldots, [\Dd,a_j a_{j+1}],\ldots
\right\rangle ^n _{\Dd}
\\&&+ (-1)^{n+1}
\left\langle 
a_{n+1} a_0, [\Dd,a_1],\ldots, [\Dd,a_n]
\right\rangle ^n _{\Dd}
\\ &=&
\left\langle 
a_0 a_1 ,[\Dd,a_2],\ldots 
\right\rangle ^n _{\Dd}
- 
\left\langle 
a_0 ,a_1 [\Dd,a_2],\ldots 
\right\rangle ^n _{\Dd}
\\&&+ 
\sum_{j=2}^n (-1)^{j-1} \left(
\left\langle 
a_0, \ldots, [\Dd,a_{j-1} ]a_{j},\ldots
\right\rangle ^n _{\Dd} \right.
\\&& \hspace{2cm} \left.
+
\left\langle 
a_0, \ldots,a_{j} [\Dd,a_{j+1} ],\ldots
\right\rangle ^n _{\Dd}
\right)
\\&& +(-1)^{n}
\left(
\left\langle
a_0,[\Dd,a_1],\ldots,[\Dd,a_n]a_{n+1}
\right\rangle ^n _{\Dd} \right.
\\&& \hspace{2cm} \left.
-
\left\langle 
a_{n+1} a_0, [\Dd,a_1],\ldots, [\Dd,a_n]
\right\rangle ^n _{\Dd}\right)
\\ & \stackrel{\ref{misc}(4)}{=} &
\sum_{j=1}^{n+1} (-1)^{j-1} 
\left\langle
a_0, \ldots,[\Dd^2,a_j],\ldots 
\right\rangle ^{n+1} _{\Dd}\mbox{ .}
\end{eqnarray*}
The last term forms a telescope sum and reduces to
\[
\left\langle
a_0 \Dd, [\Dd,a_1],\ldots
\right\rangle ^{n+1}_{\Dd}
+ (-1)^n 
\left\langle
a_0 , [\Dd,a_1],\ldots, [\Dd,a_{n+1}]\Dd 
\right\rangle ^{n+1}_{\Dd}
= -
\left\langle
[\Dd,a_0],\ldots,[\Dd,a_{n+1}]
\right\rangle ^{n+1}_{\Dd} \mbox{ .}
\]
Now apply Lemma~\ref{misc}(1)(2), one checks that
\[
\left\langle
[\Dd,a_0],\ldots,[\Dd,a_{n+1}]
\right\rangle ^{n+1}_{\Dd}=\left( \Ch^{n+2}(\Dd),B(a_0,\ldots,a_{n+1})_{n+1}
\right)\mbox{ .}
\]
Therefore,
$
b\Ch^n(\Dd)=-B \Ch^{n+2}(\Dd)$  
and $(b+B)\Ch^\bullet(\Dd)=0$.
The proof is complete.
\end{proof}
As a result, the JLO character defines an entire cyclic
 cohomology class called the JLO class.



\subsection{Homotopy invariance of the JLO class}

In this section, we will show that the cohomology class
 given by the JLO character is homotopy invariant.
As a consequence,
the JLO character descends to a well-defined map 
from (semi-finite) $\KKEI$-homology to entire cyclic cohomology.
We follow closely to work by Getzler and Szenes \cite{getzlereven}.


\begin{definition}
Let $V$ be an operator affiliated with $\Nn$.
Define the \textbf{contraction}
 $\iota(V)$ by $V$ to be
\begin{eqnarray*}
\iota(V)\left\langle F_0, \ldots, F_n \right\rangle^n_\Dd
:=\sum _{k=0}^n 
\left. \left\langle F_0,\ldots,F_k,V,F_{k+1},\ldots,F_n\right\rangle
\right. ^{n+1}_\Dd \mbox{ .}
\end{eqnarray*}
\end{definition}
\begin{definition}
Let $V$ be an operator affiliated with $\Nn$ such that
it has the same degree as $\Dd$, 
i.e. $\lvert \Dd \rvert _\chi = \lvert V \rvert _\chi$.
Define $\hCh^\bullet(\Dd,V)$ to be
given by the equation
 \begin{eqnarray*}
\lefteqn{ \left( \hCh^n(\Dd,V),(a_0,\ldots,a_n)_n \right)}
\\
&:=&
\sum ^{n+1} _{j=1} (-1) ^{j}  
\left\langle a_0,
[\Dd,a_1],\ldots,[\Dd,a_{j-1}],V
,\ldots,[\Dd,a_n]\right\rangle ^{n+1}_\Dd \mbox{ .}
 \end{eqnarray*}
\end{definition}

\begin{theorem}
\label{123}
Let $\KCYCLE$ be a weakly $\theta$-summable unbounded
Breuer-Fredholm module.
  \begin{enumerate}
\item 
$\hCh^\bullet (\Dd,V)$ is an entire cochain if $V=F|\Dd|^{1+\varepsilon}+R$ 
where $0\leq \varepsilon<1$, $F$ and $R$ are operators in $\Nn$.
\item 
Let $V$ be an operator affiliated with $\Nn$ such that
it has the same degree as $\Dd$, 
i.e. $\lvert \Dd \rvert _\chi = \vert V \rvert _\chi$.
Then
\begin{eqnarray}
 b\hCh^{n-1}(\Dd,V)+B\hCh^{n+1}(\Dd,V)
=
-\iota(\Dd V + V\Dd)
 \Ch^n(\Dd)
+
\alpha ^n (\Dd,V)
    \mbox{ ,} 
\end{eqnarray}
where 
 $\alpha^n (\Dd,V)$ is defined to be
\[
 \left( \alpha^n (\Dd,V), (a_0,\ldots,a_n) \right)
:=
\sum_{j=1} ^n 
\left \langle
a_0,[\Dd,a_1],\ldots,[V,a_j],\ldots,[\Dd,a_n]\right\rangle ^n _\Dd \mbox{ ,}
\]
 \end{enumerate}
\end{theorem}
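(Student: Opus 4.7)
For part (1), the proof is essentially immediate from the norm estimate in Lemma~\ref{getzler}. The cochain $\hCh^n(\Dd, V)$ is a signed sum of $n+1$ bracket expressions $\langle a_0, [\Dd,a_1], \ldots, V, \ldots, [\Dd,a_n]\rangle^{n+1}_\Dd$, each with $n+2$ entries. All entries except the single $V$ entry are bounded operators in $\Nn$. Writing $V = F|\Dd|^{1+\varepsilon} + R$ and the other entries as $0 \cdot |\Dd|^{1+\varepsilon} + R_j$ puts the bracket in the form covered by Lemma~\ref{getzler} with $k=1$, and one obtains a bound of the shape $\|\hCh^n(\Dd,V)\| \leq \tfrac{C^n}{(n-1)!}\,\tau(e^{-(1-\delta)\Dd^2})$, which gives the required entire growth.

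For part (2), the strategy is to mimic, with $V$ inserted, the computation used in the proof of Theorem~\ref{jloisacocycle}. The plan is to pair $b\hCh^{n-1}(\Dd,V) + B\hCh^{n+1}(\Dd,V)$ with an arbitrary chain $(a_0,\ldots,a_n)_n$ and reduce everything to expressions of the form $\langle \ldots \rangle^{n+1}_\Dd$ by manipulating the brackets. First I would expand $b\hCh^{n-1}(\Dd,V)$ using the definition of $b$ and split each resulting bracket over the positions where $V$ can sit. Then, using the Leibniz rule $[\Dd, a_j a_{j+1}] = a_j[\Dd,a_{j+1}] + [\Dd,a_j]a_{j+1}$, each sum becomes a telescope whose interior collapses via Lemma~\ref{misc}(4) into brackets containing a single $[\Dd^2, a_j]$ entry.

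The telescope leaves two types of boundary contributions. The first type consists of terms of the form $\langle \ldots, a_0\Dd, \ldots \rangle$ and $\langle \ldots, [\Dd,a_n]\Dd, \ldots \rangle$ together with the $V$ entry sitting somewhere among them; cyclically shifting via Lemma~\ref{misc}(1) these assemble, exactly as in Theorem~\ref{jloisacocycle}, into $-B\hCh^{n+1}(\Dd,V)$ plus terms where $V$ sits adjacent to a factor $\Dd$. Using the identity $\Dd V + V\Dd$ (the graded anti-commutator, since $V$ and $\Dd$ have the same $\chi$-degree), these latter boundary terms combine into a single bracket of the form $\langle \ldots, \Dd V + V\Dd, \ldots\rangle^{n+1}_\Dd$ inserted at each position, which is precisely $-\iota(\Dd V + V\Dd)\Ch^n(\Dd)$. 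The second type of boundary contribution arises at the \emph{special} positions where the telescope in the Leibniz expansion meets $V$: there an extra term $[V, a_j]$ is left over instead of a pure telescope cancellation, and summing over $j$ and over the position of $V$ produces exactly $\alpha^n(\Dd,V)$.

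The main obstacle is purely combinatorial: one must track precisely where $V$ sits relative to the index $j$ being acted on by $b$, and cleanly separate the cases ``$V$ to the left of $[\Dd, a_j a_{j+1}]$'', ``$V$ to the right'', and ``$V$ coincides with the position of $a_j$ or $a_{j+1}$'', keeping careful control of signs from the graded commutator $[V, a_j]$ and from the cyclic shifts of Lemma~\ref{misc}(1). Provided these sign bookkeeping issues are handled, the pieces collected above reassemble into the two terms on the right-hand side of the claimed formula, completing the proof.
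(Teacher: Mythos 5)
Your proposal follows essentially the same approach as the paper. For part~(1) you correctly invoke Lemma~\ref{getzler} with $k=1$ (only the $V$-slot contributes an unbounded factor), yielding a bound of the order $C^n/(n-1)!$ which gives entire growth; for part~(2) you correctly identify the strategy as a $V$-decorated version of the proof of Theorem~\ref{jloisacocycle} — expand $b\hCh^{n-1}(\Dd,V)$, telescope the Leibniz-expanded $[\Dd,a_ka_{k+1}]$ terms through Lemma~\ref{misc}(4), and sort the surviving pieces into the graded commutator $[V,a_j]$ contributions (yielding $\alpha^n(\Dd,V)$), the $\Dd V + V\Dd$ insertions (yielding $-\iota(\Dd V+V\Dd)\Ch^n(\Dd)$), and the $[\Dd,a_0]$-leading brackets which reassemble via Lemma~\ref{misc}(1)(2) into $-B\hCh^{n+1}(\Dd,V)$ — which is exactly the paper's route via the auxiliary cochains $E_j$.
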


\begin{proof}\mbox{}
 \begin{enumerate} 
%
 \item 
From Lemma~\ref{getzler} we have that
 \[
                       \left\lVert
\hCh^n(\Dd,V)\right\rVert
\leq
\left(\frac{2}{(1-\varepsilon)\delta e}\right)
\frac{(n+1)}{n!}
\tau (e^{-(1-\varepsilon)\Dd^2}) C^n
\]
Therefore, 
\[
 \sum_{n=0}^\infty \Gamma(\frac{n}{2})\left\lVert \hCh^{n}(\Dd,V)\right\rVert z^{n}
\]
defines an entire function in $z$ and 
$\hCh^\bullet(\Dd,V)$ is entire.
\item 
Recall that 
\begin{eqnarray*}
\lefteqn{
\left(\hCh^{n-1}(\Dd,V),(b_0,\ldots,b_{n-1})_{n-1} \right)
}\\&=&\sum_{j=1}^{n} (-1)^{j-1} 
\langle b_0, \ldots,[\Dd,b_{j-1}],V
,[\Dd,b_{j}],
\ldots, [\Dd,b_{n-1}] \rangle ^n _{\Dd} \mbox{ .}
\end{eqnarray*}
Denote by $E_j$ the cochain
 \begin{eqnarray*}
\left( E_j , (b_0,\ldots,b_{n-1})_{n-1} \right)
&:=& \langle b_0, \ldots,[\Dd,b_{j-1}],V
,[\Dd,b_{j}],
\ldots, [\Dd,b_{n-1}] \rangle ^n _{\Dd} \mbox{ ,}
\end{eqnarray*}
so that 
\[\hCh^{n-1}(\Dd,V)= \sum_{j=1}^n (-1)^{j} E_j \mbox{ .}\]

First we compute $E_j$ paired with $b(a_0, \ldots,a_n)_n$:
\begin{eqnarray*}
\left( E_j , b(a_0, \ldots,a_n)_n\right)
&=&
\left\langle a_0 a_1, \ldots,[\Dd,a_{j}],V
,[\Dd,a_{j+1}],
\ldots, [\Dd,a_{n}] 
\right\rangle ^{n} _{\Dd} \\
&&
+\sum_{k=1}^{j-1} (-1)^k
\left\langle a_0 , \ldots, [\Dd,a_k a_{k+1}]  ,\ldots,[\Dd,a_j],V,
\ldots, [\Dd,a_{n}] 
\right\rangle ^{n} _{\Dd} \\
&& +\sum_{k=j}^{n-1} (-1)^k
\left\langle a_0 , \ldots,
V,[\Dd,a_{j}],
\ldots, [\Dd,a_k a_{k+1}]  ,\ldots, [\Dd,a_{n}] 
\right\rangle ^{n} _{\Dd} \\
&& + (-1)^n
\left\langle a_n a_0 , \ldots,
V,[\Dd,a_{j}],
\ldots, [\Dd,a_{n-1}] 
\right\rangle ^{n} _{\Dd} \mbox{ .}
\end{eqnarray*}
By expanding the $[\Dd,a_k a_{k+1}]$ terms using the Leibniz rule
and re-ordering the sum, we get
\begin{eqnarray*}
\lefteqn{
\left( E_j , b(a_0, \ldots,a_n)_n\right)}\\
&=&
\left\langle a_0 a_1, \ldots,[\Dd,a_{j}],V,
\ldots
\right\rangle ^{n} _{\Dd}-
\left\langle a_0 ,a_1[\Dd,a_2], \ldots,[\Dd,a_{j}],V,
\ldots
\right\rangle ^{n} _{\Dd}\\
&&+
\sum_{k=2}^{j} (-1)^{k-1}
\left(
\left\langle
\ldots,[\Dd,a_{k-1}]a_k,\ldots,V ,\ldots
\right\rangle ^n _{\Dd}
-\left\langle
\ldots,a_k[\Dd,a_{k+1}],\ldots,V ,\ldots
\right\rangle ^n _{\Dd}
\right) \\
&&
+ (-1)^{j-1}
\left\langle
\ldots, a_j V,\ldots
\right\rangle ^{n}_{\Dd} 
 - (-1)^{j-1}
\left\langle
\ldots,  V a_j,\ldots
\right\rangle ^{n}_{\Dd} \\
&& +
\sum_{k=j}^{n-1}(-1)^{k-1}
\left(
\left\langle 
\ldots,V, \ldots, [\Dd,a_{k-1}]a_k,\ldots 
\right\rangle ^n _{\Dd}
-
\left\langle 
\ldots,V,\ldots, a_k [\Dd,a_{k+1}],\ldots
\right\rangle ^n _{\Dd} \right)\\
&&+
(-1)^{n-1} \left(
\left\langle 
\ldots,V, \ldots, [\Dd,a_{n-1}]a_n
\right\rangle ^n _{\Dd}
- 
\left\langle 
a_n a_0,\ldots,V, \ldots, [\Dd,a_{n-1}]
\right\rangle ^n _{\Dd}\right)\mbox{ .}
\end{eqnarray*}
We are now in the setting to apply Lemma~\ref{misc}(4) to obtain
\begin{eqnarray*}
\left(  E_j , b(a_0, \ldots,a_n)_n\right) 
&=&
(-1)^j \left\langle \ldots,[V,a_j],\ldots  \right\rangle^n_{\Dd}\\
&& + \sum_{k=1}^{j} (-1)^{k-1}
\left\langle 
a_0,\ldots,[\Dd^2,a_k],\ldots,V,\ldots
\right\rangle ^{n+1}_{\Dd}
\\ &&+
\sum_{k=j}^{n} (-1)^{k-1}
\left\langle 
a_0,\ldots,V,\ldots,[\Dd^2,a_k],\ldots
\right\rangle ^{n+1}_{\Dd} \mbox{ .}
\end{eqnarray*}
From the facts that $[\Dd^2,a_k]=\Dd[\Dd,a_k]+[\Dd,a_k]\Dd$ and
$\Dd$ commutes with $e^{-s_k \Dd^2}$, one observes the above
forms a telescope sum and reduces to the following.
\begin{eqnarray*}
\lefteqn{
\left( E_j , b(a_0, \ldots,a_n)_n\right)}\\
&=&
(-1)^j \left\langle \ldots,[V,a_j],\ldots  \right\rangle^n_{\Dd}
+
\left \langle  a_0\Dd,\ldots,[\Dd,a_j],V,\ldots \right\rangle ^{n+1}_{\Dd}
\\&& + (-1)^{j-1}
 \left \langle \ldots,[\Dd,a_j] , \Dd V ,\ldots \right\rangle ^{n+1}_{\Dd}
+ (-1)^{j-1}
\left \langle \ldots,   V\Dd, [\Dd,a_j] ,\ldots \right\rangle ^{n+1}_{\Dd}
\\&&+
 (-1)^{n-1}\left\langle \ldots,V,[\Dd,a_j],\ldots,[\Dd,a_n]\Dd\right \rangle ^{n+1}_{\Dd}\\
&=&
(-1)^j \left\langle \ldots,[V,a_j],\ldots  \right\rangle^n_{\Dd}
+
\left \langle  a_0\Dd,\ldots,[\Dd,a_j],V,\ldots \right\rangle ^{n+1}_{\Dd}
\\&&+
 \left\langle \Dd a_0, \ldots,V,[\Dd,a_j],\ldots,[\Dd,a_n]\right \rangle ^{n+1}_{\Dd}
\\&& + (-1)^{j-1}
 \left \langle \ldots,[\Dd,a_j] , \Dd V ,\ldots \right\rangle ^{n+1}_{\Dd}
+ (-1)^{j-1}
\left \langle \ldots,   V\Dd, [\Dd,a_j] ,\ldots \right\rangle ^{n+1}_{\Dd}
\mbox{ .}
\end{eqnarray*}
Now we sum over $j$ with the appropriate sign:
\begin{eqnarray}
\lefteqn{
\left(\sum_{j=1}^n (-1)^{j}E_j , b(a_0, \ldots,a_n)_n\right)}\\
&=& \label{uselesseqn1}
 \sum_{j=1}^n \left\langle \ldots,[V,a_j],\ldots  \right\rangle^n_{\Dd}
\\&& -\label{uselesseqn2}
\sum_{j=0}^{n} 
 \left \langle \ldots,[\Dd,a_j] , \Dd V +V\Dd,\ldots \right\rangle ^{n+1}_{\Dd}
\\&& -\label{uselesseqn3}
\sum_{j=0}^{n} (-1)^{j}
 \left\langle [\Dd ,a_0] ,\ldots,[\Dd,a_{j}],V,\ldots,[\Dd,a_n] \right \rangle ^{n+1}_{\Dd} \mbox{ .}
\end{eqnarray}
Equations ~\eqref{uselesseqn1} and ~\eqref{uselesseqn2}
give $\alpha^n(\Dd,V)$ and $-\iota(\Dd V+V\Dd)\Ch^n(\Dd)$
respectively.
By using Lemma~\ref{misc}(1)(2), each summand in
 Equation~\eqref{uselesseqn3} 
can be written as
\begin{eqnarray}
\lefteqn{ (-1)^{j}
 \left\langle [\Dd ,a_0] ,\ldots,[\Dd,a_j] ,V,\ldots
\right \rangle ^{n+1}_{\Dd}}\\
&\stackrel{\ref{misc}(2)}{=}&
(-1)^j
\sum_{k=0}^{j+1}
\left\langle 
\ldots,[\Dd,a_{k-1}],1,
 \ldots, V ,\ldots
\right\rangle ^{n+2}_{\Dd}
\\&&+ (-1)^j
\sum_{k=j+1}^{n}
\left\langle 
\ldots,
 V ,\ldots,1,[\Dd,a_k] , \ldots
\right\rangle ^{n+2}_{\Dd}
\\&\stackrel{\ref{misc}(1)}{=}&
\label{moreuselesseqn1}
\sum_{k=0}^{j+1}
\left(
(-1)^{j+2-k}
E_{j+2-k} , (-1)^{nk} (1,a_k,\ldots, a_{k-1})_{n+1}
\right)
\\&&  \label{moreuselesseqn2} +
\sum_{k=j+1}^{n}
\left(
(-1)^{n-k+j+3} E_{n-k+j+3}, (-1)^{nk} (1,a_k,\ldots, a_{k-1})_{n+1}
\right) \mbox{ .}
\end{eqnarray}
Now we sum over  $j=0$ to $n$, and do 
  a change of indices of $i=j-k+1$ 
for Equation~\eqref{moreuselesseqn1} and $i=n-k+j+1$
for Equation~\eqref{moreuselesseqn2}, 
then Equation~\eqref{uselesseqn3} becomes
\begin{eqnarray*}
\lefteqn{\sum_{i=0}^n \left( (-1)^{i+1}E_{i+1},\sum_{k=0}^{n-i}
(-1)^{nk}
(1,a_k,\ldots,a_{k-1})_{n+1}\right)} \\&&
+
\sum_{i=0}^n \left( (-1)^{i+2}E_{i+2},\sum_{k=n-i}^{n}
(-1)^{nk}(1,a_k,\ldots,a_{k-1})_{n+1}\right) \mbox{ ,}
\end{eqnarray*}
which equals $\left(\hCh^{n+1}(\Dd,V),B(a_0,\ldots,a_n)_n \right)$.
Hence we have obtained 
\[
b\hCh^{n-1}(\Dd,V)=-\iota(\Dd V + V\Dd) \Ch^n(\Dd)+\alpha^n(\Dd,V)
- B\hCh^{n+1}(\Dd,V),
\]
which is the desired result.
\end{enumerate}
\end{proof}

Suppose that $\Dd_t$ is a $t$-parameter family of operators so that it
defines a differentiable family of  weakly $\theta$-summable
unbounded Breuer-Fredholm modules $\left(\rho,\Nn,\Dd_t  \right)$.
Namely, $\Dd_t$  is a $t$-parameter family of self-adjoint operators on $\Hh$ 
with common domain of definition
so that the following is satisfied:
\begin{itemize}
\item
$\Dd_t$ is  affiliated with $\Nn$ for all $t$,
\item
For all
 $a\in A$, $[\Dd_t , \rho(a) ]$ is 
a  norm-differentiable family of operators in  $  \Nn$  ,
 and 
there is a constant $C$ for each compact interval such that
$ \left\lVert [\Dd_t,\rho(a)]\right\rVert \leq C \left\lVert a\right\rVert$,
\item
$(1+\Dd_t^2)^{-1/2}$ is a  norm-differentiable
 family of operators in $\Kk$,
\item
There exists a $u\in(0,1)$ such that
$\tau(e^{-u\Dd_t^2})$ is  bounded for each compact interval.
 \end{itemize}

If $(\rho,\Nn,\Dd_t)$ is equipped with a $\mathbb{Z}_2$ grading 
$\chi\in\Nn$ so that $\rho(a)$ is even for all $a\in A$ and
$\Dd_t$ is odd for all $t$, then similarly we call the
family of Breuer-Fredholm modules $(\rho,\Nn,\Dd_t)$ \emph{even}.

The differentiable families of unbounded operators in our discussion will
 often be  ``functions'' of $\Dd$, hence we do not alter the spectral projections.
For more general notions of differentiable family of unbounded operators, readers may refer to
\cite{diffdirac}.


\begin{lemma}[\cite{oddjloT2}]
Let $(\rho,\Nn,\Dd_t)$
be a differentiable family of weakly $\theta$-summable
unbounded Breuer-Fredholm modules, and
 $F_0,\ldots,F_n$ be operators affiliated with $\Nn$, then
 \label{duhammel}
\[
\frac{d}{dt}\left\langle F_0,\ldots,F_n \right \rangle ^n _{\Dd_t}
= -
\sum _j ^n \left\langle F_0,\ldots,F_j,
 \Dd_t \dot{\Dd_t} + \dot{\Dd_t}\Dd_t
,F_{j+1},\ldots,F_n 
\right\rangle ^{n+1} _{\Dd_t}
\mbox{ .}\]
\end{lemma}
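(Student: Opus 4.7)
The plan is to reduce the identity to the Duhamel expansion for the heat semigroup. Writing $A_t := \Dd_t^2$, the hypotheses on the family $\Dd_t$ yield the quadratic-form derivative $\dot{A}_t = \Dd_t \dot{\Dd}_t + \dot{\Dd}_t \Dd_t$ and, for each $s > 0$, the operator identity
\[
\frac{d}{dt} e^{-sA_t} = -\int_0^s e^{-rA_t}\, \dot{A}_t\, e^{-(s-r)A_t}\, dr.
\]
This is essentially the only analytic input beyond what is already in the paper.

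First I would write out $\langle F_0, \ldots, F_n \rangle^n_{\Dd_t}$ as an integral over $\Delta_n$ with integrand involving the $n+1$ heat factors $e^{-s_0 \Dd_t^2}, \ldots, e^{-s_n \Dd_t^2}$, where $s_0 = t_1$, $s_j = t_{j+1} - t_j$ for $1 \le j \le n-1$, and $s_n = 1 - t_n$. By the product rule, differentiating in $t$ under the trace and under the simplex integration produces a sum of $n+1$ terms, one per heat factor. Applying the Duhamel formula to the $j$-th factor introduces an extra integration variable $r \in [0, s_j]$ and inserts $-\dot{A}_t$ between $F_j$ and $F_{j+1}$. Setting $u := t_j + r$, so that $t_j \le u \le t_{j+1}$ (with boundary conventions $t_0 = 0$, $t_{n+1} = 1$), the combined domain is precisely $\Delta_{n+1}$ and the $j$-th term is recognized as
\[
-\langle F_0, \ldots, F_j,\ \Dd_t \dot{\Dd}_t + \dot{\Dd}_t \Dd_t,\ F_{j+1}, \ldots, F_n \rangle^{n+1}_{\Dd_t}.
\]
Summing over $j = 0, \ldots, n$ yields the claimed formula.

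The main obstacle is justifying term-by-term differentiation under the trace: since $\dot{\Dd}_t$ need not be bounded, the unbounded product $\Dd_t\dot{\Dd}_t$ must be controlled by the surrounding heat semigroups before any trace estimate can be applied. Here the weakly $\theta$-summable hypothesis, combined with the generalized H\"older and heat-kernel estimates used in the proof of Lemma~\ref{getzler} (with $\dot{A}_t$ playing the role of an operator of the form $F|\Dd|^{1+\varepsilon} + R$ for $\varepsilon = 0$), gives $L^1$-bounds on all relevant integrands that are uniform on compact $t$-intervals; this legitimizes both the derivative under the integral sign and the Duhamel expansion via dominated convergence.
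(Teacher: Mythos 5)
The paper itself gives no proof of this lemma, citing only Carey--Phillips \cite{oddjloT2}, so there is no internal argument to compare against; one can only assess whether your outline is sound. It is. The Duhamel identity $\frac{d}{dt}e^{-sA_t}=-\int_0^s e^{-rA_t}\dot A_t e^{-(s-r)A_t}\,dr$, applied factor-by-factor and then unfolded into a $\Delta_{n+1}$-integral by the change of variables $u=t_j+r$, is exactly the standard route to this transgression formula, and it is the same mechanism the paper uses in Lemma~\ref{misc}(4), where the analogous identity $0=[e^{-\Dd^2},X]+\int_0^1 e^{-s\Dd^2}[\Dd^2,X]e^{-(1-s)\Dd^2}\,ds$ performs the commutator version of the same trick. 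Your identification of the combined $\Delta_n\times[0,s_j]$ domain with $\Delta_{n+1}$ and the bookkeeping of the $n+1$ insertion slots $j=0,\ldots,n$ are both correct.

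One point deserves a bit more care than your sketch gives. You appeal to Lemma~\ref{getzler} by saying $\dot A_t$ plays the role of $F|\Dd|^{1+\varepsilon}+R$ with $\varepsilon=0$, but $\dot A_t=\Dd_t\dot\Dd_t+\dot\Dd_t\Dd_t$ does not literally have this form: the summand $\Dd_t\dot\Dd_t$ carries $|\Dd_t|$ on the wrong side. What actually saves the estimate is that in the Duhamel integrand $\dot A_t$ sits between two heat factors $e^{-r\Dd_t^2}$ and $e^{-(s_j-r)\Dd_t^2}$, so one absorbs $\Dd_t$ into whichever heat factor is adjacent: $\Dd_t\dot\Dd_t$ is controlled by pairing $\Dd_t$ with $e^{-r\Dd_t^2}$ on the left, and $\dot\Dd_t\Dd_t$ by pairing $\Dd_t$ with $e^{-(s_j-r)\Dd_t^2}$ on the right, exactly as in the proof of Lemma~\ref{getzler}. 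Stated that way, the $L^1$-bound, uniform on compact $t$-intervals (using the local uniform bound on $\tau(e^{-u\Dd_t^2})$ built into the definition of a differentiable family), legitimizes both the differentiation under the trace and the passage to the limit in Duhamel. With that clarification, your argument is complete.
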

\begin{theorem}
\label{cobound}
 If $\dot{\Dd_t}=F_t|\Dd_t|^{1+\varepsilon}+R_t$
 for $0\leq \varepsilon <1$ and 
 $F_t, R_t \in \Nn$ are continuous families of operators 
that are uniformly bounded in $t$ then
$\hCh^\bullet(\Dd_t,\dot{\Dd_t})$ is an entire cochain and for every $n$
\[
  \frac{d \Ch^n(\Dd_t)}{dt} = b \hCh^{n-1}(\Dd_t,\dot{\Dd_t})+B\hCh^{n+1}(\Dd_t,\dot{\Dd_t})\mbox{ .}
\]

\end{theorem}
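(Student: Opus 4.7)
The entire cochain claim is immediate from Theorem~\ref{123}(1) applied pointwise in $t$: the hypothesis $\dot{\Dd_t}=F_t|\Dd_t|^{1+\varepsilon}+R_t$ is exactly the structural form required there, so $\hCh^\bullet(\Dd_t,\dot{\Dd_t})$ is entire for each $t$. The uniform-in-$t$ bounds on $F_t$ and $R_t$ promote this to a bound that is uniform on compact $t$-intervals, which will also be needed to justify differentiation under trace and integral below.

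For the transgression equality, the plan is to differentiate
\[
\bigl(\Ch^n(\Dd_t),(a_0,\ldots,a_n)_n\bigr)=\bigl\langle a_0,[\Dd_t,a_1],\ldots,[\Dd_t,a_n]\bigr\rangle^n_{\Dd_t}
\]
from the definition and then invoke Theorem~\ref{123}(2) with $V=\dot{\Dd_t}$. The $t$-dependence sits in two independent places: in the commutator entries $[\Dd_t,a_j]$, and in the heat semigroup factors $e^{-s\Dd_t^2}$ implicit in $\langle\cdot\rangle^n_{\Dd_t}$. Leibniz differentiation of the entries gives
\[
\sum_{j=1}^{n}\bigl\langle a_0,[\Dd_t,a_1],\ldots,[\dot{\Dd_t},a_j],\ldots,[\Dd_t,a_n]\bigr\rangle^n_{\Dd_t}=\bigl(\alpha^n(\Dd_t,\dot{\Dd_t}),(a_0,\ldots,a_n)_n\bigr),
\]
exactly the definition of $\alpha^n$. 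Differentiation of the heat-kernel factors is controlled by Lemma~\ref{duhammel}, which yields
\[
-\sum_{j=0}^{n}\bigl\langle a_0,[\Dd_t,a_1],\ldots,[\Dd_t,a_j],\Dd_t\dot{\Dd_t}+\dot{\Dd_t}\Dd_t,[\Dd_t,a_{j+1}],\ldots,[\Dd_t,a_n]\bigr\rangle^{n+1}_{\Dd_t},
\]
which is precisely $-\iota(\Dd_t\dot{\Dd_t}+\dot{\Dd_t}\Dd_t)\,\Ch^n(\Dd_t)$. Adding the two contributions gives
\[
\frac{d}{dt}\Ch^n(\Dd_t)=\alpha^n(\Dd_t,\dot{\Dd_t})-\iota(\Dd_t\dot{\Dd_t}+\dot{\Dd_t}\Dd_t)\,\Ch^n(\Dd_t),
\]
and Theorem~\ref{123}(2) rewrites the right-hand side as $b\hCh^{n-1}(\Dd_t,\dot{\Dd_t})+B\hCh^{n+1}(\Dd_t,\dot{\Dd_t})$, completing the identification.

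The main technical obstacle is justifying that the formal Leibniz/Duhamel differentiation above is valid inside $\tau$ and inside the simplex integral defining $\langle\cdot\rangle^n_{\Dd_t}$. The derivatives $\Dd_t\dot{\Dd_t}+\dot{\Dd_t}\Dd_t$ are unbounded (they behave like $|\Dd_t|^{2+\varepsilon}$), so one must show that the integrands in $\langle\cdot\rangle^{n+1}_{\Dd_t}$ are trace-class with a dominating $L^1$-majorant on $\Delta_{n+1}$ uniform on compact intervals. This is exactly what the operator form $\dot{\Dd_t}=F_t|\Dd_t|^{1+\varepsilon}+R_t$ is engineered to deliver: the trace-norm estimate of Lemma~\ref{getzler} (applied with a fixed $\delta\in(0,\tfrac{1}{2e})$ and the uniform bounds on $F_t$, $R_t$, together with the weak $\theta$-summability hypothesis $\tau(e^{-u\Dd_t^2})$ bounded on compact intervals) produces a dominating constant times $(n-k)!^{-1}$, allowing both dominated convergence for the difference quotient in $t$ and Fubini on $\Delta_n$. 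Once this interchange is secured, the remainder of the argument is the bookkeeping identification of terms above.
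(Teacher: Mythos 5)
Your proposal follows the paper's proof exactly: Leibniz differentiation splits $\frac{d}{dt}\Ch^n(\Dd_t)$ into the $\alpha^n(\Dd_t,\dot{\Dd_t})$ term (from differentiating the commutator entries) and the $-\iota(\Dd_t\dot{\Dd_t}+\dot{\Dd_t}\Dd_t)\Ch^n(\Dd_t)$ term (from Lemma~\ref{duhammel} on the heat-kernel factors), and Theorem~\ref{123}(2) repackages the sum as $b\hCh^{n-1}+B\hCh^{n+1}$; entireness comes from Lemma~\ref{getzler} with the uniform bounds. You even get the sign on the $\iota$ contribution right where the paper's displayed intermediate equation has a typo (it omits the minus sign, and cites ``Theorem~\ref{123}(3)'' when it means part (2)); your added remarks on dominated convergence and Fubini are reasonable elaborations that the paper leaves implicit.
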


\begin{proof}
 By applying the Leibniz rule on $\frac{d}{dt}\Ch^n(\Dd_t)$,
we will get terms containing  
$\frac{d}{dt}e^{-(t_{j+1}-t_{j})\Dd_t^2}$
and terms containing $\frac{d}{dt}[\Dd_t,a_j]$.
By Lemma~\ref{duhammel}, the former collects into 
$\iota(\Dd_t \dot{\Dd_t} + \dot{\Dd_t} \Dd_t)\Ch^n(\Dd_t)$,
while the latter collects into $\alpha ^n (\Dd_t,\dot{\Dd_t})$.
Hence together with Theorem~\ref{123}(3)
\begin{eqnarray*}
 \frac{d\Ch^n(\Dd_t)}{dt}
&=&\iota(\Dd \dot{\Dd_t} + \dot{\Dd_t} \Dd)
\Ch^n (\Dd_t )
+
\alpha ^n (\Dd_t,\dot{\Dd_t})
\\&=& b\hCh^{n-1}(\Dd_t,\dot{\Dd_t})+B\hCh^{n+1}(\Dd_t,\dot{\Dd_t})  \mbox{ .}
\end{eqnarray*}
The fact that $\hCh^\bullet(\Dd_t,\dot{\Dd_t})$
is entire follows from Lemma~\ref{getzler} and the uniform
boundedness  of $F_t$ and $R_t$.
The result is obtained.
\end{proof}

The following Proposition gives a stability of bounded
perturbation of weakly $\theta$-summable unbounded Breuer-Fredholm
modules.
It is Theorem C in \cite{getzlereven}.
\begin{proposition}
\label{thmC}
 For a 
weakly  $\theta$-summable unbounded Breuer-Fredholm module
 $\KCYCLE$, and an operator $V\in\Nn$ 
such that $V$ has the same degree as $\Dd$, i.e. 
$\lvert V \rvert _\chi = \lvert \Dd \rvert _\chi$.
 Then  $\left(\rho, \Nn,\Dd+V \right)$ is
again a weakly $\theta$-summable unbounded Breuer-Fredholm
 module and
\[
 \tau\left( e^{-(1-\varepsilon/2)(\Dd+V)^2} \right) \leq 
e^{(1+2/\varepsilon)\left\lVert V\right\rVert^2} \cdot \tau \left( e^{-(1-\varepsilon)\Dd^2} \right) \mbox{ .}
\]
\end{proposition}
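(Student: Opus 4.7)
The plan is to deduce the trace estimate from an elementary quadratic form inequality for $(\Dd+V)^2$ combined with trace monotonicity for heat operators, and then dispatch the remaining module axioms by standard perturbation arguments. Starting from $(\sqrt{\delta}\,\Dd \pm \delta^{-1/2}V)^2 \geq 0$ on $\operatorname{Dom}(\Dd)$, I would extract $\pm(\Dd V + V\Dd) \leq \delta\,\Dd^2 + \delta^{-1}V^2$ for every $\delta \in (0,1)$. Expanding $(\Dd+V)^2 = \Dd^2 + \Dd V + V\Dd + V^2$ (where $V=V^*$ is implicit for $\Dd+V$ to be self-adjoint), taking the minus sign, and using $V^2 \leq \|V\|^2\cdot 1$ together with $1-\delta^{-1} \leq 0$ yields the key form estimate
\[
(\Dd+V)^2 \geq (1-\delta)\Dd^2 - (\delta^{-1}-1)\|V\|^2 \cdot 1.
\]

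Next I would choose $\delta = \varepsilon/(2-\varepsilon)$, so that $(1-\varepsilon/2)(1-\delta) = 1-\varepsilon$; a direct computation then gives
\[
(1-\varepsilon/2)(\delta^{-1}-1) = \frac{2(1-\varepsilon/2)(1-\varepsilon)}{\varepsilon} = \frac{2}{\varepsilon} - 3 + \varepsilon \leq 1 + \frac{2}{\varepsilon}
\]
for $\varepsilon\in(0,1)$. Multiplying the core estimate by $(1-\varepsilon/2)$ and invoking trace monotonicity for self-adjoint operators affiliated with $\Nn$ (namely, if $A \leq B$ then $\tau(e^{-B}) \leq \tau(e^{-A})$, which follows from the semifinite min--max principle applied to the singular values $\mu_s$), together with the commutativity of the scalar $\|V\|^2$ with $\Dd^2$, would produce
\[
\tau\bigl(e^{-(1-\varepsilon/2)(\Dd+V)^2}\bigr) \leq e^{(1+2/\varepsilon)\|V\|^2}\,\tau\bigl(e^{-(1-\varepsilon)\Dd^2}\bigr),
\]
which in particular establishes weak $\theta$-summability of $(\rho,\Nn,\Dd+V)$ since the right-hand side is finite for sufficiently small $\varepsilon$.

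The remaining axioms of Definition~\ref{spectraltriple} I would handle by routine perturbation arguments: self-adjointness of $\Dd+V$ on $\operatorname{Dom}(\Dd)$ is Kato--Rellich since $V$ is bounded; affiliation with $\Nn$ is immediate from $\Dd$ being affiliated and $V\in\Nn$; bounded commutators follow from $[\Dd+V,\rho(a)] = [\Dd,\rho(a)] + [V,\rho(a)]$; and $(1+(\Dd+V)^2)^{-1/2}\in\Kk$ follows from the same core inequality, which yields $1+(\Dd+V)^2 \geq c_1(1+\Dd^2) - c_2\cdot 1$ for positive constants $c_1,c_2$, thereby dominating the singular values of $(1+(\Dd+V)^2)^{-1/2}$ by those of $(1+\Dd^2)^{-1/2}$ up to a constant, giving $\tau$-compactness via the ideal property of $\Kk$. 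The main technical hurdle will be lifting the quadratic form inequality to an operator inequality suitable for functional calculus on unbounded operators and justifying the trace monotonicity step rigorously in the semifinite setting; this is cleanly achieved either by working with Friedrichs extensions or by comparing the singular value functions $\mu_s$ directly.
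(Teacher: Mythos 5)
Your proof is correct and follows essentially the same route as the paper: both rest on the Young-type inequality $\pm(\Dd V + V\Dd) \leq \delta\Dd^2 + \delta^{-1}V^2$ (with a suitable $\delta$ depending on $\varepsilon$), the bound $V^2\leq\lVert V\rVert^2$, and the resulting operator inequality $(1-\varepsilon/2)(\Dd+V)^2 + (1+2/\varepsilon)\lVert V\rVert^2 \geq (1-\varepsilon)\Dd^2$, followed by monotonicity of $A\mapsto\tau(e^{-A})$; the paper packages this as $\tau(e^{-A-B})\leq\tau(e^{-A})$ for its particular positive $A,B$, which is the same inequality. One small repair is needed in your sketch of the $\tau$-compact resolvent condition: the estimate $1+(\Dd+V)^2\geq c_1(1+\Dd^2)-c_2$ with $c_2>0$ does not directly dominate resolvents, so you should first absorb the constant by a spectral shift (choosing $\lambda$ with $\lambda+(\Dd+V)^2\geq c(1+\Dd^2)$ and applying operator monotonicity of $x\mapsto x^{-1/2}$) or, more simply, invoke the resolvent identity $(\Dd+V+i)^{-1}=(\Dd+i)^{-1}-(\Dd+V+i)^{-1}V(\Dd+i)^{-1}$ together with the ideal property of $\Kk$.
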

\begin{proof}
 It is obvious that 
\[
 \left\lVert [\Dd+V,a]\right\rVert\leq (C+2\left\lVert V\right\rVert)\left\lVert a\right\rVert \mbox{ ,}
\]
hence if we obtain $\tau\left( e^{-(1-\varepsilon/2)(\Dd+V)^2} \right) \leq 
e^{(1+2/\varepsilon)\left\lVert V\right\rVert^2} \cdot \tau \left( e^{-(1-\varepsilon)\Dd^2} \right)$, we are done.

Observe that if $A$ and $B$ are positive operators, then
\[
 \tau( e^{-A-B} )\leq \tau(e^{-A})\mbox{ .}
\]
We proceed by introducing the operators
\begin{eqnarray*}
 A&=&(1-\varepsilon)\Dd^2 \mbox{ , }\\
B&=&\frac{\varepsilon}{2}\Dd^2 + (1-\frac{\varepsilon}{2})\left(\Dd V + V\Dd + V^2\right) +
(1+\frac{2}{\varepsilon})\left\lVert V\right\rVert^2 \mbox{ .}
\end{eqnarray*}
$A$ is a positive operator, and to see that $B$ is also positive, we use the fact that
\[
 -(\Dd V+V\Dd)\leq \frac{\varepsilon}{2}\Dd^2 + \frac{2}{\varepsilon} V^2 \leq
 \frac{\varepsilon}{2}\Dd^2 + \frac{2}{\varepsilon} \left\lVert V\right\rVert^2 \mbox{ .}
\]
Therefore, 
\begin{eqnarray*}
 \tau\left( e^{-(1-\frac{\varepsilon}{2})\Dd^2 - (1-\frac{\varepsilon}{2})\left(\Dd V + V\Dd + V^2\right) -
(1+\frac{2}{\varepsilon})\left\lVert V\right\rVert^2} \right)=\tau\left( e^{-A-B} \right)
&\leq &\tau\left( e^{-A}\right)=\tau\left( e^{(1-\varepsilon)\Dd^2}\right)\\
   \tau\left( e^{-(1-\varepsilon/2)(\Dd+V)^2} \right)& \leq &
e^{(1+2/\varepsilon)\left\lVert V\right\rVert^2} \cdot \tau \left( e^{-(1-\varepsilon)\Dd^2} \right)\mbox{ ,}
\end{eqnarray*}
and the result is obtained.
\end{proof}

\subsection{Index pairing in (co)homology}

This section will show that the JLO character for a
weakly
 $\theta$-summable even unbounded Breuer-Fredholm module produces an index formula. For the odd case, we refer to a paper by Carey and Phillips \cite{oddjloT2}, who developed the JLO character in the Type II setting.

\begin{theorem}[\cite{oddjloT2}] 
Let $\KCYCLE$ be an odd weakly $\theta$-summable unbounded Breuer-Fredholm module over $ A$ and
$u\in M_N( A)$ be a unitary, then
\[
\langle  [\KCYCLE], [u] \rangle =\left( [\Ch^- (\Dd)] , [\CCHERN_-(u)] \right) \mbox{ ,}
\] 
where the angle bracket on the left is the spectral flow pairing \cite{oddjloT2}
 and the round bracket on the right is the (co)homology
pairing.
\end{theorem}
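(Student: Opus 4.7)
The plan is to compute both sides along a path interpolating between $\Dd$ and $u\Dd u^{-1}$, then match the resulting expansions termwise. Let $V := u[\Dd,u^{-1}] = u\Dd u^{-1} - \Dd$, which lies in $\Nn$ by axiom (2) of Definition \ref{spectraltriple}. Consider the linear path $\Dd_t := \Dd + tV$ for $t \in [0,1]$, so that $\Dd_0 = \Dd$ and $\Dd_1 = u\Dd u^{-1}$. Proposition \ref{thmC} applied pointwise shows each $(\rho,\Nn,\Dd_t)$ is a weakly $\theta$-summable odd unbounded Breuer-Fredholm module, and the family is differentiable with $\dot{\Dd}_t = V \in \Nn$, with uniform control of $\tau(e^{-u\Dd_t^2})$ on $[0,1]$.

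The spectral flow side is handled by the Carey-Phillips integral formula developed in the semi-finite $\theta$-summable setting in \cite{oddjloT2}:
\[
\operatorname{sf}(\Dd,u\Dd u^{-1}) = \tfrac{1}{\sqrt{\pi}} \int_0^1 \tau\bigl(V\,e^{-\Dd_t^2}\bigr)\,dt.
\]
I would expand $e^{-\Dd_t^2}$ by iterated Duhamel expansion into a series of nested simplex integrals with insertions of $\Dd V + V\Dd + tV^2$ between $e^{-s\Dd^2}$ factors, then substitute $V = u[\Dd,u^{-1}]$. Using $[\Dd,u^{-1}u] = 0$ to convert factors of $\Dd u^{-1}$ into $u^{-1}\Dd - [\Dd,u^{-1}]$, together with the cyclicity Lemma \ref{misc}(1) to push $u^{-1}$ into the first slot, the resulting terms reassemble into the JLO series
\[
(\Ch^-(\Dd),\CCHERN_-(u)) = \sum_{k=0}^\infty \tfrac{(-1)^{k+1} k!}{\sqrt{\pi}}\langle u^{-1},[\Dd,u],[\Dd,u^{-1}],\dots,[\Dd,u]\rangle^{2k+1}_\Dd.
\]

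Exchanges of sums and integrals throughout are justified by the entire estimate of Lemma \ref{getzler}, and one may freely homotope the JLO cohomology class during the argument by Theorem \ref{cobound} together with the cycle property of $\CCHERN_-(u)$ from Lemma \ref{1b}. The main obstacle is the combinatorial matching between the two series: the Duhamel expansion is indexed by the number of insertions of $V = u\Dd u^{-1} - \Dd$, while the JLO series is indexed by the number of commutators $[\Dd,u^{\pm 1}]$, and reconciling them demands careful bookkeeping of signs, factorials, and the measures of nested simplices (in particular producing the $\tfrac{1}{\sqrt{\pi}}$ and the $k!$ with the correct sign on each level). The semi-finite setting introduces no additional difficulty, since $\tau$ satisfies the same cyclic and tracial identities used in the Type I argument of \cite{cp}, so the argument proceeds verbatim modulo replacing operator trace bounds by $\tau$-summability bounds.
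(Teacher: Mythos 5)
The paper does not prove this statement; it is cited to Carey--Phillips \cite{oddjloT2}, and the surrounding text explicitly defers the odd case to that reference. So there is no in-paper proof to compare against, and your proposal has to be judged as a sketch of the Carey--Phillips argument on its own terms. Structurally your plan is the right one: take the linear path $\Dd_t = \Dd + tV$ with $V = u[\Dd,u^{-1}]$, convert the spectral flow side to an integral of heat-kernel traces, Duhamel-expand, and match against the pairing with $\CCHERN_-(u)$. But two points need to be filled in rather than waved at.

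First, the spectral flow formula you write down, $\operatorname{sf}(\Dd,u\Dd u^{-1}) = \tfrac{1}{\sqrt\pi}\int_0^1\tau(V e^{-\Dd_t^2})\,dt$, is not what the Carey--Phillips $\theta$-summable theorem gives in general: there are boundary corrections involving truncated eta invariants and kernel projections at the two endpoints, of the schematic form $\tfrac12\bigl(\eta(\Dd_1)-\eta(\Dd_0)\bigr) + \tfrac12\bigl(\tau(P_{\ker\Dd_1})-\tau(P_{\ker\Dd_0})\bigr)$. Your clean version is valid precisely because $\Dd_1 = u\Dd u^{-1}$ is unitarily conjugate to $\Dd_0 = \Dd$, which forces those corrections to cancel; this is a real ingredient and should be stated, since it is what justifies choosing the linear path rather than some other homotopy. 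Second, the combinatorial identification of $\int_0^1\tau(V e^{-\Dd_t^2})\,dt$ with $\sum_k\tfrac{(-1)^{k+1}k!}{\sqrt\pi}\langle u^{-1},[\Dd,u],\ldots,[\Dd,u]\rangle^{2k+1}_\Dd$ is not a bookkeeping afterthought --- it is the entire content of the theorem, and your proposal stops exactly where it starts. The Duhamel expansion of $e^{-\Dd_t^2}$ inserts $t(\Dd V + V\Dd) + t^2 V^2$ between heat factors; to reassemble these into commutators $[\Dd,u^{\pm1}]$ one needs identities such as $V^2 = -[\Dd,u][\Dd,u^{-1}]$ and $\Dd V + V\Dd = [\Dd,u][\Dd,u^{-1}] + u[\Dd^2,u^{-1}]$, after which Lemma~\ref{misc}(4) collapses the $[\Dd^2,u^{-1}]$ insertions and Lemma~\ref{misc}(1) rotates $u^{-1}$ into the zeroth slot, and only then does the $t$-integration produce the $(-1)^{k+1}k!$ with the correct simplex volume. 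Asserting that ``the argument proceeds verbatim modulo replacing operator trace bounds by $\tau$-summability bounds'' is a statement of intent, not a proof. If you want to carry this out, the two identities above are the starting point.
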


\begin{theorem}
Let $\KCYCLE$ be an even weakly $\theta$-summable unbounded Breuer-Fredholm module over $ A$ and
$p\in M_N( A)$ be a projection, then
\[
\langle  [\KCYCLE], [p] \rangle =\left( [\Ch^+ (\Dd)] , [\CCHERN_+(p)] \right) \mbox{ ,}
\] 
where the angle bracket on the left is the index pairing and the round bracket on the right is the (co)homology
pairing.
\end{theorem}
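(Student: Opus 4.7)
The plan is to follow the classical McKean-Singer strategy, leveraging the two major tools developed in Section~2: the homotopy invariance of the JLO class (Theorem~\ref{cobound}) and the stability of weak $\theta$-summability under bounded perturbations (Proposition~\ref{thmC}). After reducing to $N=1$ by replacing $A$ with $M_N(A)$ in the usual way, I would deform $\Dd$ into a form block-diagonal with respect to $p$, use $[\Dd',p]=0$ to collapse the JLO pairing to a single heat-kernel trace, and identify that trace with the Breuer-Fredholm index via the semi-finite McKean-Singer formula.

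\textbf{Step 1 (deformation to block-diagonal form).} Set $V:=-[p,[p,\Dd]]\in\Nn$, bounded because $[\Dd,p]\in\Nn$, and $\Dd':=\Dd+V=p\Dd p+(1-p)\Dd(1-p)$; note $V$ is of the same parity as $\Dd$. Consider the linear interpolation $\Dd_s:=\Dd+sV$ for $s\in[0,1]$. By Proposition~\ref{thmC}, each $(\rho,\Nn,\Dd_s)$ is a weakly $\theta$-summable unbounded Breuer-Fredholm module with $\dot{\Dd}_s=V$ uniformly bounded, so Theorem~\ref{cobound} (with $\varepsilon=0$, $F_s=0$, $R_s=V$) yields $[\Ch^+(\Dd)]=[\Ch^+(\Dd')]$ in $\HE^+(A)$. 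On the index side, $p^-\Dd_s p^+$ is a norm-continuous family of $(p^+,p^-)$-Fredholm operators, so its Breuer-Fredholm index is constant in $s$, giving $\langle[(\rho,\Nn,\Dd)],[p]\rangle=\IND(p^-\Dd'p^+)$.

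\textbf{Step 2 (collapse of the pairing).} By construction $[\Dd',p]=0$. In the expansion
\[
\bigl(\Ch^+(\Dd'),\CCHERN_+(p)\bigr)=\sum_{k=0}^\infty\bigl(\Ch^{2k}(\Dd'),\CCHERN_{2k}(p)\bigr),
\]
every summand with $k\geq 1$ is a multiple of $\langle 2p-1,[\Dd',p],\ldots,[\Dd',p]\rangle^{2k}_{\Dd'}=0$, so only the $k=0$ contribution survives and produces
\[
\bigl(\Ch^+(\Dd'),\CCHERN_+(p)\bigr)=\langle p\rangle^0_{\Dd'}=\tau\bigl(\chi\, p\, e^{-\Dd'^2}\bigr).
\]

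\textbf{Step 3 (McKean-Singer identification).} Because $[p,\Dd']=0$, one has $p\,\Dd'^2=p(p\Dd p)^2$, whence $p\,e^{-\Dd'^2}=p\,e^{-(p\Dd p)^2}$. The restriction $p\Dd p$ is a self-adjoint odd operator affiliated with the compressed semi-finite algebra $p\Nn p$. Writing $\chi p=p^+-p^-$ and applying the semi-finite McKean-Singer identity to $p\Dd p$ on $p\Hh$ gives
\[
\tau\bigl(\chi\, p\, e^{-(p\Dd p)^2}\bigr)=\tau\bigl(p^+P_{\ker(p\Dd p)}\bigr)-\tau\bigl(p^-P_{\ker(p\Dd p)}\bigr)=\IND(p^-\Dd'p^+),
\]
which combined with Step~1 completes the proof. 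The main obstacle is Step~1: one must verify carefully, within the $(e,f)$-Fredholm framework of Benameur-Fack used in Section~1, both that the Breuer-Fredholm index is locally constant along the bounded perturbation $\Dd_s$ (so that the index pairing survives the deformation) and that the weak $\theta$-summability constants along the path are controlled uniformly so that Theorem~\ref{cobound} applies. The McKean-Singer identification in Step~3 is standard in the semi-finite context and can be justified via Proposition~\ref{index} with $m=1$, but the bookkeeping with the compressed trace on $p\Nn p$ deserves care.
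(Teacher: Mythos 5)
Your proposal is correct and follows essentially the same route as the paper: your $V=-[p,[p,\Dd]]$ equals the paper's $(2p-1)[\Dd,p]$, so the homotopy $\Dd_s=\Dd+sV$ is the paper's $\Dd_t$, and both proofs invoke Proposition~\ref{thmC} and Theorem~\ref{cobound} to transport the JLO class along it, collapse the pairing to $\tau(\chi p\,e^{-\Dd'^2})$ because $[\Dd',p]=0$, and then appeal to the semi-finite McKean--Singer identity. One small simplification you could make in Step~1: since $p^-Vp^+=0$, the operator $p^-\Dd_sp^+$ is literally independent of $s$, so no appeal to local constancy of the Breuer--Fredholm index is needed there.
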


\begin{proof}

It suffices to prove that
\begin{eqnarray*}
\IND (p^- (\Dd\otimes 1_N) p^+ ) &=& \left( \Ch^+ (\Dd) , \CCHERN_+(p) \right) \mbox{ .}
\end{eqnarray*}
It follows from the definition of (co)homology that the above equality will descend to the result
stated in the theorem.


For any projection $p \in  A$, 
 one can deform $\Dd$ to $\left(p\Dd p + (1-p)\Dd(1-p)\right)$ via
the homotopy $\Dd_t=\Dd+t(2p-1)[\Dd,p]$ where $t\in[0,1]$.
As $\dot{\Dd_t}=(2p-1)[\Dd,p]$ is odd and in $\Nn$, by Proposition~\ref{thmC}, 
$\left(\rho, \Nn,\Dd_t \right)$ is a differentiable family of 
weakly $\theta$-summable unbounded Breuer-Fredholm modules.
 By Theorem~\ref{cobound},
$\Ch^+(\Dd)$ and $\Ch^+\left(p\Dd p + (1-p)\Dd(1-p)\right)$ are cohomologous. Specifically, 
\begin{eqnarray*}
 \lefteqn{ 
\Ch^+\left(p\Dd p + (1-p)\Dd(1-p) \right)
- \Ch^+(\Dd) } \\&=&
\Ch^+(\Dd_1)- \Ch^+(\Dd_0)
\\
&=&(b+B)\int_0^1 \hCh^+(\Dd_t,\dot{\Dd_t}) \mbox{ .}
\end{eqnarray*}
Therefore, 
\begin{eqnarray*}
 \left( \Ch^+ (\Dd) , \CCHERN_+(p) \right) &=&
\left( \Ch^+\left(p\Dd p + (1-p)\Dd(1-p) \right)  ,  \CCHERN_+(p) \right)  
\\&& \hspace{2cm} - \left( (b+B) \int_0^1 \hCh^+(\Dd_t,\dot{\Dd_t}) , \CCHERN_+(p) \right)   \\
&=& 
\left( \Ch^0 \left(p\Dd p + (1-p)\Dd(1-p) \right)  ,  \CCHERN_+(p) \right)  
\end{eqnarray*}
where the last equality follows from the fact that
$[\Dd_1,p]=[p\Dd p + (1-p)\Dd(1-p) , p ]=0$ and $ \CCHERN_+(p)$ is closed.
Hence the pairing $\left( \Ch^+ (\Dd) , \CCHERN_+(p) \right)$ yields
the McKean-Singer index formula
\[
 \tau (\chi p e^{-\Dd^2})\mbox{ .}
\]
The fact that the McKean-Singer index formula produces the desirable index 
$\IND (p^- \Dd p^+ )$ is proved in
\cite{cprs3}.
If $p$ is a projection in $M_N( A)$, one extends $\Dd$ to $\Dd\otimes 1_N$ and $\tau$ to $\tau\otimes \operatorname{Tr}$, where
$\operatorname{Tr}$ is the matrix trace from $M_N(\mathbb{C})\rightarrow \mathbb{C}$,
the result follows.
\end{proof}

\section{Reduction from JLO character to Connes-Chern character}

For every $p$-summable unbounded Breuer-Fredholm module
there is a canonically associated 
$p$-summable bounded
Breuer-Fredholm module.
 Using techniques from \cite{chernreduct}, this section connects the previous two sections  by 
showing that the JLO character of a  $p$-summable unbounded Breuer-Fredholm module 
and the Connes-Chern character of its associated $p$-summable Breuer-Fredholm module 
define the same class
in entire cyclic cohomology.

Most of the work in this section is adapted from \cite{chernreduct} and \cite{greenbook}.


\subsection{JLO character for $p$-summable unbounded Breuer-Fredholm modules}

As shown in Lemma~\ref{ptheta},  $p$-summable unbounded Breuer-Fredholm modules
are also $\theta$-summable (in particular weakly $\theta$-summable).
 Therefore, the JLO character of a $p$-summable unbounded Breuer-Fredholm 
is defined.

Given  a $p$-summable unbounded Breuer-Fredholm module $\KCYCLE$, its 
JLO class defined by the
JLO character $\Ch^\bullet(\Dd)$ has a representative which consists of only finitely many terms.

For convenience, we denote 
\begin{eqnarray}
\label{eqn:retractedJLO}
 \cch^n_t(\Dd):= \Ch^{\leq n}(t\Dd) + B\int_0^t\hCh^{n+1}(u\Dd,\Dd)du \mbox{ .}
\end{eqnarray}
Here $\Ch^{\leq n}(t\Dd)$ means that we discard the terms greater than $n$ in $\Ch^{\bullet}(t\Dd)$. That is,
\[
\Ch^{\leq n}(t\Dd) := 
\left\{
 \begin{array}{ll}
 \sum_{k=1}^{2k \leq n} \Ch^{2k}(t\Dd) 
&  \mbox{ when  $(\rho,\Nn,\Dd)$ is even}\\
\sum_{k=1}^{2k+1 \leq n} \Ch^{2k+1}(t\Dd)
&  \mbox{ when $(\rho,\Nn,\Dd)$ is odd}
\end{array} 
\right.
\mbox{ .}
\]

\begin{proposition}
\label{finitejlo}
Given a $p$-summable unbounded Breuer-Fredholm module $\KCYCLE$,
its JLO-cocycle
$ \Ch^\bullet(\Dd) $ is cohomologous to
$\cch^n_t(\Dd)$
 for
$t\in[1,\infty), n>p$.
When $\Dd$ is invertible, 
 $B\int_0^\infty \hCh^{n+1}(u\Dd,\Dd)du$ is a well-defined entire cochain and
is cohomologous to $ \Ch^\bullet(\Dd) $.
\end{proposition}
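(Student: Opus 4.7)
The plan is to view $\cch^n_t(\Dd)$ as a truncation of the rescaled JLO cocycle $\Ch^\bullet(t\Dd)$ plus a $B$-correction that repairs cocyclicity, and to obtain the cohomology relation by applying Theorem~\ref{cobound} twice along the family $\Dd_u=u\Dd$: first on $[1,t]$ to pass from $\Dd$ to $t\Dd$, and then on $[0,t]$ to express the tail $\Ch^{>n}(t\Dd):=\Ch^\bullet(t\Dd)-\Ch^{\leq n}(t\Dd)$ as a coboundary plus the $B$-integral term.  The first application yields
\[
\Ch^\bullet(t\Dd)-\Ch^\bullet(\Dd)\;=\;(b+B)\!\int_1^t\hCh^\bullet(u\Dd,\Dd)\,du\mbox{ ,}
\]
which is manifestly an entire coboundary since $u\in[1,t]$ is bounded away from $0$.

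Before the cohomological step, I would verify that the boundary integral starting at $u=0$ is well-defined.  The dangerous slot in $\hCh^{n+1}(u\Dd,\Dd)$ is the entry $V=\Dd$, which I rewrite as $V=(u^{-1}\operatorname{sign}(\Dd))\cdot|u\Dd|^{1+0}$, so Lemma~\ref{getzler} (with $k=1$ nonzero $F_j$, $\varepsilon=0$, and bounded commutators elsewhere) applies with $\Dd_u=u\Dd$: the product of norms contributes $u^{-1}\cdot u^{n+1}=u^n$, and Lemma~\ref{ptheta} bounds $\tau(e^{-(1-\delta)u^2\Dd^2})$ by $O(u^{-p})$ as $u\searrow 0$, giving $\|\hCh^{n+1}(u\Dd,\Dd)\|=O(u^{n-p})$, integrable at $0$ precisely when $n>p$.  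The same estimate applied to $\hCh^{n+1+2k}$ produces a factorially-decaying bound in $k$, ensuring that the $u$-integrated series below defines an entire cochain.

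The cohomological step on $[0,t]$ uses $\Ch^{n+2k}(0)=0$ for $k\geq1$ together with Theorem~\ref{cobound} to write
\[
\Ch^{n+2k}(t\Dd)=b\!\int_0^t\hCh^{n+2k-1}(u\Dd,\Dd)\,du+B\!\int_0^t\hCh^{n+2k+1}(u\Dd,\Dd)\,du\mbox{ .}
\]
Summing over $k\geq1$, the $B$-contributions telescope against the $b$-contributions shifted by one except for the lowest $B$-piece, producing
\[
\Ch^{>n}(t\Dd)=(b+B)\eta-B\!\int_0^t\hCh^{n+1}(u\Dd,\Dd)\,du\mbox{ ,}\qquad\eta:=\sum_{k\geq0}\int_0^t\hCh^{n+1+2k}(u\Dd,\Dd)\,du\mbox{ .}
\]
Rearranging this identity and combining with the $[1,t]$-transgression above identifies $\Ch^\bullet(\Dd)-\cch^n_t(\Dd)$ as $(b+B)$ of an entire cochain, which is the first assertion.

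The main obstacle is verifying that $\eta$ satisfies the entire growth condition of Definition~\ref{definitionentire}: one must bound $\|\hCh^{n+1+2k}(u\Dd,\Dd)\|$ uniformly in $u\in[0,t]$ via Lemma~\ref{getzler}, integrate, and check that the $k$-th coefficient decays fast enough, with the $1/(n+2k)!$ factor from the lemma providing the key.  For the second claim, invertibility of $\Dd$ yields a spectral gap $\Dd^2\geq\delta^2$, so $\tau(e^{-(1-\delta')u^2\Dd^2})$ decays exponentially as $u\to\infty$ and absorbs the polynomial growth of the commutators $u[\Dd,a_i]$.  Consequently $\int_0^\infty\hCh^{n+1}(u\Dd,\Dd)\,du$ is absolutely convergent and defines an entire cochain, while $\Ch^{\leq n}(t\Dd)\to0$ componentwise as $t\to\infty$.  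Passing $t\to\infty$ in the relation $\cch^n_t(\Dd)\sim\Ch^\bullet(\Dd)$ then yields the second assertion.
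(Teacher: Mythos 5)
Your proposal is correct and follows essentially the same strategy as the paper: apply the transgression formula of Theorem~\ref{cobound} on $[1,t]$ for the family $\Dd_u = u\Dd$, split off the low-degree tail $\Ch^{\leq n}(t\Dd)$, transgress the high-degree part down to $s\Dd$ with $s\searrow 0$ (which you phrase as a degree-by-degree sum that telescopes, while the paper works directly with $\Ch^{\geq n+2}$ and $\hCh^{\geq n+1}$, $\hCh^{\geq n+3}$ --- an equivalent rearrangement), and use the $O(u^{m-1-p})$ bound from Lemma~\ref{getzler} together with Lemma~\ref{ptheta} to justify integrability at $u=0$, then the spectral gap for the $t\to\infty$ limit in the invertible case. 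Your power-counting $\|\hCh^{n+1}(u\Dd,\Dd)\| = O(u^{n-p})$ is in fact slightly sharper than the $O(u^{n+1-p})$ stated in the paper, but both give the needed integrability under the hypothesis $n>p$.
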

\begin{proof} The proof will make use of Theorem~\ref{123}(2) twice.
 \begin{eqnarray*}
 \Ch^\bullet(\Dd)
&=& \Ch^\bullet(t\Dd)+ (b+B)\int_1^t\hCh ^\bullet(u\Dd,\Dd)du\\
&=& \Ch^{\leq n}(t\Dd) + \Ch^{\geq n+2}(s\Dd) + \left(
 \Ch^{\geq n+2}(t\Dd) - \Ch^{\geq n+2}(s\Dd) \right) 
\\&& \hspace{2cm}+ 
(b+B)\int_1^t\hCh^\bullet(u\Dd,\Dd)du \\
&=& \Ch^{\leq n}(t\Dd) + \Ch^{\geq n+2}(s\Dd)
\\&& \hspace{2cm} + \left(
 - b\int_s^t \hCh^{\geq n+1}(u\Dd,\Dd)du - B\int_s^t\hCh^{\geq n+3}(u\Dd,\Dd)du \right)\\
&&\hspace{2cm}+ (b+B)\int_1^t\hCh^\bullet(u\Dd,\Dd)du \\
&=& \Ch^{\leq n}(t\Dd) + B\int_s^t\hCh^{n+1}(u\Dd,\Dd)du + \Ch^{\geq n+2}(s\Dd) \\
&&\hspace{2cm}+  (b+B)\left(\int_1^t\hCh^\bullet(u\Dd,\Dd)du
 -  \int_s^t \hCh^{\geq n+1}(u\Dd,\Dd)du  \right) \mbox{ .}
\end{eqnarray*}
Since $\tau ( e^{-(1-\varepsilon)s^2\Dd^2}) = O(s^{-p})$ by Lemma~\ref{ptheta}, it follows from Lemma~\ref{getzler} that for $m>p$
\begin{eqnarray*}
\label{inftylimit}
  \left\lVert \Ch ^m (s\Dd)\right\rVert &=& O(s^{m-p}) \\
\left\lVert \hCh^m(u\Dd,\Dd)\right\rVert &=&O(u^{m-p})\mbox{ ,}
\end{eqnarray*}
so that
$\lim_{s\searrow 0} \Ch ^m (s\Dd) = 0$ in norm and 
$\hCh^m(u\Dd,\Dd)$ is integrable from $0$ to $t$ for $m>p$.
Hence 
\begin{eqnarray*}
 \Ch^\bullet(\Dd) &=& \Ch^{\leq n}(t\Dd) + B\int_0^t\hCh^{n+1}(u\Dd,\Dd)du 
\\ && \hspace{2cm} +
(b+B)\left(\int_1^t\hCh^\bullet(u\Dd,\Dd)du
 -  \int_0^t \hCh^{\geq n+1}(u\Dd,\Dd)du  \right)\mbox{ .}
\end{eqnarray*}
Now suppose $\Dd$ is invertible. Then
 $\lambda:=\inf\left\{\sigma(\Dd^2)\right\} >0$
 where $\sigma(\Dd^2)$ is the spectrum of $\Dd^2$.
From Lemma~\ref{getzler} and (the proof of) Lemma~\ref{ptheta}, we get that
\begin{eqnarray*}
 \| \Ch^{r} (t\Dd) \| &\leq&
\frac{t^r}{r!}\cdot \tau\left(e^{-(1-\delta)t^2\Dd^2 }\right) C^r \\
&\leq&   \frac{t^r}{r!}
\left\lVert e^{-(1-\delta)t^2\Dd^2 /2 } \right\rVert
 \tau\left(e^{-(1-\delta)t^2\Dd^2 /2} \right) C^r  \\
& \leq&  \left(
\frac{t^r}{r!} e^{-(1-\delta)t^2\lambda^2 /2}\right)
 \tau\left(e^{-(1-\delta)t^2\Dd^2 /2} \right) C^r    \mbox{ ,}
\end{eqnarray*}
and
\begin{eqnarray*}
 \| \hCh^{n+1}(u\Dd,\Dd) \| &\leq&\frac{2u^{n+1}}{n!\sqrt{e \delta}}
\cdot \tau\left(e^{-(1-\delta)u^2\Dd^2 }\right) C^{n+1} \\
&\leq& 
\frac{2u^{n+1}e^{-(1-\delta)u^2\lambda ^2/2 }}{n!\sqrt{e \delta}}
\cdot \tau\left(e^{-(1-\delta)u^2\Dd^2/2 }\right) C^{n+1}
\\ &\leq&
\frac{2u^{n+1-p}e^{-(1-\delta)u^2\lambda ^2/2 }}{n!\sqrt{e \delta}}
\left(\frac{p}{e(1-\delta)}\right)^{p/2}
\left\lVert \lvert \Dd \rvert^{-1} \right \rVert _{p}^p C^{n+1}
 \mbox{ .}
\end{eqnarray*}
The norm of $ \tau\left(e^{-(1-\delta)t^2\Dd^2 /2} \right)$ is uniformly bounded
for $t\in [1,\infty)$,
and the term \newline
$u^{n+1-p}e^{-(1-\delta)u^2\lambda ^2/2 }$ is
integrable from $0$ to $\infty$,
 therefore
 the limit for $t\rightarrow\infty$ exists in norm
for  $\Ch^{r}(t\Dd)$ and $\int_0^t\hCh^{n+1}(u\Dd,\Dd)du$.
 In particular, 
$\lim_{t\rightarrow\infty} \Ch^{\leq n}(t\Dd)=0 $ in norm.
Thus,
\begin{eqnarray*}
  \Ch^\bullet(\Dd) & =&   B\int_0^\infty \hCh^{n+1}(u\Dd,\Dd)du 
\\ &&\hspace{2cm} +
(b+B)\left(\int_1^\infty \hCh^\bullet(u\Dd,\Dd)du
 -  \int_0^\infty \hCh^{\geq n+1}(u\Dd,\Dd)du  \right)\mbox{ .}
\end{eqnarray*}
\end{proof}

From the proof of Proposition~\ref{finitejlo},
\begin{equation}
 \lim_{t\rightarrow\infty} \cch^n_t(\Dd) =  B\int_0^\infty\hCh^{n+1}(u\Dd,\Dd)du \mbox{ .}
\end{equation}

\subsection{From unbounded to bounded Breuer-Fredholm modules}
\label{unbounded2bounded}
For every $p$-summable unbounded Breuer-Fredholm module
there is a canonically associated 
$p$-summable bounded
Breuer-Fredholm module. We will go through a concrete 
construction of such a bounded Breuer-Fredholm module
from an unbounded one when $\Dd$ is invertible, and remove
the invertibility assumption at the end of the section.
Most of the work in this section is adopted from \cite{chernreduct} and \cite{fromboundedtounbounded}.

Given an unbounded Breuer-Fredholm module $\KCYCLE$ with $\Dd$ invertible,
 there is an associated bounded Breuer-Fredholm module
$\FREDMOD$ by taking $F=\Dd |\Dd|^{-1}$.
We will follow a technique in \cite{fromboundedtounbounded,greenbook} 
to show
that if $\KCYCLE$ is $p$-summable, then so is $\FREDMOD$.

\begin{proposition}
\label{pp}
 If $\KCYCLE$ is a $p$-summable unbounded Breuer-Fredholm module with $\Dd$ invertible,
then $ \lvert \Dd \rvert ^{-(1-\alpha)} 
 \in \Ll^{\frac{p}{1-\alpha}}$ and 
 $ [\Dd |\Dd|^{-\alpha} ,a]\in \Ll^{\frac{p}{\alpha}}$ for $\alpha\in [0,1]$
with $\Ll^{\infty}:=\Nn$.
Furthermore, there exists a constant $C_p$ depending on $p$ and $\Dd$ so that
\[
\left \lVert  [\Dd |\Dd|^{-\alpha} ,a] \right \rVert
 _{\frac{p}{\alpha}}
\leq \left \lVert  [\Dd   ,a] \right \rVert C_p^{\alpha} \mbox{ .}
\]
\end{proposition}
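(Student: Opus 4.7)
For the first claim, invertibility of $\Dd$ forces $\lambda_0 := \inf\sigma(\Dd^2) > 0$. On the spectrum, $1+x \leq (1+\lambda_0^{-1})\,x$ for $x \geq \lambda_0$, so functional calculus yields $|\Dd|^{-p} \leq (1+\lambda_0^{-1})^{p/2}(1+\Dd^2)^{-p/2}$. The right-hand side is in $\Ll^1$ by the $p$-summability hypothesis, hence $|\Dd|^{-p} \in \Ll^1$; taking $\bigl(p/(1-\alpha)\bigr)$-th roots gives $|\Dd|^{-(1-\alpha)} \in \Ll^{p/(1-\alpha)}$, with the endpoint $\alpha = 1$ interpreted as $1 \in \Nn = \Ll^\infty$.

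For the commutator I split
\[
[\Dd|\Dd|^{-\alpha}, a] = [\Dd, a]\,|\Dd|^{-\alpha} + \Dd\,[|\Dd|^{-\alpha}, a].
\]
The first summand is handled by the generalized H\"older inequality (Theorem~\ref{holder}): $\bigl\|[\Dd,a]\,|\Dd|^{-\alpha}\bigr\|_{p/\alpha} \leq \|[\Dd,a]\|\cdot\bigl\||\Dd|^{-\alpha}\bigr\|_{p/\alpha}$, and the second factor is finite by the first claim (applied with $1-\alpha$ replaced by $\alpha$). For the second summand I plug in the integral representation
\[
|\Dd|^{-\alpha} = \frac{\sin(\pi\alpha/2)}{\pi}\int_0^\infty \lambda^{-\alpha/2}(\Dd^2+\lambda)^{-1}\,d\lambda \qquad (0<\alpha<2),
\]
convergent in operator norm since $\Dd^2 \geq \lambda_0 > 0$. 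Commuting with $a$ and expanding $[\Dd^2,a] = \Dd[\Dd,a] + [\Dd,a]\Dd$ yields
\[
\Dd\,[|\Dd|^{-\alpha}, a] = -\frac{\sin(\pi\alpha/2)}{\pi}\int_0^\infty \lambda^{-\alpha/2}\,\Dd(\Dd^2+\lambda)^{-1}\bigl(\Dd[\Dd,a]+[\Dd,a]\Dd\bigr)(\Dd^2+\lambda)^{-1}\,d\lambda.
\]
I then estimate the $\Ll^{p/\alpha}$-norm of the integrand by H\"older, using the functional-calculus bounds $\|\Dd(\Dd^2+\lambda)^{-1}\| \leq \tfrac12 (\lambda+\lambda_0)^{-1/2}$ and $\|\Dd^2(\Dd^2+\lambda)^{-1}\| \leq 1$, and distributing one factor of $|\Dd|^{-\alpha} \in \Ll^{p/\alpha}$ across H\"older so that the integrand is dominated by $\|[\Dd,a]\|$ times an $L^1$ function of $\lambda$.

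The endpoint $\alpha = 0$ is immediate since $[\Dd,a] \in \Nn = \Ll^\infty$, and $\alpha = 1$ is where invertibility of $\Dd$ is essential for the integral representation to converge at $\lambda = 0$. The main technical obstacle is arranging the H\"older split so that (i) the $\lambda$-integral is absolutely convergent at both $0$ and $\infty$, and (ii) the resulting numerical constants combine into a single $C_p^\alpha$ with the correct exponential dependence on $\alpha$. Splitting the $\lambda$-domain into $[0,1]$ and $[1,\infty)$ and tracking how powers of $\lambda_0$ and $\||\Dd|^{-1}\|_p$ enter each piece should produce this exponent; in effect one is interpolating between the trivial bound at $\alpha=0$ and the bounded-module bound at $\alpha=1$ along the analytic family $\alpha\mapsto\Dd|\Dd|^{-\alpha}$.
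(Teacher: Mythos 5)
Your first paragraph (the resolvent estimate giving $|\Dd|^{-1}\in\Ll^p$) is fine, and your Leibniz split $[\Dd|\Dd|^{-\alpha},a]=[\Dd,a]|\Dd|^{-\alpha}+\Dd[|\Dd|^{-\alpha},a]$ together with the integral representation for $|\Dd|^{-\alpha}$ is the same starting point as the paper. The first summand indeed gives $\|[\Dd,a]\|\,\||\Dd|^{-1}\|_p^\alpha$ by H\"older. The problem is the plan for the second summand, and it is not merely a matter of arranging constants.

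You propose to estimate the $\Ll^{p/\alpha}$-norm of the integrand of
\[
\Dd[|\Dd|^{-\alpha},a]= -\frac{1}{C_\alpha}\int_0^\infty\lambda^{-\alpha/2}\,\Dd(\Dd^2+\lambda)^{-1}\bigl(\Dd[\Dd,a]+[\Dd,a]\Dd\bigr)(\Dd^2+\lambda)^{-1}\,d\lambda
\]
by H\"older and then integrate in $\lambda$. That $\lambda$-integral of norms is not absolutely convergent at $\lambda\to\infty$. For the piece $\Dd(\Dd^2+\lambda)^{-1}\Dd[\Dd,a](\Dd^2+\lambda)^{-1}=\Dd^2(\Dd^2+\lambda)^{-1}[\Dd,a](\Dd^2+\lambda)^{-1}$, the best H\"older split one can make (pulling an $|\Dd|^{-\alpha}$ out of one resolvent) gives an integrand $\Ll^{p/\alpha}$-norm of order $(\lambda_0+\lambda)^{-1+\alpha/2}$, and multiplied by $\lambda^{-\alpha/2}$ this is $\sim\lambda^{-1}$ as $\lambda\to\infty$, so $\int_0^\infty$ diverges. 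The same logarithmic divergence appears for the piece $\Dd(\Dd^2+\lambda)^{-1}[\Dd,a]\Dd(\Dd^2+\lambda)^{-1}$ once you apply your $\|\Dd(\Dd^2+\lambda)^{-1}\|\le\frac12\lambda^{-1/2}$ bound on one side and a $p/\alpha$-norm bound on the other. Even restricting to $[1,\infty)$ does not save this: the integral still diverges. No rearrangement of H\"older factors will fix it, because the obstruction is real --- the $\lambda$-integral is not absolutely convergent in $\Ll^{p/\alpha}$, it only converges after you integrate the operator and use positivity.

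The paper's proof avoids this entirely by not taking norms inside the $\lambda$-integral. It keeps the two Leibniz terms together under a single integral, exploits the cancellation $(\lambda+\Dd^2)(\lambda+\Dd^2)^{-1}[\Dd,a](\lambda+\Dd^2)^{-1}-\Dd^2(\lambda+\Dd^2)^{-1}[\Dd,a](\lambda+\Dd^2)^{-1}=\lambda(\lambda+\Dd^2)^{-1}[\Dd,a](\lambda+\Dd^2)^{-1}$, and rewrites the integrand as a difference of two expressions of the conjugation form $X^*[\Dd,a]X$. For $[\Dd,a]$ self-adjoint it then sandwiches each conjugation between $\pm\|[\Dd,a]\|\,X^*X$; the sandwich is preserved under the $\lambda$-integral, and the integrated sandwich equals $\pm\|[\Dd,a]\|\,|\Dd|^{-\alpha}$ exactly. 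This gives the pointwise operator inequality $-\|[\Dd,a]\|\,|\Dd|^{-\alpha}\le[\Dd|\Dd|^{-\alpha},a]\le\|[\Dd,a]\|\,|\Dd|^{-\alpha}$, from which the $\Ll^{p/\alpha}$-bound follows by monotonicity of singular values (Proposition~\ref{muproperties}). Two lessons for your write-up: (i) do not pre-split via Leibniz if you intend to push norms inside the $\lambda$-integral --- the second term alone is not well-behaved; (ii) you will also need to reduce to $[\Dd,a]$ self-adjoint, by decomposing $a$ into its skew-adjoint parts, before the sandwich argument can be applied.
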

\begin{proof}
It is immediate that 
 $ \lvert \Dd \rvert ^{-(1-\alpha)}  
\in 
\Ll^{\frac{p}{1-\alpha}}$.
Using the spectral formula for the
power $\alpha\in (0,1]$
 of a positive operator
\[
 H^{-\frac{\alpha}{2}}=\frac{1}{C_\alpha}\int_0^\infty 
 (\lambda +H)^{-1} 
  \lambda^{-\frac{\alpha}{2}} d\lambda 
\]
with $C_\alpha=\int_0^\infty (1+ x)^{-1}x^{-\frac{\alpha}{2}} dx$,
 we compute for $H=\Dd^2$ and $b$ in a Banach $*$-algebra $A$,
\begin{eqnarray*}
 [\Dd \lvert \Dd \rvert ^{-\alpha},b]&=& 
[\Dd,b](\Dd^2)^{-\alpha/2}+ \Dd[ (\Dd^2)^{-\alpha/2},b] \\
&=&\frac{1}{C_\alpha} \int_0^\infty\left( [\Dd,b](\lambda+ \Dd^2)^{-1}
+ \Dd[ (\lambda +\Dd^2)^{-1},b]
\right) \lambda^{-\alpha/2} d\lambda \\
&=& \frac{1}{C_\alpha} \int_0^\infty 
\left( [\Dd,b](\lambda+ \Dd^2)^{-1}- \Dd(\lambda +\Dd^2)^{-1}
[  (\lambda +\Dd^2),b](\lambda +\Dd^2)^{-1}
\right) \lambda^{-\alpha/2} d\lambda \\
&=& \frac{1}{C_\alpha} \int_0^\infty\left( (\lambda+\Dd^2)
(\lambda+  \Dd^2)^{-1}[\Dd,b](\lambda+ \Dd^2)^{-1} \right. \\
 &&\hspace{1cm}  -\left. \Dd(\lambda  +\Dd^2)^{-1}
\left(\Dd [\Dd ,b] + [\Dd ,b] \Dd\right)
(\lambda  +\Dd^2)^{-1}
\right) \lambda^{-\alpha/2} d\lambda  \\
&=& \frac{1}{C_\alpha} \int_0^\infty
\left( \lambda  (\lambda +\Dd^2)^{-1} 
[\Dd,b](\lambda +\Dd^2)^{-1}\right. \\ && \hspace{1cm} 
\left.- 
\Dd(\lambda  +\Dd^2)^{-1}[ \Dd,b](\lambda  +\Dd^2)^{-1}\Dd
\right) \lambda^{-\alpha/2} d\lambda \mbox{ .}
\end{eqnarray*}
If $[D,b]$ is self-adjoint, then the estimate 
$  -\lVert [\Dd,b] \rVert  \leq   [\Dd,b]  \leq \lVert [\Dd,b] \rVert$ 
and the fact that $\Dd$ is self-adjoint yield
\begin{eqnarray*}
- \lambda  (\lambda +\Dd^2)^{-1}\left\lVert [\Dd,b] \right\rVert  (\lambda  +\Dd^2)^{-1}  &\leq &
\lambda (\lambda+\Dd^2)^{-1}[\Dd,b]  (\lambda +\Dd^2)^{-1} \\
&\leq & \lambda  (\lambda  +\Dd^2)^{-1}\lVert[\Dd,b]\rVert(\lambda +\Dd^2)^{-1} 
\mbox{ ,}
\end{eqnarray*}
and
\begin{eqnarray*}
  -|\Dd|(\lambda  +\Dd^2)^{-1}\lVert [\Dd,b] \rVert  (\lambda  +\Dd^2)^{-1}|\Dd| &\leq & 
-\Dd(\lambda  +\Dd^2)^{-1}[\Dd,b]  (\lambda  +\Dd^2)^{-1}\Dd \\
&\leq&  |\Dd|(\lambda +\Dd^2)^{-1}\left\lVert[\Dd,b]\right\rVert(\lambda  +\Dd^2)^{-1}|\Dd| 
\mbox{ .}
\end{eqnarray*}
Therefore,
\begin{eqnarray*}
   -\left\lVert [\Dd,b] \right\rVert  | \Dd|^{-\alpha}   
\leq [\Dd \lvert\Dd \rvert ^{-\alpha},b] \leq  
\left\lVert [\Dd,b] \right\rVert 
 | \Dd|^{-\alpha}
\mbox{ ,}
\end{eqnarray*} 
and 
\begin{eqnarray}
\label{eqn:abc123temporary}
 \lVert [\Dd \lvert \Dd \rvert ^{-\alpha},b] \rVert _{\frac{p}{\alpha}}
 \leq \left\lVert [ \Dd,b] \right\rVert
\left( \left\lVert  | \Dd|^{-1} \right\rVert _p \right)^\alpha
\leq 
\left\lVert [ \Dd,b] \right\rVert \left(
\left\lVert  ( 1+ \Dd^2)^{-1/2} \right\rVert _p \right)^\alpha < \infty \mbox{ .}
\end{eqnarray}
Now for any $a\in A$,
 $[\Dd \lvert \Dd \rvert ^{-\alpha},a]=[\Dd \lvert \Dd \rvert ^{-\alpha},\frac{a-a^*}{2}]+i[\Dd \lvert \Dd \rvert ^{-\alpha},\frac{a+a^*}{2i}]$ with
 $[\Dd \lvert \Dd \rvert ^{-\alpha},\frac{a-a^*}{2}]$ and 
$[\Dd \lvert \Dd \rvert ^{-\alpha},\frac{a+a^*}{2i}]$ self-adjoint.
By Equation~\ref{eqn:abc123temporary},
$[\Dd \lvert \Dd \rvert ^{-\alpha},\frac{a-a^*}{2}],[\Dd \lvert \Dd \rvert ^{-\alpha},\frac{a+a^*}{2i}]\in\Ll^p$.
Thus  for $\alpha\in[0,1]$, 
\[
\left \lVert [\Dd \lvert \Dd \rvert ^{-\alpha},a]
\right \rVert \leq
\left \lVert  [\Dd ,a] \right \rVert 
 \left(
\left\lVert  ( 1+ \Dd^2)^{-1/2} \right\rVert _p \right)^\alpha \mbox{ ,}\]
and 
$[\Dd \lvert \Dd \rvert ^{-\alpha},a]\in\Ll^{p/\alpha}$.
 The proof is complete.
\end{proof}

\begin{corollary}
\label{pbeta}
If   $\KCYCLE$ is a $p$-summable unbounded Breuer-Fredholm module
with $\Dd$ invertible,
 then its associated Breuer-Fredholm module 
$\FREDMOD$ is also $p$-summable.
\end{corollary}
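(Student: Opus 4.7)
The plan is to deduce this as a direct corollary of Proposition~\ref{pp} by setting $\alpha = 1$. The substantive analytic work has already been done in proving the $\Ll^{p/\alpha}$-estimate for $[\Dd|\Dd|^{-\alpha}, a]$, so this reduction is essentially bookkeeping: we need to verify that $F := \Dd|\Dd|^{-1}$ genuinely defines a bounded Breuer-Fredholm module in the sense of Definition~\ref{fredholmmodule}, and that the commutator $[F,a]$ inherits $p$-summability from $\alpha = 1$.

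First, I would check the structural requirements on $F = \Dd |\Dd|^{-1}$. Since $\Dd$ is self-adjoint and affiliated with $\Nn$, its polar decomposition is $\Dd = F|\Dd|$ with $F\in\Nn$ (this is built into the notion of affiliation recalled before Definition~\ref{fredholmoperator}). Invertibility of $\Dd$ gives $F^2 = \Dd^2|\Dd|^{-2} = 1$, and $F^*=F$. In the even case, since $\Dd$ anticommutes with $\chi$ while $|\Dd|$ (being a function of $\Dd^2$) commutes with $\chi$, $F$ is odd, so the grading axiom is preserved.

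Next, I would apply Proposition~\ref{pp} with $\alpha = 1$. This gives
\[
[F,a] \;=\; [\Dd|\Dd|^{-1},a] \;\in\; \Ll^{p/1} \;=\; \Ll^{p}
\]
with the norm bound $\|[F,a]\|_p \leq \|[\Dd,a]\| \cdot C_p$ for every $a\in A$. This is exactly the definition of $p$-summability for $\FREDMOD$. As a byproduct, $\Ll^p \subset \Kk$ (since $p$-summable operators have generalized singular numbers tending to zero), so the required compactness condition $[F,\rho(a)]\in\Kk$ of Definition~\ref{fredholmmodule} is also automatic.

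The main obstacle has already been overcome in Proposition~\ref{pp}, where the spectral integral representation $|\Dd|^{-\alpha} = C_\alpha^{-1}\int_0^\infty (\lambda + \Dd^2)^{-1}\lambda^{-\alpha/2}d\lambda$ was used to push the commutator through the functional calculus. Here I simply harvest the endpoint $\alpha = 1$ of that estimate; there are no additional analytic difficulties beyond verifying that the bounded-module axioms hold, which is immediate from the polar decomposition and functional calculus.
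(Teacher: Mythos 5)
Your proposal is correct and follows exactly the route the paper intends: the paper gives no explicit proof for this corollary because it is understood to be the endpoint $\alpha = 1$ of Proposition~\ref{pp}, which yields $[F,a]\in\Ll^{p}$ with $\|[F,a]\|_p \leq \|[\Dd,a]\|\,C_p$. Your supplementary checks (that $F=\Dd|\Dd|^{-1}$ is a self-adjoint unitary in $\Nn$ from the polar decomposition, that it is odd with respect to $\chi$ in the graded case, and that $\Ll^p\subset\Kk$ gives the required $\tau$-compactness of $[F,\rho(a)]$) are exactly the bookkeeping the paper omits, and they are all accurate.
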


\subsection{From JLO character to Connes-Chern character}

 Let $\Dd_\alpha:=\Dd |\Dd|^{-\alpha}$ for $\alpha\in[0,1]$, then we have a homotopy between
the unbounded Breuer-Fredholm module  $\KCYCLE$ when $\alpha=0$ and 
its associated bounded Breuer-Fredholm module $\FREDMOD$
 when $\alpha=1$. 
%
%
%
%
When $\KCYCLE$ is $p$-summable and $n-1>p$, we will see that 
$\cch^n_t(\Dd_\alpha)$ (see Equation~\eqref{eqn:retractedJLO})
 defines an  
$\alpha$-family of entire cyclic cocycles for  $t>0$. 
Moreover, these  cocycles
 in fact live in the same entire cyclic cohomology class.
%

Note that our calculations include both the even and odd cases.

\begin{theorem}
\label{cchisexact}
The cochain
$\frac{d\cch^n_t(\Dd_\alpha)}{d\alpha}$ is exact for $\alpha\in[0,1]$ and 
$t\in[1,\infty]$.
Explicitly, it is the $(b+B)$-coboundary of
the entire cochain
\[
\int^t_0 
b\iota(\Dd_\alpha)\hCh^{n}(u\Dd_\alpha,u\dot{\Dd_\alpha})du -\hCh^{\leq n-1}(t\Dd_\alpha,t\dot{\Dd_\alpha})
 \mbox{ .}
\]

\end{theorem}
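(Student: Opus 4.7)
The plan is to differentiate $\cch^n_t(\Dd_\alpha)$ term by term and recognize the result as a $(b+B)$-coboundary. Writing
\[
\cch^n_t(\Dd_\alpha) \;=\; \Ch^{\leq n}(t\Dd_\alpha) \;+\; B\!\int_0^t\!\hCh^{n+1}(u\Dd_\alpha,\Dd_\alpha)\,du,
\]
I split $\frac{d}{d\alpha}\cch^n_t(\Dd_\alpha)$ into its two contributions and handle each separately. Before doing so, I would verify that the family $\alpha\mapsto t\Dd_\alpha$ satisfies the hypotheses of Theorem~\ref{cobound}: since $\Dd_\alpha=\Dd|\Dd|^{-\alpha}$ with $\Dd$ invertible, $\dot\Dd_\alpha=-\Dd_\alpha\log|\Dd|$, and while the logarithm is not itself bounded, the $p$-summability (via Lemma~\ref{ptheta} and Proposition~\ref{pp}) gives the same $\Ll^q$-type norm control on $\dot\Dd_\alpha e^{-s\Dd_\alpha^2}$ that underlies Lemma~\ref{getzler}. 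This lets me apply Theorem~\ref{cobound} along the $\alpha$-direction term by term.

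Summing the identities $\frac{d}{d\alpha}\Ch^k(t\Dd_\alpha)=b\hCh^{k-1}(t\Dd_\alpha,t\dot\Dd_\alpha)+B\hCh^{k+1}(t\Dd_\alpha,t\dot\Dd_\alpha)$ over $k\le n$ of the appropriate parity telescopes into
\[
\frac{d}{d\alpha}\Ch^{\leq n}(t\Dd_\alpha)
= (b+B)\,\hCh^{\leq n-1}(t\Dd_\alpha,t\dot\Dd_\alpha) \;+\; B\,\hCh^{n+1}(t\Dd_\alpha,t\dot\Dd_\alpha).
\]
This already accounts for the $-\hCh^{\leq n-1}(t\Dd_\alpha,t\dot\Dd_\alpha)$ piece of the claimed primitive, up to the unwanted boundary term $B\hCh^{n+1}(t\Dd_\alpha,t\dot\Dd_\alpha)$. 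The plan is that this spurious term will be absorbed by the analysis of the $B$-integral piece.

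For the second piece I would compute $\frac{d}{d\alpha}\hCh^{n+1}(u\Dd_\alpha,\Dd_\alpha)$ by imitating the proof of Theorem~\ref{123}(2). The two arguments of $\hCh^{n+1}$ both depend on $\alpha$, so Leibniz produces a ``$u\Dd_\alpha$-variation'' part (treat via Lemma~\ref{duhammel} plus the same Duhamel/$[\Dd^2,\cdot]$ bookkeeping that appears in Lemma~\ref{misc}(4)) and a ``second-slot-variation'' part (an insertion of $\dot\Dd_\alpha$). The $u\Dd_\alpha$-variation can be massaged via the identity
\[
0=[e^{-s\Dd_\alpha^2},a]+\int_0^1 e^{-\sigma s\Dd_\alpha^2}[\Dd_\alpha^2,a]e^{-(1-\sigma)s\Dd_\alpha^2}\,d\sigma
\]
into telescoping sums of $\langle\cdots\rangle^{n+2}_{u\Dd_\alpha}$ with $[\Dd_\alpha^2,\cdot]$ insertions, which, after the same rearrangements as in the proof of Theorem~\ref{123}(2), produce exactly the operator $b\,\iota(\Dd_\alpha)\hCh^n(u\Dd_\alpha,u\dot\Dd_\alpha)$ under the integrand, together with a total $u$-derivative whose integral from $0$ to $t$ evaluates to $-\hCh^{n+1}(t\Dd_\alpha,t\dot\Dd_\alpha)$ (the $u=0$ endpoint vanishing by the $p$-summability bound $\|\hCh^m(u\Dd_\alpha,\cdot)\|=O(u^{m-p})$ used in Proposition~\ref{finitejlo}). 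Applying the operator $B$ outside then converts $b\iota(\Dd_\alpha)\hCh^n$ into $-\iota(\Dd_\alpha)\hCh^n$ style terms via $Bb=-bB$, and the boundary term $-B\hCh^{n+1}(t\Dd_\alpha,t\dot\Dd_\alpha)$ cancels the spurious $B\hCh^{n+1}$ left over from the first piece. Assembling everything gives the claimed $(b+B)$-coboundary.

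The main obstacle will be the second step: keeping track of the Leibniz contributions when differentiating $\hCh^{n+1}(u\Dd_\alpha,\Dd_\alpha)$ in $\alpha$. In the proof of Theorem~\ref{123}(2) the inserted operator $V$ is independent of the base $\Dd$, whereas here both arguments share the same $\alpha$-dependence, so the Duhamel expansions around the heat kernels $e^{-(t_{j+1}-t_j)u^2\Dd_\alpha^2}$ interact with the explicit $\dot\Dd_\alpha$ insertions. Getting the telescoping to collapse to precisely $b\iota(\Dd_\alpha)\hCh^n(u\Dd_\alpha,u\dot\Dd_\alpha)$ plus a clean $u$-boundary term, with no residual $\alpha^n(\cdot,\cdot)$-style pieces, is the delicate bookkeeping that drives the theorem. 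I expect Lemma~\ref{misc} parts (1)–(4) to be used repeatedly in the manner of the proof of Theorem~\ref{jloisacocycle} to effect these cancellations.
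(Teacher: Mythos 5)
Your overall scheme matches the paper's: split $\frac{d}{d\alpha}\cch^n_t$ into the derivative of $\Ch^{\leq n}(t\Dd_\alpha)$ (handled by Theorem~\ref{cobound} and a telescope) and the derivative of the $B$-integral term, then arrange for the boundary term $B\hCh^{n+1}(t\Dd_\alpha,t\dot\Dd_\alpha)$ to be absorbed via a fundamental-theorem-of-calculus rewrite of the $u$-integral. This is precisely what the paper does: it writes $-\hCh^{n+1}(t\Dd_\alpha,t\dot\Dd_\alpha)=-\int_0^t\frac{d}{du}\hCh^{n+1}(u\Dd_\alpha,u\dot\Dd_\alpha)\,du$, forms the integrand $\frac{d}{d\alpha}\hCh^{n+1}(u\Dd_\alpha,\Dd_\alpha)-\frac{d}{du}\hCh^{n+1}(u\Dd_\alpha,u\dot\Dd_\alpha)$, evaluates each $\frac{d}{ds}$ via Lemma~\ref{derivative}, and then recognizes the difference as the right-hand side of Lemma~\ref{level2aux}, which converts it into $b\iota(\Dd_\alpha)\hCh^n+B\iota(\Dd_\alpha)\hCh^{n+2}$; the identities $b^2=B^2=0$ and $Bb=-bB$ then finish the algebra. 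Where you propose to ``imitate the proof of Theorem~\ref{123}(2)'' by hand, the paper instead invokes its two-slot analogue Lemma~\ref{level2aux} (whose point is exactly that the $\alpha^{n+1}(\cdot,\cdot)$-type residuals and the double contractions pair off), together with the Leibniz formula Lemma~\ref{derivative}. Your anticipation that ``no residual $\alpha^n(\cdot,\cdot)$-style pieces'' should survive is correct, but that cancellation is not automatic from the single-insertion calculus; it is the content of Lemma~\ref{level2aux}, and your proposal leaves it unproved.

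The more serious gap is analytic. You remark that ``the $p$-summability gives the same $\Ll^q$-type norm control\dots that underlies Lemma~\ref{getzler}'' and that this ``lets me apply Theorem~\ref{cobound} along the $\alpha$-direction term by term.'' This glosses over the actual obstruction: $\dot\Dd_\alpha=-\Dd_\alpha\ln|\Dd|$ is not of the form $F_t|\Dd_t|^{1+\varepsilon}+R_t$ uniformly in $\alpha$ up to $\alpha=1$. Indeed $|\Dd_\alpha|=|\Dd|^{1-\alpha}$, so bounding $\ln|\Dd|$ by a power of $|\Dd_\alpha|$ costs a factor $\frac{1}{1-\alpha}$, and at $\alpha=1$ (where $\Dd_1=F$ has trivial modulus) the hypotheses of Theorem~\ref{cobound} and Lemma~\ref{getzler} fail outright. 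The paper addresses this with a separate analysis: Lemma~\ref{somefunctionalequation} and Proposition~\ref{someshitisbounded} establish that $[F\ln|\Dd|,a]$ is bounded, and Proposition~\ref{somethingannoyingisentire} carries out the estimate on $\int_0^t b\iota(\Dd_\alpha)\hCh^n(u\Dd_\alpha,u\dot\Dd_\alpha)\,du$, tracking the $\frac{1}{1-\alpha}$ singularity, splitting off the $\alpha=1$ endpoint, and using the spectral gap $\lambda=\inf\sigma(\Dd^2)>0$ to make the $u$-integral converge at $t=\infty$. Without this your claim that the primitive is an \emph{entire} cochain — which is part of the statement — is not established; simply citing Lemma~\ref{ptheta} and Proposition~\ref{pp} is not enough.
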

To prove this theorem, we need the following identities and estimate. They are nothing but 
elaborations of Theorem~\ref{123}(2), Lemma~\ref{duhammel}, 
and Lemma~\ref{getzler}.

\begin{definition}
Let $V$ and $W$ be operators affiliated with $\Nn$ such that
they have the same degree as $\Dd$, i.e.
 $|\Dd|_\chi = |V|_\chi =|W|_\chi$.
Define $\hCh^\bullet(\Dd,V,W)$ to be given by the equation
\begin{eqnarray*}
\lefteqn{\left(\hCh^n(\Dd,V,W),(a_0,\ldots,a_n)_n\right)}
\\&:=&
 \sum_{k=1}^{j} (-1)^{k} \sum_{j=1}^{n+1}(-1)^j 
\left\langle a_0, \ldots,[\Dd,a_{k-1}],W,\ldots, V ,[\Dd,a_j]\ldots 
\right\rangle^{n+2}_{\Dd} 
\\&&+
 \sum_{k=j}^{n+2} (-1)^{k+1} \sum_{j=1}^{n+1}(-1)^j 
\left\langle a_0, \ldots, [\Dd,a_{j-1}],V,\ldots ,W,[\Dd,a_k],\ldots
\right\rangle^{n+2}_{\Dd}  \mbox{ .}
\end{eqnarray*}
\end{definition}

\begin{lemma}
Let $V$ and $W$ be operators affiliated with $\Nn$. Then we have
\label{level2aux}
 \begin{eqnarray*}
\lefteqn{ b \hCh^{n-1}(\Dd,V,W)+B  \hCh^{n+1}(\Dd,V,W) }\\
 &=& \iota(V)\left( \iota(\Dd W+W \Dd) \Ch^n(\Dd) 
 -\alpha^n(\Dd,W)\right)
\\&& -
 \iota(W)\left(\iota(\Dd V+V\Dd)\Ch^n(\Dd)
- \alpha^n(\Dd,V)\right) \mbox{ .}
 \end{eqnarray*}
\end{lemma}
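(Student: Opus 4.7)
The strategy is a direct computation in close analogy with the proof of Theorem~\ref{123}(2), viewing $\hCh^\bullet(\Dd,V,W)$ as a ``two-variable refinement'' of $\hCh^\bullet(\Dd,V)$. First I would observe that the right-hand side of the claimed identity can be reorganized by means of Theorem~\ref{123}(2) into the suggestive form
\[
\iota(W)(b+B)\hCh^\bullet(\Dd,V) \;-\; \iota(V)(b+B)\hCh^\bullet(\Dd,W),
\]
so the lemma asserts a graded-Leibniz compatibility between the contraction operators $\iota(V),\iota(W)$ and the total differential $b+B$ acting on the one-insertion cochain. This reformulation suggests that all ingredients used to prove Theorem~\ref{123}(2) will suffice, only now with bookkeeping for two insertions.

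Concretely I would compute $b\hCh^{n-1}(\Dd,V,W)$ first, by grouping the terms of its defining sum according to the insertion positions $(j,k)$ of $V$ and $W$, and expanding $b$ via the graded Leibniz identity $[\Dd, a_k a_{k+1}] = a_k [\Dd,a_{k+1}] + [\Dd,a_k] a_{k+1}$ exactly as in the proof of Theorem~\ref{123}(2). The sum telescopes: all ``interior'' contributions cancel in pairs, and what remains splits into three pieces --- (i) commutator terms $\langle\ldots,[V,a_j],\ldots\rangle$ and $\langle\ldots,[W,a_k],\ldots\rangle$, assembling into $\iota(V)\alpha^n(\Dd,W)$ and $-\iota(W)\alpha^n(\Dd,V)$; (ii) $[\Dd^2,a_k]$-type corrections, reduced via Lemma~\ref{misc}(4); and (iii) boundary terms with $\Dd V+V\Dd$ or $\Dd W+W\Dd$ adjacent to some $[\Dd,a_j]$. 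Next I would compute $B\hCh^{n+1}(\Dd,V,W)$ using Lemma~\ref{misc}(1) to cyclically permute insertions past the $[\Dd,a_j]$ factors and Lemma~\ref{misc}(2) to absorb the inserted $1$'s. These terms recombine with the (iii)-contributions from $b$ to give the two double-contractions $\iota(V)\iota(\Dd W+W\Dd)\Ch^n(\Dd)$ and $-\iota(W)\iota(\Dd V+V\Dd)\Ch^n(\Dd)$.

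The main obstacle is the careful tracking of signs and positions. The definition of $\hCh^n(\Dd,V,W)$ has two separate sums (one for $W$ left of $V$, one for $W$ right of $V$) with signs $(-1)^{j+k}$ and $(-1)^{j+k+1}$ respectively; these signs are arranged precisely so that whenever $V$ and $W$ end up in adjacent positions the two contributions cancel. Verifying this cancellation --- which is essential because no graded commutator $[V,W]$ term appears on the right-hand side --- together with checking that the edge terms from $b\hCh^{n-1}$ match the cyclic contributions from $B\hCh^{n+1}$ under the grading convention $|V|_\chi=|W|_\chi=|\Dd|_\chi$, is the core technical step. Once these matchings are set up, collecting the surviving terms yields the claimed identity.
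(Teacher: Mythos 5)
Your proposal is correct and matches the paper's (stated but omitted) approach: the paper simply cites \cite{chernreduct} and remarks that the proof ``is nothing but an elaboration of the proof to Lemma~\ref{123}(2),'' which is exactly the direct computation you outline. Your preliminary reformulation of the right-hand side as $\iota(W)(b+B)\hCh^\bullet(\Dd,V)-\iota(V)(b+B)\hCh^\bullet(\Dd,W)$ via Theorem~\ref{123}(2), and your identification of the adjacent $V$--$W$ sign cancellation as the key technical point, are both sound and show you see where the bookkeeping lives.
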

The above lemma can be found in \cite{chernreduct}.
Its proof is nothing but an elaboration of the proof to
 Lemma~\ref{123}(2), which is a  lengthy but straight forward calculation,
 so we decide to skip it here.
\begin{lemma}
\label{derivative}
Suppose that $\Dd_s$ and $V_s$ are 1-parameter
 families of operators affiliated with $\Nn$ so that $\dot{\Dd_s}$ and $\dot{V_s}$ are defined
and affiliated with $\Nn$, then
 \[
  \frac{d }{ds}\hCh^n(\Dd_s,V_s)
=\hCh^n(\Dd_s,\dot{V_s})+\iota(V_s)\alpha^n(\Dd_s,\dot{\Dd_s})
-\iota(V_s) \iota(\Dd_s \dot{\Dd_s} + \dot{\Dd_s} \Dd_s)
\Ch ^n(\Dd_s ) 
\mbox{ .}
 \]
\end{lemma}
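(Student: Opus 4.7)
The plan is to apply the Leibniz rule directly to the defining formula
\[
\hCh^n(\Dd_s,V_s)(a_0,\ldots,a_n) = \sum_{j=1}^{n+1}(-1)^j \langle a_0,[\Dd_s,a_1],\ldots,[\Dd_s,a_{j-1}],V_s,[\Dd_s,a_j],\ldots,[\Dd_s,a_n]\rangle^{n+1}_{\Dd_s},
\]
and to organize the resulting contributions into three groups according to which factor the $\tfrac{d}{ds}$ is hitting: (i) the central entry $V_s$; (ii) each commutator $[\Dd_s,a_k]$ for $k=1,\ldots,n$; and (iii) the $\Dd_s$-dependence of the JLO bracket $\langle\cdots\rangle^{n+1}_{\Dd_s}$ itself.

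First, group (i) immediately reproduces $\hCh^n(\Dd_s,\dot V_s)$ by reading off the definition of $\hCh^n$. Group (ii) replaces each $[\Dd_s,a_k]$ by $[\dot\Dd_s,a_k]$ and yields the double sum $\sum_j(-1)^j\sum_k\langle\cdots[\dot\Dd_s,a_k]\cdots V_s\cdots\rangle^{n+1}_{\Dd_s}$. Group (iii), by applying Lemma~\ref{duhammel} to each $j$-th summand of $\hCh^n$, produces an insertion of $W_s := \Dd_s\dot\Dd_s+\dot\Dd_s\Dd_s$ into each of the $n+2$ gaps of the enlarged bracket, carrying an overall minus sign and the same $(-1)^j$ weighting.

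The heart of the proof is then to recognize the sum from (ii) as $\iota(V_s)\alpha^n(\Dd_s,\dot\Dd_s)$ and the sum from (iii) as $-\iota(V_s)\iota(W_s)\Ch^n(\Dd_s)$. I would re-parametrize the $V_s$-position using $l=j-1$ to convert between ``the position of $V_s$ in the $(n+2)$-entry bracket'' and ``the gap of $\Ch^n(\Dd_s)$ into which $V_s$ has been inserted,'' and use the graded cyclic symmetry of Lemma~\ref{misc}(1) (together with Lemma~\ref{misc}(2) to introduce or remove unit entries when necessary) to trade the signed sums for the unsigned ones appearing in $\iota(V_s)\alpha^n$ and $\iota(V_s)\iota(W_s)\Ch^n$.

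The main obstacle will be the bookkeeping of the $(-1)^j$ signs inherited from the definition of $\hCh^n$, which do not appear on the right-hand side and must be absorbed via Lemma~\ref{misc}(1) when cyclically shifting $V_s$. The combinatorics parallels the proof of Theorem~\ref{123}(2), which carried out exactly this sign-tracking for a single insertion $V$; here there are two insertions ($V_s$ together with either $[\dot\Dd_s,a_k]$ or $W_s$), but the same telescoping mechanism should apply. A convenient sanity check is the special case $\dot V_s=0$, in which the claimed identity collapses to $\partial_s\hCh^n(\Dd_s,V)=\iota(V)\bigl(\alpha^n(\Dd_s,\dot\Dd_s)-\iota(W_s)\Ch^n(\Dd_s)\bigr)$, formally $\iota(V)$ applied to both sides of Theorem~\ref{123}(2) with $V$ there replaced by $\dot\Dd_s$.
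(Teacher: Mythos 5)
Your top-level strategy is exactly the paper's: differentiate the defining formula via Leibniz, split the result into the contributions from $\dot V_s$, from the $\tfrac{d}{ds}[\Dd_s,a_k]$, and from the $\tfrac{d}{ds}e^{-(t_{j+1}-t_j)\Dd_s^2}$ factors (the last handled by Lemma~\ref{duhammel}), and read off the three terms on the right-hand side. The paper's actual proof consists of nothing more than these two observations and does not elaborate on the bookkeeping.

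Where your proposal diverges is the claimed mechanism for reconciling the $(-1)^j$ weights in the definition of $\hCh^n$ with the unsigned sums in the definitions of $\iota$, $\alpha^n$, and $\iota\iota\,\Ch^n$. You propose to ``absorb'' those signs using the graded cyclic symmetry of Lemma~\ref{misc}(1) together with Lemma~\ref{misc}(2). That route would not work. Lemma~\ref{misc}(1) permits only \emph{cyclic} rotations of the entries, with the graded Koszul sign attached, not arbitrary repositioning of $V_s$ within a bracket; and in the odd (ungraded) case that Koszul sign is identically $+1$, so Lemma~\ref{misc}(1) is incapable of producing an alternating factor $(-1)^j$. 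Lemma~\ref{misc}(2) inserts $1$'s into the bracket, which is not what is wanted either. Neither lemma appears in the paper's proof of this statement. What actually happens in the Leibniz expansion is that each of the $n+1$ summands of $\hCh^n(\Dd_s,V_s)$ carries its $(-1)^j$ weight into groups (ii) and (iii) unchanged; the resulting signed double sums are what the paper denotes $\iota(V_s)\alpha^n(\Dd_s,\dot\Dd_s)$ and $-\iota(V_s)\iota(\Dd_s\dot\Dd_s+\dot\Dd_s\Dd_s)\Ch^n(\Dd_s)$ when $\iota(V_s)$ is understood to remember its position weight inherited from $\hCh$; this is how it is used in the subsequent proof of Theorem~\ref{cchisexact}, and is analogous to the signed definition of $\iota(\dot F_t)\ch^{n-1}(F_t)$ in Theorem~\ref{conneschartransgression}. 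So the bookkeeping you anticipated is there, but the resolution is simply that the signs are carried along, not that they cancel via Lemma~\ref{misc}. Your consistency check at the end relating the $\dot V_s=0$ case to Theorem~\ref{123}(2) is a good heuristic and is in the right spirit of how Lemma~\ref{derivative} fits into the surrounding results.
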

\begin{proof}
By applying Leibniz rule on $\frac{d }{ds}\hCh^n(\Dd_s,V_s)$,
we will obtain a sum of the term containing $\frac{d }{ds}V_s$,
terms containing
the $\frac{d }{ds}[\Dd_s,a_k]$ , and the  terms containing
$\frac{d }{ds} e^{-s_j\Dd_s^2}$.
They collect into 
 $\hCh^n(\Dd_s,\dot{V_s})$, $\iota(V_s)\alpha^n(\Dd_s,\dot{\Dd_s})
$, and by Lemma~\ref{duhammel}, 
$-\iota(V_s) \iota(\Dd_s \dot{\Dd_s} + \dot{\Dd_s} \Dd_s)
\Ch ^n(\Dd_s ) $.
\end{proof}

By using Lemma \ref{level2aux} and \ref{derivative},
we can establish the algebraic equality
\[
\frac{d\cch^n_t(\Dd_\alpha)}{d\alpha} = (b+B)\left(
\int^t_0 
b\iota(\Dd_\alpha)\hCh^{n}(u\Dd_\alpha,u\dot{\Dd_\alpha})du -\hCh^{\leq n-1}(t\Dd_\alpha,t\dot{\Dd_\alpha}) \right)
 \mbox{ .}
\]
However, the major task is to show that
in fact 
\[
\int^t_0 
b\iota(\Dd_\alpha)\hCh^{n}(u\Dd_\alpha,u\dot{\Dd_\alpha})du -\hCh^{\leq n-1}(t\Dd_\alpha,t\dot{\Dd_\alpha}) \] is entire.
And the analysis required to prove its entireness is a little involved.

We begin by observing the operator 
$[F \ln \lvert \Dd \rvert , a]$ is bounded.
 To show this, we need the following lemma.
\begin{lemma}
\label{somefunctionalequation}
Let $H$ be a positive operator. Then
\begin{eqnarray*}
H^{-\frac{\alpha}{2}} \ln H &=& \frac{1}{C_\alpha }
\int _0 ^\infty (H+\lambda)^{-1} \lambda ^{-\frac{\alpha}{2}}  
(\ln \lambda )
d\lambda -   \frac{C_\alpha '}{C_\alpha }H^{-\frac{\alpha}{2}} 
\end{eqnarray*} for $\alpha >0$, where
$C_\alpha=\int _0 ^\infty (1+x)^{-1} x^{-\alpha/2}  dx$ and 
$C_\alpha '=\int _0 ^\infty (1+x)^{-1} x^{-\alpha/2} ( \ln x )dx$.
\end{lemma}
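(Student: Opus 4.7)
The plan is to prove the identity by reducing it, via the functional calculus applied to the positive operator $H$, to the corresponding scalar identity
\[
t^{-\alpha/2}\ln t = \frac{1}{C_\alpha}\int_0^\infty (t+\lambda)^{-1}\lambda^{-\alpha/2}(\ln\lambda)\,d\lambda - \frac{C_\alpha'}{C_\alpha}\,t^{-\alpha/2}
\]
for each $t>0$. Since everything in sight is built out of bounded continuous functions of $H$, once the scalar identity is verified the operator version follows immediately from the spectral theorem; so the entire content of the lemma is a one-variable calculation, which I would tackle directly rather than by differentiating the formula
\[
H^{-\alpha/2}=\frac{1}{C_\alpha}\int_0^\infty (H+\lambda)^{-1}\lambda^{-\alpha/2}\,d\lambda
\]
recalled in the proof of Proposition~\ref{pp} (although that differentiation route, using $\frac{d}{d\alpha}H^{-\alpha/2}=-\tfrac{1}{2}H^{-\alpha/2}\ln H$ and $\frac{dC_\alpha}{d\alpha}=-\tfrac{1}{2}C_\alpha'$, provides an equivalent derivation).

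The key step is the scaling substitution $\lambda=ty$ in the integral
\[
\int_0^\infty (t+\lambda)^{-1}\lambda^{-\alpha/2}(\ln\lambda)\,d\lambda.
\]
Under this change of variable $(t+\lambda)^{-1}=t^{-1}(1+y)^{-1}$, $\lambda^{-\alpha/2}=t^{-\alpha/2}y^{-\alpha/2}$, $\ln\lambda=\ln t+\ln y$ and $d\lambda=t\,dy$, so the integral becomes
\[
t^{-\alpha/2}\int_0^\infty (1+y)^{-1}y^{-\alpha/2}\bigl(\ln t+\ln y\bigr)\,dy = t^{-\alpha/2}\bigl(C_\alpha\ln t + C_\alpha'\bigr).
\]
Dividing through by $C_\alpha$ and rearranging gives the scalar identity above, from which the operator identity follows by the spectral theorem applied to $H$.

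There is essentially no obstacle beyond housekeeping: one needs $C_\alpha$ and $C_\alpha'$ to be finite, which is the case for $0<\alpha<2$ (matching the range $\alpha\in(0,1]$ in which the lemma is used), and one needs to justify the split of $\ln(ty)$ by observing that both $\int_0^\infty (1+y)^{-1}y^{-\alpha/2}\,dy$ and $\int_0^\infty (1+y)^{-1}y^{-\alpha/2}\ln y\,dy$ converge absolutely in that range, so Fubini applies and the substitution is legitimate. The only point worth flagging is that the identity should be understood in the strong sense on $\mathrm{Dom}(H^{-\alpha/2}\ln H)$; since $H$ is assumed positive (and in the applications is bounded below away from $0$, via $H=\Dd^2$ with $\Dd$ invertible), this domain issue is harmless.
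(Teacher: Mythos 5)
Your proof is correct and rests on the same central idea as the paper's: the scaling substitution $\lambda = ty$ (equivalently the paper's $x = \lambda/y$), followed by functional calculus in $H$. The only difference is how you get to the logarithmic identity: the paper first records the identity $C_\alpha = y^{\alpha/2}\int_0^\infty (y+\lambda)^{-1}\lambda^{-\alpha/2}\,d\lambda$ and then differentiates both sides with respect to $\alpha$ (using $\tfrac{d}{d\alpha}\lambda^{-\alpha/2} = -\tfrac12\lambda^{-\alpha/2}\ln\lambda$ and $\tfrac{d}{d\alpha}y^{\alpha/2} = \tfrac12 y^{\alpha/2}\ln y$), whereas you substitute directly into $\int_0^\infty (t+\lambda)^{-1}\lambda^{-\alpha/2}\ln\lambda\,d\lambda$ and split $\ln(ty) = \ln t + \ln y$. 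Your route is slightly more elementary since it avoids justifying differentiation under the integral sign, and you also flag more precisely that $C_\alpha$ and $C_\alpha'$ converge only for $0<\alpha<2$ (the paper says "as long as $\alpha>0$", which omits the upper constraint that is nevertheless harmless in the intended range $\alpha\in(0,1]$). Both derivations are sound and of comparable length, so this is a minor variation rather than a genuinely different route.
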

\begin{proof}
From changing variable $x=\lambda/y$, we obtain 
\begin{eqnarray*}
\int _0 ^\infty (1+x)^{-1} x^{-\alpha/2} dx &=&
y^{\alpha/2} \int _0 ^\infty (y+\lambda)^{-1} \lambda ^{-\alpha/2} 
d\lambda \mbox{ .}
\end{eqnarray*}
By differentiating both sides with respect to $\alpha$,
the above turns into
\begin{eqnarray*}
\int _0 ^\infty (1+x)^{-1} x^{-\alpha/2} ( \ln x )
 dx &=&
y^{\alpha/2} \int _0 ^\infty (y+\lambda)^{-1} \lambda ^{-\alpha/2} 
\ln (\lambda / y)
d\lambda\\ &=&
y^{\alpha/2} \int _0 ^\infty (y+\lambda)^{-1} \lambda ^{-\alpha/2} 
(\ln \lambda )
d\lambda\\
&&-
\ln y
 \int _0 ^\infty (1+x)^{-1} x ^{-\alpha/2} 
dx \mbox{ ,}
\end{eqnarray*}
where the integrals converge as long as $\alpha >0$.
Now using functional calculus to substitute $H$ in $y$ to get
\begin{eqnarray*}
H^{-\frac{\alpha}{2}} \ln H &=& \frac{1}{C_\alpha }
\int _0 ^\infty (H+\lambda)^{-1} \lambda ^{-\frac{\alpha}{2}}  
(\ln \lambda )
d\lambda -   \frac{C_\alpha '}{C_\alpha }H^{-\frac{\alpha}{2}} \mbox{ ,}
\end{eqnarray*}
which is the desired equation.
\end{proof}

\begin{proposition}
\label{someshitisbounded}
Let $\Dd$ be invertible and $F=\Dd \lvert \Dd \rvert ^{-1} $.
For any $a\in A$, the commutator
$[ F \ln \lvert \Dd \rvert ,a ]$ is bounded.
\end{proposition}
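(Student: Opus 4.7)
The plan is to mimic the integral representation strategy used in the proof of Proposition~\ref{pp}, but with Lemma~\ref{somefunctionalequation} replacing the simpler spectral formula for $H^{-\alpha/2}$.

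First, I would apply Lemma~\ref{somefunctionalequation} with $H=\Dd^2$ and $\alpha=1$, and use $\ln|\Dd|=\tfrac{1}{2}\ln\Dd^2$, to obtain
\[
F\ln|\Dd|=\frac{\Dd}{2C_1}\int_0^\infty(\Dd^2+\lambda)^{-1}\lambda^{-1/2}(\ln\lambda)\,d\lambda-\frac{C_1'}{2C_1}F.
\]
The second term contributes $-\tfrac{C_1'}{2C_1}[F,a]$, which is bounded since $[F,a]\in\Kk\subset\Nn$, so attention turns to the first term.

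Next, I would formally bring the commutator inside the integral and compute $[\Dd(\Dd^2+\lambda)^{-1},a]$ by the same resolvent manipulation used in the proof of Proposition~\ref{pp}: expanding $[\Dd^2,a]=\Dd[\Dd,a]+[\Dd,a]\Dd$ yields
\[
[\Dd(\Dd^2+\lambda)^{-1},a]=\lambda(\Dd^2+\lambda)^{-1}[\Dd,a](\Dd^2+\lambda)^{-1}-\Dd(\Dd^2+\lambda)^{-1}[\Dd,a](\Dd^2+\lambda)^{-1}\Dd.
\]
Once absolute norm convergence of the resulting operator-valued integral is established, the interchange is automatic by Fubini and the desired norm bound falls out.

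The main obstacle is integrability of $\lambda^{-1/2}|\ln\lambda|$ against the operator norms near $\lambda=0$, where the logarithmic singularity is not tamed by the measure alone. Here the invertibility hypothesis is decisive: setting $\epsilon:=\inf\sigma(|\Dd|)>0$, one has $\|(\Dd^2+\lambda)^{-1}\|\leq(\epsilon^2+\lambda)^{-1}$ and $\|\Dd(\Dd^2+\lambda)^{-1}\|\leq\epsilon^{-1}$ for $\lambda\in(0,\epsilon^2]$, making each integrand piece $O(\lambda^{1/2}|\ln\lambda|)$ near $0$. For $\lambda\geq\epsilon^2$, the standard estimate $\|\Dd(\Dd^2+\lambda)^{-1}\|\leq(2\sqrt{\lambda})^{-1}$ together with $\|\lambda(\Dd^2+\lambda)^{-1}\|\leq 1$ gives $O(\lambda^{-3/2}|\ln\lambda|)$ at infinity. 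Both regimes are integrable, so the integral converges absolutely in $\Nn$, the commutator commutes through, and one concludes $\|[F\ln|\Dd|,a]\|\leq C\|[\Dd,a]\|$ with $C$ depending only on $\Dd$.
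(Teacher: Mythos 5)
Your proposal is correct and uses the same two key ingredients as the paper --- Lemma~\ref{somefunctionalequation} with $H=\Dd^2$, $\alpha=1$, and the resolvent identity for $[\Dd(\Dd^2+\lambda)^{-1},a]$ --- but it diverges at the final estimation step. The paper reduces to $[\Dd,b]$ self-adjoint (writing $a$ as a combination of two self-adjoint elements), sandwiches the integrand between $\pm\|[\Dd,b]\|$ times a positive operator integrand, and then re-applies Lemma~\ref{somefunctionalequation} to evaluate the resulting positive-operator integral exactly, yielding an explicit bound involving $\| |\Dd|^{-1}\ln|\Dd|\|$. You instead bound the operator norm of the integrand pointwise in $\lambda$ (splitting the ray at $\lambda=\epsilon^2$) and verify absolute (Bochner) integrability of $\lambda^{-1/2}|\ln\lambda|$ against those norms. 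Your route is more elementary and avoids the self-adjoint reduction entirely, at the cost of cruder constants; the paper's route is tighter and mirrors the estimate style of Proposition~\ref{pp}. One small inaccuracy in your write-up: for $\lambda\in(0,\epsilon^2]$ the second integrand piece $\Dd(\Dd^2+\lambda)^{-1}[\Dd,a](\Dd^2+\lambda)^{-1}\Dd\cdot\lambda^{-1/2}\ln\lambda$ is only $O(\lambda^{-1/2}|\ln\lambda|)$, not $O(\lambda^{1/2}|\ln\lambda|)$ as stated, since the two factors $\|\Dd(\Dd^2+\lambda)^{-1}\|\leq\epsilon^{-1}$ contribute a constant rather than a power of $\lambda$; this does not damage the argument, because $\lambda^{-1/2}|\ln\lambda|$ is still integrable at $0$. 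You might also spell out the truncation argument justifying the interchange of the commutator with the (only conditionally norm-convergent) integral $\int_0^\infty\Dd(\Dd^2+\lambda)^{-1}\lambda^{-1/2}\ln\lambda\,d\lambda$, since the absolute convergence you establish is for the \emph{commutator} integrand, not for $\Dd(\Dd^2+\lambda)^{-1}$ itself --- though this is a technicality the paper also glosses over.
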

\begin{proof}
By applying Lemma~\ref{somefunctionalequation} for $H=\Dd^2$ and
$\alpha=1$, one obtains
\begin{eqnarray*}
2[F
 \ln \lvert \Dd \rvert , b]
&=&
[\Dd,b] \lvert \Dd \rvert ^{-1} \ln \Dd^2 
+
\Dd [\lvert \Dd \rvert ^{-1} \ln \Dd^2 ,b]\\
&=&
[\Dd,b]\left(
\frac{1}{C_1} \int_0 ^\infty (\Dd^2+\lambda)^{-1} 
\lambda^{-1/2}(\ln \lambda) d\lambda - \frac{C_1 '}{C_1} \lvert \Dd \rvert ^{-1}
\right)\\
&&+
\Dd \left(
\frac{1}{C_1} \int_0 ^\infty
[ (\Dd^2+\lambda)^{-1} ,b]
\lambda^{-1/2}(\ln \lambda) d\lambda -
 \frac{C_1 '}{C_1}[ \lvert \Dd \rvert ^{-1},b]
\right)\\
&=& \frac{1}{C_1} \int _0 ^\infty 
[ \Dd (\Dd^2 + \lambda)^{-1},b] \lambda ^{-1/2}(  \ln \lambda)
d\lambda
- \frac{C_1 '}{C_1} [F,b]  
\end{eqnarray*}
for $b \in A$.
Since $[F,b]$ is bounded, we see that $[F\ln \lvert \Dd \rvert,b]$
is bounded  if and only if 
$\frac{1}{C_1} \int _0 ^\infty 
[ \Dd (\Dd^2 + \lambda)^{-1},b] \lambda ^{-1/2}(  \ln \lambda)
d\lambda$ is bounded. We compute
\begin{eqnarray*}
\lefteqn{ \frac{1}{C_1} \int _0 ^\infty 
[ \Dd (\Dd^2 + \lambda)^{-1},b] \lambda ^{-1/2}(  \ln \lambda)
d\lambda} \\ &=&
\frac{1}{C_1} \int _0 ^\infty  \left(
[ \Dd,b]  (\Dd^2 + \lambda)^{-1}+
\Dd[ (\Dd^2 + \lambda)^{-1},b] \right)
\lambda ^{-1/2}(  \ln \lambda) d\lambda \\
&=&
\frac{1}{C_1} \int _0 ^\infty 
\left( [ \Dd,b]  (\Dd^2 + \lambda)^{-1}
\right. \\ && \left.
-
\Dd (\Dd^2 + \lambda)^{-1}
 [ \Dd^2 + \lambda,b] 
(\Dd^2 + \lambda)^{-1}
\right)
\lambda ^{-1/2}(  \ln \lambda) d\lambda \\
&=&
\frac{1}{C_1} \int _0 ^\infty 
\left(
(\Dd^2 + \lambda)(\Dd^2 + \lambda)^{-1}[ \Dd,b]  (\Dd^2 + \lambda)^{-1}
\right. \\ && \left.
-
\Dd (\Dd^2 + \lambda)^{-1} 
\left(\Dd [ \Dd  ,b] + [ \Dd ,b] \Dd\right)
 (\Dd^2 + \lambda)^{-1} \right)
\lambda ^{-1/2}(  \ln \lambda) d\lambda \\
&=&
\frac{1}{C_1} \int _0 ^\infty 
\left(
 \lambda(\Dd^2 + \lambda)^{-1}[ \Dd,b]  (\Dd^2 + \lambda)^{-1}
\right. \\ && \left.
-
\Dd (\Dd^2 + \lambda)^{-1} 
 [ \Dd ,b] \Dd
 (\Dd^2 + \lambda)^{-1} \right)
\lambda ^{-1/2}(  \ln \lambda) d\lambda  \mbox{ .}
\end{eqnarray*}
If $[D,b]$ is self-adjoint, then the estimate 
$  -\lVert [\Dd,b] \rVert  \leq   [\Dd,b]  \leq \lVert [\Dd,b] \rVert$ 
and the fact that $\Dd$ is self-adjoint yield
\begin{eqnarray*}
- \lambda  (\lambda +\Dd^2)^{-1}\left\lVert [\Dd,b] \right\rVert  (\lambda  +\Dd^2)^{-1}  &\leq &
\lambda (\lambda+\Dd^2)^{-1}[\Dd,b]  (\lambda +\Dd^2)^{-1} \\
&\leq & \lambda  (\lambda  +\Dd^2)^{-1}\lVert[\Dd,b]\rVert(\lambda +\Dd^2)^{-1} 
\mbox{ ,}
\end{eqnarray*}
and
\begin{eqnarray*}
  -|\Dd|(\lambda  +\Dd^2)^{-1}\lVert [\Dd,b] \rVert  (\lambda  +\Dd^2)^{-1}|\Dd| &\leq & 
-\Dd(\lambda  +\Dd^2)^{-1}[\Dd,b]  (\lambda  +\Dd^2)^{-1}\Dd \\
&\leq&  |\Dd|(\lambda +\Dd^2)^{-1}\left\lVert[\Dd,b]\right\rVert(\lambda  +\Dd^2)^{-1}|\Dd| 
\mbox{ .}
\end{eqnarray*}
Hence,
\begin{eqnarray*}
   -\left\lVert [\Dd,b] \right\rVert 
\left(
 | \Dd|^{-1} \ln \lvert \Dd \rvert + \frac{C_1 ' }{C_1} \lvert \Dd \rvert ^{-1} \right)
&\leq&
\frac{1}{C_1} \int _0 ^\infty 
[ \Dd (\Dd^2 + \lambda)^{-1},b] \lambda ^{-1/2}(  \ln \lambda)
d\lambda \\
&\leq  &
\left\lVert [\Dd,b] \right\rVert \left(
 | \Dd|^{-1} \ln \lvert \Dd \rvert + \frac{C_1 ' }{C_1} \lvert \Dd \rvert ^{-1} \right)
\mbox{ .}
\end{eqnarray*} 
Therefore, in the end we obtain
\begin{eqnarray}
\label{unknownequation1}
\left \lVert
[F  \ln \lvert \Dd \rvert , b]
\right \rVert
&\leq&\frac{1}{2}\left\lVert  [\Dd,b] \right\rVert  
\left( 
\left\lVert | \Dd|^{-1} \ln \lvert \Dd \rvert \right\rVert
+ \frac{C_1 '}{C_1} \left\lVert  \lvert \Dd \rvert ^{-1} \right \rVert 
\right)
+ \frac{C_1'}{2C_1} \left\lVert [F,b]\right\rVert
 \mbox{ ,}
\end{eqnarray} which is bounded.
Since for any $a\in A$, 
$[F\ln \lvert \Dd \rvert ,a]=[F\ln \lvert \Dd \rvert 
,\frac{a-a^*}{2}]+i[F\ln \lvert \Dd \rvert ,\frac{a+a^*}{2i}]$ with
 $[F\ln \lvert \Dd \rvert ,\frac{a-a^*}{2}]$ and 
$[F\ln \lvert \Dd \rvert ,\frac{a+a^*}{2i}]$ self-adjoint.
By Equation~\ref{unknownequation1},
$[F\ln \lvert \Dd \rvert ,\frac{a-a^*}{2}],[F\ln \lvert \Dd \rvert ,\frac{a+a^*}{2i}]$ are bounded.
Hence, so is  $
\left \lVert [F\ln \lvert \Dd \rvert ,a]
\right \rVert $.
\end{proof}

\begin{proposition}
\label{somethingannoyingisentire}
  For a $p$-summable unbounded Breuer-Fredholm module $\KCYCLE$ with $\Dd$ invertible, 
set $\Dd_\alpha = \Dd |\Dd|^{-\alpha}$.
 Then
$\int_0 ^t b \iota(\Dd_\alpha )
\hCh^n(u\Dd_\alpha,u\dot{\Dd_\alpha})du$  for $n-1 >p$,
is a well-defined family of entire cyclic cocycles for $\alpha\in[0,1]$
and  $t\in [1,\infty]$,
where $\dot{\Dd}_\alpha =- \Dd_\alpha \ln \lvert \Dd \rvert$.
\end{proposition}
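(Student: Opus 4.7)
The plan proceeds in three steps: (i) bound the projective norm of $\hCh^n(u\Dd_\alpha, u\dot\Dd_\alpha)$ via Lemma~\ref{getzler}, (ii) verify that $b\iota(\Dd_\alpha)$ applied to this cochain remains entire and is $u$-integrable on $[0,t]$ (including $t=\infty$), and (iii) establish the cocycle property by combining Lemmas~\ref{level2aux} and \ref{derivative} with $b^2 = 0$ and $Bb = -bB$.

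For (i) and (ii), write $\Dd_\alpha = F|\Dd|^{1-\alpha}$ and $u\dot\Dd_\alpha = -uF|\Dd|^{1-\alpha}\ln|\Dd|$, and split the spectral projections of $|\Dd|$ at $1$. On $\{|\Dd|\geq 1\}$ the elementary inequality $\ln x \leq C_\varepsilon x^\varepsilon$ lets one put $u\dot\Dd_\alpha$ in the form $F_j'|u\Dd_\alpha|^{1+\varepsilon'} + R_j$ required by Lemma~\ref{getzler} for small $\varepsilon' \in [0,1)$, while on $\{|\Dd|<1\}$ the operator $\ln|\Dd|$ is itself bounded and absorbed into $R_j$. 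The resulting trace factor $\tau(e^{-(1-\delta)u^2|\Dd|^{2(1-\alpha)}})$ is $O(u^{-p/(1-\alpha)})$ as $u\searrow 0$ by the argument of Lemma~\ref{ptheta} and exponentially small as $u\to\infty$ by invertibility of $\Dd$. Expanding $b\iota(\Dd_\alpha)\hCh^n$ along the lines of the proof of Theorem~\ref{123}(2) produces rearranged insertions (controlled by the same bound) together with commutators $[\Dd_\alpha, a] \in \Ll^{p/(1-\alpha)}$ (from Proposition~\ref{pp}) and $[\dot\Dd_\alpha, a]$, which after using $\dot\Dd_\alpha = -\Dd_\alpha\ln|\Dd|$ and $F^2=1$ reduces to $[F\ln|\Dd|, a]$ (bounded, by Proposition~\ref{someshitisbounded}) times powers of $|\Dd|^{1-\alpha}$ absorbable into the heat kernel by the same spectral split. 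Hence the integrand inherits a bound of the same shape and is $u$-integrable near $0$ once $n+1 > p/(1-\alpha)$ and near $\infty$ by exponential decay.

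For (iii), $b^2 = 0$ kills the $b$-part of $(b+B)$ applied to the integrand. Using $Bb = -bB$, the $B$-part becomes $-b$ of $B\iota(\Dd_\alpha)\hCh^n(u\Dd_\alpha, u\dot\Dd_\alpha)$, which by Lemmas~\ref{level2aux} and \ref{derivative} reorganises as a total $u$-derivative plus further $b$-coboundaries; the resulting boundary contributions at $u = 0$ vanish in norm by the integrability estimate, yielding the cocycle property. The principal obstacle is uniformity up to the endpoint $\alpha = 1$, where $\Dd_\alpha$ collapses to $F$ and the heat kernel degenerates to the scalar $e^{-u^2}$, losing all smoothing; there the choice $\varepsilon' = \varepsilon/(1-\alpha)$ also blows up, so Lemma~\ref{getzler} cannot be applied in the form used above. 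One must instead bound products via generalized H\"older using only the $p$-summability of $[F, a]$ from Corollary~\ref{pbeta} and the boundedness of $[F\ln|\Dd|, a]$, and check that this endpoint regime glues continuously with the $\alpha < 1$ case into a single entire $\alpha$-family of cocycles on $[0, 1]$.
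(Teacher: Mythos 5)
Your three-step plan has the right spirit, but steps (i)--(ii) as written do not actually establish the claim, because a direct application of Lemma~\ref{getzler} to $\hCh^n(u\Dd_\alpha,u\dot\Dd_\alpha)$ degenerates on a whole interval of $\alpha$, not just at $\alpha=1$. Your decomposition $u\dot\Dd_\alpha = F'|u\Dd_\alpha|^{1+\varepsilon'}+R$ forces $\varepsilon' = \varepsilon/(1-\alpha)$, and your integrability criterion near $u=0$ reads $n+1>p/(1-\alpha)$. Both statements fail for all $\alpha$ in $(1-p/(n+1),\,1)$, which is a nonempty interval for fixed $n$. ``Gluing continuously'' with a separate $\alpha=1$ argument cannot bridge that gap: one needs an estimate valid on all of $[0,1)$ with a $u$-power that does not depend on $\alpha$.

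The paper does not apply Lemma~\ref{getzler} off the shelf. It reruns the H\"older argument by hand with two refinements you are missing. First, it exploits that $[\Dd_\alpha,a]\in\Ll^{p/\alpha}$ --- note you wrote $\Ll^{p/(1-\alpha)}$, but Proposition~\ref{pp} gives $\Ll^{p/\alpha}$, and the direction matters: it improves (rather than worsens) as $\alpha\to 1$ --- so that H\"older weight is shifted from the heat-kernel slots onto the commutator slots, instead of treating the commutators as merely bounded. Second, the resulting $\frac{1}{1-\alpha}$-norm of the heat kernel is factored as
\[
\left\lVert e^{-(1-\delta)(u\Dd_\alpha)^2/2}\right\rVert_{\frac{1}{1-\alpha}}
\;\leq\;
\left\lVert \lvert\Dd_\alpha\rvert^{-p}\right\rVert_{\frac{1}{1-\alpha}}
\cdot
\left\lVert \lvert\Dd_\alpha\rvert^{p}\,e^{-(1-\delta)(u\Dd_\alpha)^2/2}\right\rVert ,
\]
where $\left\lVert\lvert\Dd_\alpha\rvert^{-p}\right\rVert_{1/(1-\alpha)}=\tau(\lvert\Dd\rvert^{-p})^{1-\alpha}$ stays bounded in $\alpha$ and the second factor is $O(u^{-p})$ by functional calculus. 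This turns your degenerate $O(u^{-p/(1-\alpha)})$ into a uniform $O(u^{-p})$, and after similarly peeling $\lvert\Dd_\alpha\rvert^{1+\varepsilon}$ and $\lvert\Dd_\alpha\rvert$ off the $\dot\Dd_\alpha$ and $\iota(\Dd_\alpha)$ slots with $\varepsilon$ \emph{fixed} (applied to the already-scaled $|\Dd_\alpha|$, not to $|\Dd|$, so it never becomes $\varepsilon/(1-\alpha)$), the integrand is $O(u^{n-1-p-\varepsilon})$ near $0$ and integrable exactly under the stated hypothesis $n-1>p$, for every $\alpha\in[0,1)$. Your separate $\alpha=1$ argument via Corollary~\ref{pbeta} and Proposition~\ref{someshitisbounded} matches the paper's endpoint treatment.

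Finally, your step (iii) attempts to prove a cocycle identity, but the paper's proof of this proposition establishes only well-definedness and entireness; the $(b+B)$-closedness plays no role here (it is the coboundary identity of Theorem~\ref{cchisexact} that uses this object, and that is established separately). You are doing work the statement, as used, does not require.
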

\begin{proof} 
The proof will go in two steps. First we 
estimate the norm of a generic term of $\int_0^t
\iota(\Dd_\alpha)
\hCh^{n}(u\Dd_\alpha,u\Dd_\alpha \ln \lvert \Dd \rvert)du $
for $\alpha \in [0,1)$. 
Set $\lambda=\inf \left(\sigma(\Dd^2)\right) >0$, we compute:
\begin{eqnarray*}
\lefteqn{ 
\left \lVert 
\left\langle 
b_0,[u\Dd_\alpha,b_1],\ldots,
u\Dd_\alpha \ln \lvert \Dd \rvert,  \ldots,
\Dd_\alpha ,\ldots
[u\Dd_\alpha,b_{n}]
\right\rangle 
_{u\Dd_\alpha}^{n+2}
\right \rVert
} \\
&\leq &
\int_0 ^t u^{n+1}
\int_{\Delta_{n+2}}\left\lVert b_0 e^{-s_{0}(u\Dd_\alpha)^2}
[\Dd_\alpha,b_1]e^{-s_{1}(u\Dd_\alpha)^2} \cdots 
\Dd_\alpha \ln \lvert \Dd \rvert 
e^{-s_{k}(u\Dd_\alpha)^2 } \cdots
 \right.
\\ && \hspace{1cm}\left.
\cdots
  \Dd_\alpha    e^{-s_{n+2}(u\Dd_\alpha)^2}   \cdots
[\Dd_\alpha,b_{n}]e^{-s_{n+2}(u\Dd_\alpha)^2 }
   \right\rVert_1 d^{n+2}s   \mbox{ }du \\
& \stackrel{\ref{holder} }{\leq }& 
\left\lVert b_0 \right \rVert
\left\lVert [\Dd_\alpha,b_1]
 \right\rVert_{\frac{ \lceil p\rceil}{\alpha}} \cdots
\left\lVert [\Dd_\alpha,b_{\lceil p\rceil}]
 \right\rVert_{\frac{\lceil p\rceil}{\alpha}} 
 \left\lVert [\Dd_\alpha,b_{\lceil p\rceil+1}] \right\rVert  \cdots
\left\lVert [\Dd_\alpha,b_{n}] \right\rVert 
\\&& \int_{0}^t u^{n+1}
\int_{\Delta_{n+2}} 
\left( 
\left\lVert e^{-s_{0}(u\Dd_\alpha)^2} 
\right\rVert_{\frac{1}{(1-\alpha)s_{0}}}
\cdots 
\left\lVert   \Dd_\alpha \ln \lvert \Dd \rvert
 e^{-   s_{k}(u\Dd_\alpha)^2}
  \right\rVert _{\frac{1}{(1-\alpha)s_{k}}}
 \cdots \right.
 \\
 && \hspace{2cm} \left. \cdots
\left\lVert \Dd_\alpha e^{- s_{j}(u\Dd_\alpha)^2 }
\right\rVert_{\frac{1}{(1-\alpha)s_{j}}}\cdots
\left\lVert e^{-s_{n+2}(u\Dd_\alpha)^2 }
\right\rVert_{\frac{1}{(1-\alpha)s_{n+2}}} 
 \right) d^{n+2}s \mbox{ } du \\ 
&\stackrel{\ref{pp}}{\leq}& \left \lVert b_0 \right \rVert
\left( 
\prod _{j=1}^{n}
\left\lVert [\Dd ,b_j]  \right\rVert \right) 
\left( \left \lVert \lvert \Dd  \rvert 
^{-1} \right\rVert _p \right)^{p \alpha}
 \int_{0}^t u^{n+1}
\int_{\Delta_{n+2}} \left( 
\left\lVert e^{-s_{0}(1-\delta)(u\Dd_\alpha)^2} 
\right\rVert_{\frac{1}{(1-\alpha)s_{0}}}
\cdots \right.
 \\ && \left.  \hspace{2cm}
\left\lVert e^{-(1-\delta)s_{n+2}(u\Dd_\alpha)^2 }\right\rVert_{\frac{1}{(1-\alpha)s_{n+2}}}
 \left\lVert \lvert \Dd_\alpha \rvert^{-\varepsilon}\ln \lvert \Dd \rvert
\right\rVert
\right.
 \\
 &&  \left. \hspace{3cm}
 \left\lVert \lvert \Dd_\alpha \rvert^{1+\varepsilon}
 e^{- \delta s_{k}(u\Dd_\alpha)^2}
  \right\rVert 
 \left\lVert \lvert \Dd_\alpha \rvert 
e^{- \delta s_{j}(u\Dd_\alpha)^2}
  \right\rVert 
 \right) d^{n+2}s \mbox{ } du \\
\end{eqnarray*}
\begin{eqnarray*}
&\leq& C^n
\left( \prod _{j=1}^{n}
\left\lVert b_j  \right\rVert \right) 
\left( \left \lVert \lvert \Dd \rvert 
^{-1} \right\rVert _p \right)^{p \alpha}
\left \lVert ( x^{1-\alpha})^{-\varepsilon} 
\ln x  \right \rVert _\infty
 \int_{0}^t u^{n+1}
\left\lVert e^{-(1-\delta)(u\Dd_\alpha)^2} 
\right\rVert_{\frac{1}{(1-\alpha)}}
 \\ && \hspace{2cm}
\int_{\Delta_{n+2}} \left( 
\left \lVert  x^{1+\varepsilon}  e ^{-\delta s_{k} (ux)^2
}  \right \rVert _\infty
\left \lVert   x e ^{-\delta s_{j} (ux)^2}
  \right \rVert _\infty
\right)d^{n+2}s   \mbox{ } du \\
&
\leq&
C^n
\left( \prod _{j=1}^{n}
\left\lVert b_j  \right\rVert \right) 
\frac{    \left( \left \lVert \lvert \Dd  \rvert 
^{-1} \right\rVert _p \right)^{p \alpha}   }{e\varepsilon(1-\alpha)}
 \int_{0}^t u^{n+1} \left 
\lVert e^{-(1-\delta)(u\Dd_\alpha)^2/2} \right \rVert
\left\lVert e^{-(1-\delta)(u\Dd_\alpha)^2/2} 
\right\rVert_{\frac{1}{(1-\alpha)}}
 \\ && \hspace{2cm}
\int_{\Delta_{n+2}} \left( 
\left(\frac{1+\varepsilon}
{2e\delta s_{k}u^2}\right)^{\frac{1+\varepsilon}{2}}
\left(\frac{1 }
{2e\delta s_{j}u^2}\right)^{\frac{1 }{2}}
\right)d^{n+2}s   \mbox{ }du
\\
&\leq&
C^n
\left( \prod _{j=1}^{n}
\left\lVert b_j  \right\rVert \right) 
\frac{  4  \left( \left \lVert \lvert \Dd  \rvert 
^{-1} \right\rVert _p \right)^{p \alpha} }{e\varepsilon(1-\alpha)n!}
\left( \frac{1}{2e\delta}
\right) ^{1+\frac{\varepsilon}{2}} 
\frac{(1+\varepsilon)^{\frac{1+\varepsilon}{2}} }{1-\varepsilon}
 \\
&& \hspace{2cm}
\left\lVert
\lvert \Dd_\alpha \rvert ^{-p}
\right\rVert_{\frac{1}{(1-\alpha)}}
 \int_{0}^t u^{n-1-\varepsilon}
 e^{-(1-\delta)\lambda^{1-\alpha}u^2/2} 
\left\lVert \lvert \Dd_\alpha \rvert ^{p} 
e^{-(1-\delta)(u\Dd_\alpha)^2/2} 
\right\rVert du\\
&\leq&
\frac{C^n
 \prod _{j=1}^{n}
\left\lVert b_j  \right\rVert 
    \left( \left \lVert \lvert \Dd  \rvert 
^{-1} \right\rVert _p \right)^{p \alpha} (1+\varepsilon)^{\frac{1+\varepsilon}{2}}
\left\lVert
\lvert \Dd_\alpha \rvert ^{-p}
\right\rVert_{\frac{1}{(1-\alpha)}} }
{(1-\alpha)2^{\frac{p+\varepsilon-2}{2}}
p^{\frac{-p}{2}}\delta^{1+\frac{\varepsilon}
{2}}(1-\delta)^{\frac{p}{2}}\varepsilon(1-\varepsilon)
e^{\frac{p+4+\varepsilon}{2}}n!} 
 \int_{0}^t u^{n-p-1-\varepsilon}
 e^{-(1-\delta)\lambda^{1-\alpha}u^2/2} 
 du
\mbox{ .} 
\end{eqnarray*}
The integral
\begin{equation}
\label{eqn-ch4-someintegral}
\int_{0}^t u^{n-p-1-\varepsilon}
 e^{-(1-\delta)\lambda^{1-\alpha}u^2/2} 
 du\end{equation} 
exists for $t\in[1,\infty]$ as long as $n-1-\varepsilon \geq p$.
Thus,
$b\int_0^\infty
\iota(\Dd_\alpha  )
\hCh^{n}(u\Dd_\alpha,u\Dd_\alpha \ln \lvert \Dd \rvert)du $
 is entire
for $t\in[1,\infty]$, $\alpha \in [0,1)$,  
$n-1 > p$, and $0<\delta,\varepsilon$ sufficiently small.
%
%
Now we suppose that $\alpha=1$.
Since
 \[
\int_0^t \iota( F  )
\hCh^n (uF,u F \ln \lvert \Dd \rvert ) du=
\iota(F)
 \hCh^n (F,F\ln \lvert \Dd \rvert)
\int_0^t u^{n+1} e^{-u^2}du \mbox{ ,}
\] to show 
$\int_0^t b\iota(F )
\hCh^n (uF,u F \ln \lvert \Dd \rvert) du$ is bounded, 
 it suffices to know that $b \hCh^n (F,F\ln \lvert \Dd \rvert)$
is bounded. Hence,
we estimate the norm of a generic term of
$
\hCh^n (F,F\ln \lvert \Dd \rvert) $ paired with $b(a_0,\ldots,a_{n+1})_{n+1}$.
\begin{eqnarray*}
 \lefteqn{ \left \lVert  \sum_{j=0}^{n+1} (-1)^j   
\tau\left(\chi
a_0 \cdots [F,a_j a_{j+1}] \cdots F \ln \lvert \Dd \rvert  
\cdots [F,a_{n+1}]
\right)  \right \rVert }\\
&&\leq  
\left \lVert 
\tau\left(\chi a_0 [F,a_1] \cdots  [F\ln\lvert \Dd \rvert,a_{k}]
 \cdots [F,a_{n+1}] \right) \right \rVert \\
&&\leq
 \left \lVert [F\ln \lvert \Dd \rvert ,a_k] \right \rVert
\left \lVert a_0 \right \rVert
\prod _{\stackrel{j=1}{j\neq k}}  ^{n+1}
\left \lVert [F,a_j] \right \rVert _{n} \mbox{ ,}
\end{eqnarray*}
which is bounded by Proposition~\ref{someshitisbounded}.
Therefore, by the continuity,
\[\int_0 ^t b \iota(\Dd_\alpha )
\hCh^n(u\Dd_\alpha,u \dot{\Dd}_\alpha)du\] is entire for $\alpha\in [0,1]$.
\end{proof}

The same proof will show that 
$\hCh^{\leq n-1}(t\Dd_\alpha,t\dot{\Dd_\alpha})$ is entire.
Since we do not need to integrate with respect to $u$,
 a term like Equation~\eqref{eqn-ch4-someintegral} does not appear.
There does not need to be a lower bound for $n$,
so by the same estimate techniques deployed in the 
proof of Proposition~\ref{somethingannoyingisentire}, it is easy to see that
$\hCh^{\leq n-1}(t\Dd_\alpha,t\dot{\Dd_\alpha})$ is entire.

\begin{proof}[Proof of Theorem~\ref{cchisexact}]

We compute
  \begin{eqnarray*}
  \frac{d\cch^n_t(\Dd_\alpha)}{d\alpha}
 &=&
\frac{d }{d\alpha} \Ch^{\leq n} (t \Dd_\alpha) + B\int^t_0 \frac{d }{d\alpha} \hCh^{n+1}(u\Dd_\alpha,\Dd_\alpha)du \\
&\stackrel{\ref{cobound}}{=}&
-(b+B)\hCh^{\leq n-1}(t\Dd_\alpha,t\dot{\Dd_\alpha}) - B\hCh^{n+1}(t\Dd_\alpha,t\dot{\Dd_\alpha})
\\&& \hspace{2cm} +
B\int^t_0 \frac{d }{d\alpha} \hCh^{n+1}(u\Dd_\alpha,\Dd_\alpha)du \\
&=&-(b+B)\hCh^{\leq n-1}(t\Dd_\alpha,t\dot{\Dd_\alpha})
\\&& \hspace{2cm} +B\int^t_0 \left( \frac{d }{d\alpha}
 \hCh^{n+1}(u\Dd_\alpha,\Dd_\alpha) 
-\frac{d }{du}\hCh^{n+1}(u\Dd_\alpha,u\dot{\Dd_\alpha}) \right) du\\
&\stackrel{\ref{derivative}}{=}&
-(b+B)\hCh^{\leq n-1}(t\Dd_\alpha,t\dot{\Dd_\alpha}) \\
&&+ B\int^t_0 \left( 
\hCh^{n+1}(u\Dd_\alpha,\dot{\Dd_\alpha}) +\iota(\Dd_\alpha)\alpha^{n+1}(u\Dd_\alpha,u\dot{\Dd_\alpha})
\right. 
\\&& \hspace{2cm} \left. - \iota(\Dd_\alpha)
\iota(u\Dd_\alpha u\dot{\Dd_\alpha} + u\dot{\Dd_\alpha}u\Dd_\alpha)
 \hCh^{n+1}(u\Dd_\alpha) \right. \\
 && \hspace{3cm} - \left. \hCh^{n+1}(u\Dd_\alpha,\dot{\Dd_\alpha}) - \iota(u\dot{\Dd_\alpha})\alpha^{n+1}(u\Dd_\alpha,\Dd_\alpha)
\right. 
\\&& \hspace{4cm} \left.
+\iota(u\dot{\Dd_\alpha}) 
\iota(u\Dd_\alpha\cdot \Dd_\alpha + \Dd_\alpha \cdot u\Dd_\alpha)
\hCh^{n+1}
(u\Dd_\alpha)
\right) du 
\end{eqnarray*}\begin{eqnarray*}
&\stackrel{\ref{level2aux}}{=}&
-(b+B)\hCh^{\leq n-1}(t\Dd_\alpha,t\dot{\Dd_\alpha}) + 
B\int^t_0 \left( 
b\iota(\Dd_\alpha)\hCh^{n}(u\Dd_\alpha,u\dot{\Dd_\alpha}) 
\right.
\\ && \hspace{4cm} \left. +B\iota(\Dd_\alpha)
\hCh^{n+2}(u\Dd_\alpha,u\dot{\Dd_\alpha})
\right) du \\
&=&  (b+B) \left(\int^t_0 
b\iota(\Dd_\alpha)
\hCh^{n}(u\Dd_\alpha,u\dot{\Dd_\alpha})du -\hCh^{\leq n-1}(t\Dd_\alpha,t\dot{\Dd_\alpha})
\right) \mbox{ ,}
 \end{eqnarray*}
where the last equality follows from the identities
 $b^2=B^2=0$.

Finally, 
$\int^t_0 
b\iota(\Dd_\alpha)\hCh^{n}(u\Dd_\alpha,u\dot{\Dd_\alpha})du$
is entire by Proposition~\ref{somethingannoyingisentire}.
$\hCh^{\leq n-1}(t\Dd_\alpha,t\dot{\Dd_\alpha})$ is also
entire by the same proof as Proposition~\ref{somethingannoyingisentire}.
%
\end{proof}

\begin{proposition}
  For a $p$-summable unbounded Breuer-Fredholm module $\KCYCLE$ with $\Dd$ invertible, 
set $\Dd_\alpha = \Dd |\Dd|^{-\alpha}$. Then
$\lim _{t\rightarrow \infty} \cch^n_t(\Dd_\alpha)$ is
 a  family of entire cocycles for $\alpha\in[0,1]$.
\end{proposition}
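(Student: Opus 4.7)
The plan is to first establish that $\cch^n_t(\Dd_\alpha)$ is an entire $(b+B)$-cocycle for each $\alpha \in [0,1]$ and each $t \in [1,\infty)$, and then to argue that the limit as $t \to \infty$ exists in the space of entire cochains and inherits both properties.

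For the cocycle property at finite $t$, I would first invoke Proposition~\ref{finitejlo} to handle $\alpha = 0$: the retracted cochain $\cch^n_t(\Dd)$ is cohomologous to the JLO cocycle $\Ch^\bullet(\Dd)$ and is therefore closed and entire. For $\alpha \in (0,1]$, Theorem~\ref{cchisexact} says that $\frac{d}{d\alpha}\cch^n_t(\Dd_\alpha)$ is a $(b+B)$-coboundary with an entire primitive; integrating in $\alpha$ from $0$ shows that $\cch^n_t(\Dd_\alpha)$ differs from $\cch^n_t(\Dd)$ by a $(b+B)$-coboundary, so $\cch^n_t(\Dd_\alpha)$ is itself an entire cocycle.

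For the $t \to \infty$ limit, I would adapt the estimates from the invertible case of Proposition~\ref{finitejlo} and from Proposition~\ref{somethingannoyingisentire} to the operator $\Dd_\alpha$. The invertibility of $\Dd$ gives $\inf\sigma(\Dd_\alpha^2) = \lambda^{1-\alpha} > 0$, where $\lambda := \inf\sigma(\Dd^2)$. Combining this with the identity $\Dd_\alpha^2 = |\Dd|^{2(1-\alpha)}$ and the $p$-summability of $\Dd$, one obtains norm estimates on $\Ch^m(t\Dd_\alpha)$ and $\hCh^{n+1}(u\Dd_\alpha, \Dd_\alpha)$ containing the decay factors $e^{-ct^2\lambda^{1-\alpha}}$ and $e^{-cu^2\lambda^{1-\alpha}}$ respectively, together with polynomial factors controlled by $\tau(|\Dd|^{-p})$. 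The first estimate yields $\Ch^{\leq n}(t\Dd_\alpha) \to 0$ in norm as $t \to \infty$; the second is integrable over $[0,\infty)$ provided $n > p$, so $B\int_0^t\hCh^{n+1}(u\Dd_\alpha,\Dd_\alpha)\,du$ converges in norm as $t \to \infty$ to a well-defined entire cochain.

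Combining these, the limit
\[
\lim_{t\to\infty}\cch^n_t(\Dd_\alpha) \;=\; B\int_0^\infty\hCh^{n+1}(u\Dd_\alpha,\Dd_\alpha)\,du
\]
exists, is entire (since the cochain-degree bounds are uniform in $t$ and satisfy the entire growth condition of Definition~\ref{definitionentire}), and, being a norm limit of the cocycles $\cch^n_t(\Dd_\alpha)$, is itself a $(b+B)$-cocycle. The main technical obstacle is ensuring that the estimates remain uniform down to the boundary value $\alpha = 1$, where $\Dd_1 = F$ becomes bounded and individual traces in $\Ch^m(tF)$ are only defined for $m > p$; this is resolved by the same continuity-in-$\alpha$ argument already employed at the end of the proof of Proposition~\ref{somethingannoyingisentire}, which extends the bounds from $\alpha \in [0,1)$ across to $\alpha = 1$.
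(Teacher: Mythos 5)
Your proposal is correct and matches the paper's intent: the paper simply declares the proof to be the same as that of Proposition~\ref{somethingannoyingisentire}, i.e., the identical norm estimates applied to $\hCh^{n+1}(u\Dd_\alpha,\Dd_\alpha)$, with the decay factor $e^{-c u^2 \lambda^{1-\alpha}}$ coming from invertibility and the $\alpha=1$ boundary handled by the same continuity argument. You additionally spell out the cocycle property at finite $t$ via Proposition~\ref{finitejlo} and Theorem~\ref{cchisexact} and pass it through the norm limit, a point the paper leaves implicit.
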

\begin{proof}
Same proof as Proposition~\ref{somethingannoyingisentire}.
\end{proof}

\begin{theorem}
\label{done}
 For a $p$-summable unbounded Breuer-Fredholm module $\KCYCLE$ with
 $\Dd$ invertible,
 its JLO character is cohomologous to the Connes character of its associated Breuer-Fredholm module.
\end{theorem}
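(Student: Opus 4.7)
The plan is to interpolate between $\Dd$ and its associated bounded operator $F = \Dd|\Dd|^{-1}$ via the family $\Dd_\alpha := \Dd|\Dd|^{-\alpha}$, $\alpha \in [0, 1]$, already introduced in Section~3.2. Because $\Dd$ is invertible, $\sigma(\Dd^2)$ is bounded away from $0$, so each $\Dd_\alpha$ is invertible as well, and by Corollary~\ref{pbeta} together with Proposition~\ref{pp} one obtains a continuous $\alpha$-family of $p$-summable unbounded Breuer-Fredholm modules $(\rho, \Nn, \Dd_\alpha)$; $\alpha = 0$ gives the original module, while $\alpha = 1$ gives the bounded Breuer-Fredholm module $(\rho, \Nn, F)$ whose Connes character is $\ch^n(F)$.

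The first step is to apply Proposition~\ref{finitejlo} at each $\alpha$ to realize the JLO class by the retracted cocycle $\cch^n_t(\Dd_\alpha)$, and then to pass to $t = \infty$ using the invertibility of $\Dd_\alpha$, obtaining $[\Ch^\bullet(\Dd_\alpha)] = [B\int_0^\infty \hCh^{n+1}(u\Dd_\alpha, \Dd_\alpha)\,du]$ for $n > p$. The second step is to invoke Theorem~\ref{cchisexact} along the homotopy: integrating the explicit $(b+B)$-primitive of $\frac{d}{d\alpha}\cch^n_t(\Dd_\alpha)$ from $0$ to $1$ yields an identity of the form $\cch^n_t(F) - \cch^n_t(\Dd) = (b+B)(\cdots)$, and sending $t \to \infty$ gives $[\cch^n_\infty(\Dd)] = [\cch^n_\infty(F)]$, i.e.\ $[\Ch^\bullet(\Dd)] = [B\int_0^\infty \hCh^{n+1}(uF, F)\,du]$.

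The third step is a direct identification of $B\int_0^\infty \hCh^{n+1}(uF, F)\,du$ with $\ch^n(F)$ up to a coboundary. Because $F^2 = 1$, the heat kernel $e^{-s(uF)^2}$ collapses to the scalar $e^{-su^2}$, the scaling $[uF, a] = u[F, a]$ pulls $u^{n+1}$ out of $\hCh^{n+1}(uF, F)$, and the simplex integral contributes $\frac{1}{(n+2)!}$; the elementary Gaussian $\int_0^\infty u^{n+1}e^{-u^2}\,du = \tfrac{1}{2}\Gamma(\tfrac{n}{2}+1)$ then reproduces the prefactor in the definition of the Connes character. Since $F$ (in the graded sense) anticommutes with each $[F, a_i]$, the $n+2$ insertion positions of $V = F$ inside $\hCh^{n+1}$ yield equal traces after pushing $F$ to the front of the trace, and the $n+1$ cyclic summands of $B$ likewise collapse by cyclicity of $\tau$, producing precisely $\tfrac{\Gamma(n/2+1)}{2\cdot n!}\tau(\chi F[F, a_0]\cdots[F, a_n])$. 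The odd case runs in parallel with $\chi = 1$ and $n$ odd.

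The main obstacle is the analytic part of Step~2: showing that the $(b+B)$-primitive produced by Theorem~\ref{cchisexact} has a limit as $t \to \infty$ that is $\alpha$-uniform on $[0, 1]$. The primitive involves $\hCh^{\leq n-1}(t\Dd_\alpha, t\dot{\Dd_\alpha})$ with $\dot{\Dd_\alpha} = -\Dd_\alpha\ln|\Dd|$, where the logarithm contributes a factor controlled via Proposition~\ref{someshitisbounded} at $\alpha = 1$ and by the simplex estimates of Proposition~\ref{somethingannoyingisentire} for $\alpha \in [0, 1)$; splicing these two regimes into a single uniform entire bound, together with the exponential decay in $u$ coming from $\inf \sigma(\Dd_\alpha^2) > 0$, is the technical core of the argument and also what guarantees that both $\Ch^{\leq n}(t\Dd_\alpha) \to 0$ and $\hCh^{\leq n-1}(t\Dd_\alpha, t\dot{\Dd_\alpha}) \to 0$ as $t \to \infty$ uniformly in $\alpha$.
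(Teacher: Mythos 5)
Your proposal mirrors the paper's proof exactly: retract $\Ch^\bullet(\Dd)$ to $\lim_{t\to\infty}\cch^n_t(\Dd) = B\int_0^\infty\hCh^{n+1}(u\Dd,\Dd)\,du$ via Proposition~\ref{finitejlo} and the invertibility of $\Dd$, transgress along $\Dd_\alpha = \Dd|\Dd|^{-\alpha}$ using Theorem~\ref{cchisexact} (with entireness supplied by Proposition~\ref{somethingannoyingisentire} and its sequel remark), and identify $B\int_0^\infty\hCh^{n+1}(uF,F)\,du = \ch^n(F)$ by a direct computation exploiting $F^2 = 1$, $F[F,a] = -[F,a]F$, Lemma~\ref{misc}(1)(2), and the Gaussian $\int_0^\infty u^{n+1}e^{-u^2}\,du = \tfrac12\Gamma(\tfrac n2 + 1)$. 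One small imprecision worth noting: $(\rho,\Nn,\Dd_\alpha)$ is not $p$-summable for $\alpha > 0$ (it is $p/(1-\alpha)$-summable, and not $\theta$-summable at all at $\alpha = 1$ since $F^2 = 1$), so Proposition~\ref{finitejlo} is really only invoked at $\alpha = 0$, and what carries the homotopy is the uniform-in-$\alpha$ entireness of the retracted cocycle and its transgression primitive from Proposition~\ref{somethingannoyingisentire} — which, to your credit, you correctly single out as the technical core.
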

\begin{proof}
By Theorem~\ref{cchisexact},
\[
B\int_0^\infty\hCh^{n+1}(uF,F)du -
\lim _{t\rightarrow \infty}\cch^n_t(\Dd ) 
 =(b+B) \int_0 ^1 \int^\infty _0 
b\iota(u\dot{\Dd_\alpha})\hCh^{n}(u\Dd_\alpha,\Dd_\alpha)du \mbox{ } d\alpha
\mbox{ .}
\]
Together with Proposition~\ref{finitejlo}, we conclude that
$B\int_0^\infty\hCh^{n+1}(uF,F)du$ is cohomologous to the JLO character
$\Ch^\bullet(\Dd)$, where $F=\Dd |\Dd|^{-1}$.

The map $B: C_n(\Bb)\rightarrow  C_{n+1}(\Bb)$ on chains
 can be decomposed into $B=sN$ where 
\begin{eqnarray*}
 N(a_0,\ldots,a_n)&:=&\sum^n_{j=0}(-1)^{nj}(a_j,\ldots,a_n,a_0,\ldots,a_{j-1})  \\
s(a_0,\ldots,a_n)&:=& (1,a_0,\ldots,a_n) \mbox{ .}
\end{eqnarray*}

By observing the fact that $F[F,a]=-[F,a]F$ and combining
 Lemma~\ref{misc}(1)(2),
the rest is straightforward calculation:
\begin{eqnarray*}
\lefteqn{ \left( B\int^\infty_0\hCh^{n+1}(uF,F)du , (a_0,\ldots, a_n) \right) }
 \\ &=&
 N\int^\infty_0 \langle F, [uF,a_0],\ldots,[uF,a_n] \rangle ^{n+1}_{uF} du \\
&=& \left(\int^\infty_0 u^{n+1} e^{-u^2} du \int_{\Delta_{n+1}}ds\right) N \tau 
\left( \chi F[F,a_0]\cdots[F,a_n] \right) \\
&=& \left(\frac{1}{2}\int^\infty_0 t^{n/2} e^{-t} dt \frac{1}{(n+1)!}\right) (n+1) \tau 
\left( \chi F[F,a_0]\cdots[F,a_n] \right) \\
&=& \frac{\Gamma{(\frac{n}{2}+1})}{2 \cdot n!} \tau 
\left( \chi F[F,a_0]\cdots[F,a_n] \right) \\
&=& \left( \ch^n(F), (a_0,\ldots, a_n) \right)
\mbox{ .}
\end{eqnarray*}
\end{proof}


The invertibility assumption in this section
can be removed  as follows (see \cite{cprs4}).
Given an unbounded Breuer-Fredholm module $\KCYCLE$, 
we can associate to it another unbounded Breuer-Fredholm module $(\rho',\Nn',\Dd')$ 
with $\Dd'$ invertible.
 First we form the sum $\KCYCLE \oplus (0,\Nn,-\Dd)
:=(\rho\oplus 0, \Nn\otimes M_2(\mathbb{C}) , \Dd \oplus -\Dd)$ and equip it with the grading 
$\chi \oplus -\chi$,
then perturb $\Dd \oplus -\Dd$ by the  isometry  
\[
\left(
\begin{array}{cc}
  0 & K \\ K & 0
 \end{array}\right) \in \Nn\otimes M_2(\mathbb{C})\] 
 that exchanges the two copies of $\Hh$.
Here $K$ is made to be
\textit{odd} with respect to the grading by exchanging the $\Hh^+$ and $\Hh^-$ subspaces when $\Dd$ is graded, 
\[
 K=\left(
\begin{array}{cc}
  0 & 1 \\ 1 & 0
 \end{array}\right) \mbox{ .}
\]
We set
\[ (\rho',\Nn',\Dd'):=
\left(\rho\oplus 0 , \Nn \otimes  M_2(\mathbb{C}), 
 \left(
\begin{array}{cc}
  \Dd & K \\ K & -\Dd
 \end{array}\right) \right)\mbox{ ,}
\]
then $(\rho',\Nn',\Dd')$ is an unbounded Breuer-Fredholm module,
 and $\Dd'$ has 
the same summability as $\Dd$.
The identity
\begin{eqnarray*}
\Dd'^2  = \left(
\begin{array}{cc}
  \Dd & K \\ K & -\Dd
 \end{array}\right) ^2
= \left( \begin{array}{cc}
  \Dd^2+1 & 0 \\ 0 & \Dd^2+1
 \end{array}\right) 
= \left(
\begin{array}{cc}
  (\Dd+i)(\Dd-i) & 0 \\ 0 & (\Dd+i)(\Dd-i)
 \end{array}\right)
\end{eqnarray*}
implies that $\Dd'$ is a bijection (from its domain), and is invertible.
 Furthermore, it represents the same $\KKEI$-homology class
as $\KCYCLE$. 
As the procedure of obtaining $(\rho',\Nn',\Dd')$ can be
 described by
adding zero to $\KCYCLE$  and perturbing the sum by an isometry, 
which is the equivalence relation in $\KKEI$-homology.

\appendix
\section{Appendix}
The Appendix gives an account on basic definitions needed for the discussion of the paper,
it includes affiliated operators, $\tau$-compact operators, $p$-summable operators etc.,
 then followed by some basic properties of these operators.
The ideals $\Kk$ and $\Ll^p$   are then defined in terms of  $\tau$-compactness and $p$-summability,
and finally the Appendix ends by stating H\"{o}lder's inequality, which is crucial in our work.

The presentation in this section follows closely 
 \cite{snumber}
 and \cite{type2index}
to which we refer to
proofs and further details.

A von Neumann algebra with underlying Hilbert space $\Hh$ is
a unital $*$-subalgebra of the algebra of bounded operators $B(\Hh)$
on $\Hh$ that is closed under the weak operator topology.

A positive linear functional on a von Neumann algebra is said to be
 \textbf{normal} if it preserves $\sup$ of any increasing nets of
positive operators in the von Neumann algebra;
\textbf{faithful} if it is positive-definite on positive operators;
\textbf{semi-finite} if the $*$-subalgebra generated by positive
elements with finite value under the functional is $\sigma$-weak
dense in the von Neumann algebra \cite{OAQSM}.
A von Neumann algebra is called \textbf{semi-finite} if 
it admits a faithful, semi-finite normal trace.
A von Neumann algebra is \textbf{Type I} if it is
semi-finite and every projection contains a minimal sub-projection;
\textbf{Type II} if it is semi-finite but not Type I \cite{OAQSM}.

Let $\Nn$  be a semi-finite von Neumann algebra
with underlying Hilbert space $\Hh$ and a faithful semi-finite normal trace
 $\tau$.

\begin{definition}
 A densely defined closed operator $T$ on $\Hh$  with polar decomposition $T=U|T|$ \cite{reedsimon}
is said to be \textbf{affiliated} with $\Nn$ if
$U\in \Nn$ and also 
the spectral projection $1_{[0,\lambda]}(\lvert T \rvert )$
of $|T|$ 
 lies in $\Nn$ for all $\lambda$, where 
$1_{[0,\lambda]}$ is the characteristic function supported on the
closed interval $[0,\lambda]\in \mathbb{R}$.
\end{definition}

  For a positive self-adjoint operator
 $T=\int_0^\infty \lambda dE_\lambda$ affiliated with $\Nn$
with $E_\lambda=1_{[0,\lambda]}(\lvert T\rvert )$,
 we define its semi-finite trace  by
\[
 \tau(T)=\int_0^\infty \lambda d \tau(E_\lambda) \mbox{ .}
\]

 From now on, when we say that an operator $T$ is affiliated with $\Nn$, 
we implicitly demand that $T$ is densely defined and closed.

\begin{definition}
For an operator $T$ affiliated with $\Nn$ and $x>0$, 
the \textbf{generalized singular number} $\mu_x(T)$ 
with respect to $(\Nn,\tau)$
 is defined to be
\[
 \mu_x(T):=\inf_{E} \left\{\left\lVert T E\right\rVert: \tau(1-E) 
\leq x \right\} \mbox{ ,}
\]
where the infimum  is taken over projections $E\in \Nn$.
\end{definition}

\begin{definition}
\label{summablecompactmeasurable}
Let $T$ be an operator affiliated with $\Nn$, $0<p<\infty$, and $x>0$. Then
$T$ is said to be 
\begin{itemize}
 \item 
\textbf{$p$-summable} if 
\[
 \left\lVert T\right\rVert_p := \tau(|T|^p)^{1/p} < \infty \mbox{ ,}
\]
\item
\textbf{$\tau$-compact} 
if \[\lim_{x\rightarrow\infty} \mu_x(T)=0\mbox{ ,}\]
\item
\textbf{$\tau$-measurable} if
for each $\varepsilon>0$ there exists a projection $E\in\Nn$ such that
\[ \operatorname{Ran}(E) \subset
\operatorname{Dom}(T)
\hspace{0.3cm} \mbox{  and  } \hspace{0.3cm} \tau(1-E) < \varepsilon \mbox{ .} \]
\end{itemize}
\end{definition}

\begin{remark}
 Anything in $\Nn$ is $\tau$-measurable. If a self-adjoint operator $T$ is affiliated with $\Nn$ and its resolvent is  $\tau$-compact,
 then $T$ is $\tau$-measurable \cite{type2index}.
\end{remark}

\begin{proposition}[\cite{snumber}]
\label{muproperties}
 Let $T$, $S$, $R$ be $\tau$-measurable operators.

\begin{enumerate}
\item 
The map: $x\in (0,\infty) \rightarrow \mu_x(T)$ is non-increasing and continuous from the right. Moreover,
\[
 \lim _{x\searrow 0 } \mu_x(T)=\left\lVert T\right\rVert\in[0,\infty] \mbox{ .}
\]
\item 
$\mu_x(T)=\mu_x(|T|)=\mu_x(T^*)$ and $\mu_x(zT)=|z|\mu_x(T)$ for $x>0$ and $z\in\mathbb{C}$.
\item 
$\mu_x(T)\leq \mu_x(S)$, $x>0$, if $0\leq T\leq S$.
\item 
$\mu_x(f(|T|))=f(\mu_x(|T|))$, $x>0$ for any continuous increasing function $f$ on $[0,\infty)$ with $f(0)\geq0$.
\item 
$\mu_x(STR) \leq \left\lVert S\right\rVert \left\lVert R\right\rVert  \mu_x(T) $, $x>0$.
\end{enumerate}

\end{proposition}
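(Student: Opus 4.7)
The strategy is to rephrase $\mu_x(T)$ in terms of the distribution function
\[
\lambda_s(T) := \tau\bigl(1_{(s,\infty)}(|T|)\bigr), \qquad \mu_x(T) = \inf\bigl\{s \geq 0 : \lambda_s(T) \leq x\bigr\},
\]
which follows from the stated definition by taking the spectral projection $E = 1_{[0,s]}(|T|)$ (satisfying $\|TE\| \leq s$ by functional calculus and $\tau(1-E) = \lambda_s(T)$) and, in the other direction, comparing any near-optimal projection from the original definition with a spectral projection of $|T|$. Once this equivalence is in hand, each of the five items becomes a statement about spectral projections of $|T|$ weighted by $\tau$.

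For (1), monotonicity in $x$ and right-continuity of $\mu_x$ are immediate from either formulation: the feasible set grows with $x$, and any $\varepsilon$-optimal projection at $x$ remains feasible at every $y \geq x$, so $\mu_y(T) \leq \mu_x(T) + \varepsilon$. The identity $\lim_{x \searrow 0}\mu_x(T) = \|T\|$ uses that $\|T\|$ is the essential supremum of the spectrum of $|T|$, so $\lambda_s(T) > 0$ for all $s < \|T\|$ (with $\tau$ faithful) while $\lambda_s(T) = 0$ for all $s > \|T\|$. For (2), the polar decomposition $T = U|T|$ gives $\|TE\| = \|\,|T|E\,\|$ immediately, while $|T^*| = U|T|U^*$ has $\tau$-equivalent spectral projections by traciality; scalar homogeneity is trivial. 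For (4), a continuous increasing function $f$ with $f(0) \geq 0$ satisfies $1_{(f(s),\infty)}(f(|T|)) = 1_{(s,\infty)}(|T|)$, giving $\lambda_{f(s)}(f(|T|)) = \lambda_s(|T|)$, whence $\mu_x(f(|T|)) = f(\mu_x(|T|))$ by infimizing and using monotonicity and continuity of $f$.

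The subtlest step is (3). Given $0 \leq T \leq S$, operator monotonicity of the square root (a classical Loewner theorem) gives $T^{1/2} \leq S^{1/2}$. For any projection $F$ with $\tau(1-F) \leq x$,
\[
\|T^{1/2} F\|^2 = \|F T F\| \leq \|F S F\| = \|S^{1/2} F\|^2,
\]
where the middle inequality uses $FTF \leq FSF$. Infimizing over such $F$ yields $\mu_x(T^{1/2}) \leq \mu_x(S^{1/2})$, and applying (4) with $f(t) = t^2$ (so that $\mu_x(T) = \mu_x(T^{1/2})^2$ and similarly for $S$) gives $\mu_x(T) \leq \mu_x(S)$. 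Finally, for (5), the bound $\|STRE\| \leq \|S\|\,\|TRE\|$ valid for any projection $E$ yields $\mu_x(STR) \leq \|S\|\,\mu_x(TR)$; applying (2) twice, $\mu_x(TR) = \mu_x(R^* T^*) \leq \|R\|\,\mu_x(T^*) = \|R\|\,\mu_x(T)$, where the middle step is the same left-multiplication bound. Combining closes the argument. The main obstacle, as noted, is (3): the naive estimate $\|TE\|^2 = \|E T^2 E\|$ requires $T^2 \leq S^2$, which does \emph{not} follow from $T \leq S$; the detour through $T^{1/2}$ and Loewner monotonicity is essential.
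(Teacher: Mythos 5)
The paper offers no proof of this proposition: it is quoted from Fack--Kosaki \cite{snumber}, to which the Appendix explicitly defers for proofs, so the only meaningful comparison is with the standard argument there. Your reconstruction follows that standard route: pass to the distribution function $\lambda_s(T)=\tau\bigl(1_{(s,\infty)}(|T|)\bigr)$, identify $\mu_x(T)=\inf\{s\ge 0:\lambda_s(T)\le x\}$, read off (1), (2) and (4), prove (3) via Loewner monotonicity of the square root together with $\|T^{1/2}F\|^2=\|FTF\|$ and item (4) applied to $f(t)=t^2$, and get (5) from the left-multiplication bound plus two uses of $\mu_x(T)=\mu_x(T^*)$. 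Items (2), (3) and (5) are sound as written, and your warning that $T\le S$ does not imply $T^2\le S^2$ identifies exactly the trap that makes the square-root detour necessary.

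Two steps do not yet do what they claim. First, your argument for right-continuity in (1) is vacuous: ``an $\varepsilon$-optimal projection at $x$ remains feasible at every $y\ge x$, so $\mu_y(T)\le\mu_x(T)+\varepsilon$'' only reproves monotonicity, whereas right-continuity is the reverse inequality $\lim_{y\searrow x}\mu_y(T)\ge\mu_x(T)$. It does follow from the formulation you set up: for any $s<\mu_x(T)$ one has $\lambda_s(T)>x$, hence $\mu_y(T)\ge s$ for every $y\in\bigl(x,\lambda_s(T)\bigr)$, and letting $s\nearrow\mu_x(T)$ closes the gap. Second, the identification $\mu_x(T)=\inf\{s:\lambda_s(T)\le x\}$ has a nontrivial direction that you only gesture at: one must show that $\|TE\|\le s$ forces $\tau\bigl(1_{(s,\infty)}(|T|)\bigr)\le\tau(1-E)$. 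This is the Murray--von Neumann comparison $1_{(s,\infty)}(|T|)\preceq 1-E$; if it failed, the parallelogram law $p-p\wedge E\sim p\vee E-E\le 1-E$ with $p=1_{(s,\infty)}(|T|)$ would produce a unit vector $\xi$ in $\operatorname{Ran}(p\wedge E)$ with $\|TE\xi\|=\bigl\lVert |T|\xi\bigr\rVert>s$. Finally, a small caveat in (4): the identity $1_{(f(s),\infty)}(f(|T|))=1_{(s,\infty)}(|T|)$ uses strict monotonicity of $f$; for a merely non-decreasing continuous $f$ one should instead write $f^{-1}((u,\infty))=(t_u,\infty)$ with $t_u=\sup\{t:f(t)\le u\}$ and run the same computation.
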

\begin{proposition}[\cite{snumber}]
\label{taupositive}
 Let $T$ be a positive $\tau$-measurable operator. Then
\[
 \tau(T)=\int_0^\infty \mu_x(T)dx \mbox{ .}
\]

\end{proposition}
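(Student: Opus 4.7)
The plan is to identify $\mu_x(T)$ with the right-continuous inverse of the distribution function of $T$ and then derive the integral formula by a Fubini/layer-cake argument on the spectral resolution.

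First I would set up notation. Write the spectral decomposition $T=\int_0^\infty \lambda\, dE_\lambda$ with $E_\lambda:=1_{[0,\lambda]}(T)$ affiliated with $\Nn$, and define the distribution function $d_T(\lambda):=\tau(\mathbf{1}-E_\lambda)$ for $\lambda\ge 0$. Since $T$ is $\tau$-measurable, $d_T$ is finite for $\lambda$ large enough and is a non-increasing, right-continuous function of $\lambda$. By the definition of $\tau$ on positive affiliated operators together with the standard Fubini/layer-cake identity (rewriting $\lambda=\int_0^\lambda 1\,ds$ inside $\tau(T)=\int_0^\infty \lambda\,d\tau(E_\lambda)$), one obtains
\[
\tau(T)=\int_0^\infty d_T(\lambda)\, d\lambda.
\]

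Next I would establish the key identification
\[
\mu_x(T)=\inf\{\lambda\ge 0 : d_T(\lambda)\le x\},
\]
valid for any positive $\tau$-measurable $T$. One inclusion is easy: taking $E=E_\lambda$ in the defining infimum of $\mu_x(T)$ gives $\|TE_\lambda\|\le\lambda$ by functional calculus and $\tau(\mathbf{1}-E_\lambda)=d_T(\lambda)\le x$, so $\mu_x(T)\le\lambda$ whenever $d_T(\lambda)\le x$. The reverse inequality is the subtle point: given any projection $E\in\Nn$ with $\tau(\mathbf{1}-E)\le x$, I need to show that $\|TE\|$ dominates $\inf\{\lambda:d_T(\lambda)\le x\}$. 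The argument is to compare $E$ with the spectral projection $E_\lambda$ for $\lambda<\|TE\|$: if $d_T(\lambda)>x\ge\tau(\mathbf{1}-E)$, then $\tau(E\wedge(\mathbf{1}-E_\lambda))>0$ by a standard Kaplansky-type estimate on projections in a semi-finite von Neumann algebra (using $\tau(P\vee Q)+\tau(P\wedge Q)=\tau(P)+\tau(Q)$), which forces a unit vector in $\operatorname{Ran}(E)$ on which $T$ acts by at least $\lambda$, contradicting $\|TE\|\le\lambda$. This comparison step is the main technical obstacle.

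With the identification in hand, the formula follows from the duality between a non-increasing right-continuous function and its right-continuous inverse. Explicitly,
\[
\int_0^\infty \mu_x(T)\,dx=\int_0^\infty \bigl|\{x>0 : \mu_x(T)>\lambda\}\bigr|\,d\lambda=\int_0^\infty d_T(\lambda)\,d\lambda,
\]
where the first equality is Fubini applied to the indicator $\mathbf{1}_{\{\mu_x(T)>\lambda\}}$ on $(0,\infty)^2$, and the second is the elementary fact $|\{x:\mu_x(T)>\lambda\}|=d_T(\lambda)$, which is immediate from the identification of $\mu_x(T)$ as the inverse distribution function. Combining with the Fubini expression for $\tau(T)$ established in the first step yields $\tau(T)=\int_0^\infty\mu_x(T)\,dx$, completing the proof. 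Throughout one must take care with values $+\infty$ in the non-summable case, but both sides are simultaneously infinite there, so the identity persists.
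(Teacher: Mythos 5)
Your argument is correct and is essentially the standard Fack--Kosaki proof that the paper is implicitly invoking: the paper states this proposition without proof, deferring to \cite{snumber}, and your route (identify $\mu_x(T)$ as the right-continuous inverse of $d_T(\lambda)=\tau(1-E_\lambda)$, then equate $\int_0^\infty\mu_x(T)\,dx$ and $\int_0^\infty d_T(\lambda)\,d\lambda$ by Fubini) is exactly that argument. The one step to tighten is the projection comparison: since $\tau(1-E_\lambda)$ may be infinite, the additive identity $\tau(P\vee Q)+\tau(P\wedge Q)=\tau(P)+\tau(Q)$ can degenerate to $\infty=\infty$, so use instead Kaplansky's equivalence $P-P\wedge E\sim P\vee E-E\le 1-E$ with $P=1-E_\lambda$, which gives $\tau(P-P\wedge E)\le\tau(1-E)\le x<\tau(P)$ and hence $P\wedge E\ne 0$ directly.
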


\begin{proposition}[\cite{snumber}]
\label{pnormproduct}
 Let $T$, $S$, and $R$ be operators in $\Nn$. Then for  $0<p<\infty$,
\[
\left\lVert STR \right\rVert_p \leq \left\lVert S \right\rVert \left\lVert R\right\rVert \left\lVert T \right\rVert_p
\mbox{ .}
\]
\end{proposition}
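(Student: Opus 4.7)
The plan is to derive the inequality by reducing it to a pointwise bound on generalized singular numbers and then integrating, using precisely the propositions collected immediately before the statement.

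First, I would apply Proposition~\ref{muproperties}(5) directly to obtain the pointwise bound
\[
\mu_x(STR) \leq \lVert S\rVert\, \lVert R\rVert\, \mu_x(T) \qquad \text{for all } x>0.
\]
Next, since $|STR|$ is positive and $f(t)=t^p$ is continuous and increasing on $[0,\infty)$ with $f(0)=0$, Proposition~\ref{muproperties}(4) gives $\mu_x(|STR|^p) = \mu_x(|STR|)^p$, and by Proposition~\ref{muproperties}(2) the latter equals $\mu_x(STR)^p$. The same identity applied to $T$ yields $\mu_x(|T|^p)=\mu_x(T)^p$. Raising the pointwise bound above to the $p$-th power therefore gives
\[
\mu_x(|STR|^p) \leq \lVert S\rVert^p\, \lVert R\rVert^p\, \mu_x(|T|^p).
\]

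Finally, since $|STR|^p$ is a positive $\tau$-measurable operator, Proposition~\ref{taupositive} expresses its trace as the integral of its singular number function. Integrating the previous inequality over $x\in(0,\infty)$ yields
\[
\tau(|STR|^p) = \int_0^\infty \mu_x(|STR|^p)\, dx \leq \lVert S\rVert^p\, \lVert R\rVert^p \int_0^\infty \mu_x(|T|^p)\, dx = \lVert S\rVert^p\, \lVert R\rVert^p\, \tau(|T|^p).
\]
Extracting the $p$-th root then yields the desired inequality $\lVert STR\rVert_p \leq \lVert S\rVert\,\lVert R\rVert\,\lVert T\rVert_p$.

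I do not foresee any real obstacle: the statement is essentially an immediate consequence of the singular-number calculus recalled in Propositions~\ref{muproperties} and~\ref{taupositive}. The only subtlety worth mentioning is ensuring $|STR|^p$ is $\tau$-measurable so that Proposition~\ref{taupositive} applies; this is automatic since $S, T, R\in\Nn$ forces $STR\in\Nn$ and bounded operators in $\Nn$ are $\tau$-measurable (as noted in the remark following Definition~\ref{summablecompactmeasurable}). If $\lVert T\rVert_p=\infty$ the inequality is vacuous, so no separate case analysis is required.
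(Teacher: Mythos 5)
Your proof is correct. Note, however, that the paper itself gives no proof of Proposition~\ref{pnormproduct}; it simply cites \cite{snumber}, so there is no in-paper argument to compare against. Your derivation — reducing to the pointwise bound $\mu_x(STR)\leq\lVert S\rVert\lVert R\rVert\mu_x(T)$ via Proposition~\ref{muproperties}(5), pushing it through $f(t)=t^p$ using parts (2) and (4), and then integrating with Proposition~\ref{taupositive} — is exactly the standard singular-number argument that the cited reference makes available, and every intermediate appeal (including the $\tau$-measurability of $|STR|^p$ needed for Proposition~\ref{taupositive}) is justified. No gaps.
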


Denote by $\Ll^p$ the space of all
 $p$-summable operators in $\Nn$. For $0<p<\infty$, the space
$\Ll^p$ forms a norm closed two-sided ideal in $\Nn$ with norm given by 
$\left\lVert \cdot  \right\rVert_p + \left\lVert \cdot \right\rVert $.
Denote by $\Kk$ the space of all $\tau$-compact operators in $\Nn$.
 The space $\Kk$ forms a norm closed two-sided ideal in $\Nn$.

\begin{theorem}[\cite{snumber}]
\label{holder}
 Let $T$, $S$ be $\tau$-measurable operators. Then 
\begin{enumerate}
 \item 
 $\left\lVert TS\right\rVert _r \leq \left\lVert T \right\rVert _p \left\lVert S \right\rVert _q$
 for  $p,q,r >0$ and $p^{-1}+q^{-1}=r^{-1}$.
\item 
$\left\lVert T+S\right\rVert _p \leq \left\lVert T\right\rVert _p +\left\lVert S \right\rVert _p$
 for $p\geq 1$.
\end{enumerate}
\end{theorem}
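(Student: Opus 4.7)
The strategy is to reduce both statements to classical integral inequalities on $(0,\infty)$ via the generalized singular number function. The cornerstone is the identity
$$\|T\|_p^p \;=\; \int_0^\infty \mu_x(T)^p\,dx,$$
which I would derive by applying Proposition~\ref{taupositive} to the positive $\tau$-measurable operator $|T|^p$ and combining it with $\mu_x(|T|^p) = \mu_x(T)^p$, a consequence of Proposition~\ref{muproperties}(2),(4) applied with $f(t) = t^p$. This converts the noncommutative $p$-norm into an honest $L^p(0,\infty)$-norm of the non-increasing function $x \mapsto \mu_x(T)$, so that most classical $L^p$ manipulations become available.

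For part (1), the main lemma I would establish is the submultiplicativity
$$\mu_{x+y}(TS) \;\leq\; \mu_x(T)\,\mu_y(S).$$
Given projections $E,F \in \Nn$ with $\tau(1-E) \leq x$ and $\tau(1-F) \leq y$, the meet $G := E \wedge F$ satisfies $\tau(1-G) \leq \tau(1-E) + \tau(1-F) \leq x+y$, and one estimates $\|TSG\|$ by inserting $F$ in the middle and using Proposition~\ref{muproperties}(5) together with the sub-additivity of traces of projections; taking the infimum over $E, F$ gives the claim. Setting $y = x$ yields $\mu_{2x}(TS) \leq \mu_x(T)\mu_x(S)$. Raising to the $r$-th power, integrating, changing variables $u=2x$, and applying the classical Hölder inequality to the non-increasing functions $\mu_x(T)^r$ and $\mu_x(S)^r$ with exponents $p/r$ and $q/r$ (where $r/p + r/q = 1$) yields $\|TS\|_r \leq \|T\|_p\,\|S\|_q$, with the factor of $2^{1/r}$ from the substitution cancelling against $2^{1/p}\cdot 2^{1/q}$ via $1/p + 1/q = 1/r$.

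For part (2), I would use the standard duality argument. Writing the polar decomposition $T+S = V|T+S|$, one has
$$\tau(|T+S|^p) \;=\; \tau\!\bigl(|T+S|^{p-1}V^*T\bigr) + \tau\!\bigl(|T+S|^{p-1}V^*S\bigr),$$
and part (1) with exponents $q = p/(p-1)$ and $p$ bounds each term by $\||T+S|^{p-1}\|_q\,\|T\|_p$ and $\||T+S|^{p-1}\|_q\,\|S\|_p$ respectively. Since $\||T+S|^{p-1}\|_q = \|T+S\|_p^{p-1}$, dividing through gives $\|T+S\|_p \leq \|T\|_p + \|S\|_p$. The case $p=1$ is handled separately and more easily (or as a limiting case) using $\tau(|T+S|) \leq \tau(|T|) + \tau(|S|)$ via the polar decomposition trick. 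The main obstacle is the singular-number submultiplicativity lemma above; it is standard, but needs delicate handling of the projection lattice and the semi-finite trace — everything else is a faithful translation of classical $L^p$ arguments through the singular number representation.
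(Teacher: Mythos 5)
The paper does not prove Theorem~\ref{holder}: it is stated with the citation \cite{snumber} (Fack--Kosaki) and the surrounding text explicitly says that the Appendix ``refers to [these references] for proofs and further details.'' So there is no paper proof to compare against, and your attempt has to stand on its own. On that score there is a genuine gap in part~(1).

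Your plan is to start from the singular-value identity $\lVert T\rVert_p^p=\int_0^\infty\mu_x(T)^p\,dx$ and the submultiplicativity $\mu_{x+y}(TS)\le\mu_x(T)\mu_y(S)$, set $y=x$, and then use classical H\"older on the half-line. But running that computation honestly gives
\[
\lVert TS\rVert_r^r=\int_0^\infty\mu_u(TS)^r\,du
=2\int_0^\infty\mu_{2x}(TS)^r\,dx
\le 2\int_0^\infty\mu_x(T)^r\mu_x(S)^r\,dx
\le 2\,\lVert T\rVert_p^r\,\lVert S\rVert_q^r,
\]
i.e.\ $\lVert TS\rVert_r\le 2^{1/r}\lVert T\rVert_p\lVert S\rVert_q$. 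The claimed cancellation ``against $2^{1/p}\cdot 2^{1/q}$'' never happens: no such factors appear on the right-hand side, so the $2^{1/r}$ is a real loss and you do not obtain the stated inequality. The standard way to avoid it (and the route Fack--Kosaki take) is to replace the pointwise bound $\mu_{2x}(TS)\le\mu_x(T)\mu_x(S)$ with the \emph{majorization} (Ky Fan / Horn-type) inequality
\[
\int_0^t\mu_s(TS)\,ds\ \le\ \int_0^t\mu_s(T)\mu_s(S)\,ds\qquad\text{for all }t>0,
\]
which has no factor of $2$ and from which H\"older follows by the usual rearrangement argument; alternatively one can prove the $r=1$ case by a duality/maximizing argument and then bootstrap. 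A secondary, smaller issue: your sketch of the submultiplicativity lemma via $G:=E\wedge F$ does not close. Knowing $G\le E$ and $G\le F$ does not let you ``insert $F$ in the middle'' of $TSG$ and factor $\lVert TSG\rVert\le\lVert TE\rVert\,\lVert SF\rVert$; the range of $SG$ need not lie in $E\Hh$. The correct construction takes $G$ supported in $F\Hh$ and inside the preimage under $SF$ of $E\Hh$, which makes the trace estimate $\tau(1-G)\le x+y$ and the norm factorization work simultaneously. Part~(2), the duality argument via $\tau(|T+S|^p)=\tau(|T+S|^{p-1}V^*T)+\tau(|T+S|^{p-1}V^*S)$ and H\"older with $q=p/(p-1)$, is the standard and correct route, but of course it depends on a correct proof of part~(1) with no extraneous constant.
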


For $\Nn=B(\Hh)$ with $\tau$ the operator trace,
then $p$-summability and $\tau$-compactness are the usual 
notion of $p$-summability and compactness, and the
ideals $\Ll^p$ and $\Kk$ are the usual ideal of Schatten $p$-class
and the ideal compact operators.


\bibliographystyle{plain}

\bibliography{thesis}

\end{document}